\newcommand{\psubref}[1]{(\subref{#1})}
\newcommand*\samethanks[1][\value{footnote}]{\footnotemark[#1]}
\newcommand{\ov}{\mathrm{ov}}
\newcommand{\occ}[0]{\mathrm{occ}}
\newcommand{\rev}[1]{\mathrm{rev}(#1)}
\newcommand{\str}[1]{\overline{#1}}
\newcommand{\Tpre}{$T_{pre}$}
\newcommand{\Tsuf}{$T_{suf}$}
\newcommand{\lsplits}{\mathrm{Splits}}
\newcommand{\rsplits}{\mathrm{Splits}_\mathrm{rev}}
\newcommand{\splits}{\mathrm{Splits}'}
\newcommand{\anc}{\mathrm{anc}}
\newcommand{\off}{\mathrm{off}}
\newcommand{\nextnode}{\mathrm{next}}
\newcommand{\lab}{\lambda}
\newcommand{\head}{\textsf{head}}
\newcommand{\tail}{\textsf{tail}}
\newcommand{\Pset}{\mathcal{P}}
\newtheorem{theorem}{Theorem}[section]
\newtheorem{lemma}[theorem]{Lemma}
\newtheorem{corollary}[theorem]{Corollary}
\newtheorem{definition}[theorem]{Definition}
\newtheorem{observation}[theorem]{Observation}
\newtheorem{fact}[theorem]{Fact}
\newtheorem{claim}[theorem]{Claim}
\definecolor{color1}{RGB}{95, 150, 237}
\definecolor{color2}{RGB}{242, 163, 65}
\title{Compressed Indexing for Consecutive Occurrences\footnote{This is a full version of a paper accepted to CPM 2023.}} 
\date{\empty}
\author[1]{Pawe\l{} Gawrychowski\thanks{Partially supported by the Bekker programme of the Polish National Agency for Academic Exchange (PPN/BEK/2020/1/00444) and the grant ANR-20-CE48-0001 from the French National Research Agency (ANR).}}
\author[2]{Garance Gourdel\thanks{Partially supported by the grant ANR-20-CE48-0001 from the French National Research Agency (ANR).}}
\author[3]{Tatiana Starikovskaya\samethanks}
\author[4]{Teresa Anna Steiner\thanks{Supported by a research grant (VIL51463) from VILLUM FONDEN.}}
\affil[1]{Institute of Computer Science, University of Wrocław \\ \texttt{gawry1@gmail.com}}
\affil[2]{DI/ENS, PSL Research University, IRISA Inria Rennes \\ \texttt{garance.gourdel@irisa.fr}}
\affil[3]{DI/ENS, PSL Research University \\ \texttt{tat.starikovskaya@gmail.com}}
\affil[4]{DTU Compute \\ \texttt{terst@dtu.dk}}
\begin{document}
\begin{titlepage}
\thispagestyle{empty} \maketitle

\begin{abstract}
The fundamental question considered in algorithms on strings is that of indexing, that is, preprocessing
a given string for specific queries. By now we have a number of efficient solutions for this problem when
the queries ask for an exact occurrence of a given pattern~$P$. However, practical applications motivate
the necessity of considering more complex queries, for example concerning near occurrences of two patterns. Recently, Bille et al. [CPM 2021] introduced a variant of such queries, called gapped consecutive occurrences,
in which a query consists of two patterns $P_{1}$ and $P_{2}$ and a range $[a,b]$, and one must find all consecutive occurrences $(q_1,q_2)$ of $P_{1}$
and $P_{2}$ such that $q_2-q_1 \in [a,b]$. By their results, we cannot hope for a very efficient indexing structure
for such queries, even if $a=0$ is fixed (although at the same time they provided a non-trivial upper bound). Motivated by
this, we focus on a text given as a straight-line program (SLP) and design an index taking space polynomial in the size of the grammar that answers such queries in time optimal up to polylog factors.
\end{abstract}

\end{titlepage}

\setcounter{page}{1}
\section{Introduction}
In the indexing problem, the goal is to preprocess a string for locating occurrences of a given pattern.
For a string of length $N$, structures such as the suffix tree~\cite{Weiner73} or the suffix array~\cite{ManberM93},
use space linear in $N$ and allow for answering such queries in time linear in the length of the pattern $m$. By now,
we have multiple space- and time-efficient solutions for this problem (both in theory and in practice).
We refer the reader to the excellent survey by Lewenstein~\cite{Lewenstein13} that provides an overview
of some of the approaches and some of its extensions, highlighting its connection to orthogonal range searching.

However, from the point of view of possible applications, it is desirable to allow for more general queries
than just locating an exact match of a given pattern in the preprocessed text, while keeping the time sublinear
in the length of the preprocessed string. A very general query is locating a substring matching a regular
expression.
Very recently, Gibney and Thankachan~\cite{GibneyT21} showed that if the Online Matrix-Vector multiplication conjecture holds, 
even with a polynomial preprocessing time we cannot answer regular expression query in sublinear time.
A more reasonable and yet interesting query could concern occurrences of two given patterns that are closest to each other, or just close enough.

Preprocessing a string for queries concerning two patterns has been first studied in the context of document
retrieval, where the goal is to preprocess a collection of strings. There, in \emph{the two patterns document
retrieval problem} the query consists of two patterns $P_{1}$ and $P_{2}$, and we must report
all documents containing both of them~\cite{Muthukrishnan02}. In \emph{the forbidden pattern query problem}
we must report all documents containing $P_{1}$ but not $P_{2}$~\cite{FischerGKLMSV12}.
For both problems, the asymptotically fastest linear-space solutions need as much as $\Omega(\sqrt{N})$
time to answer a query, where $N$ is the total length of all strings~\cite{HonSTV12,HonSTV10}. That is, the complexity
heavily depends on the length of the strings.
Larsen et al.~\cite{LarsenMNT15} established a connection between Boolean matrix multiplication
and the two problems, thus providing a conditional explanation for the high $\Omega(\sqrt{N})$
query complexity. Later, Kopelowitz et al.~\cite{KopelowitzPP16} provided an even stronger argument
using a connection to the 3SUM problem.
Even more relevant to this paper is the question considered by Kopelowitz and Krauthgamer~\cite{KopelowitzK16},
who asked for preprocessing a string for computing, given two patterns $P_{1}$ and $P_{2}$, their
occurrences that are closest to each other. The main result of their paper is a structure
constructible in $O(N^{1.5}\log^{\epsilon}N)$ time that answers such queries in $O(|P_{1}|+|P_{2}|+\sqrt{N}\log^{\epsilon} N)$, for a string
of length $N$, for any $\epsilon>0$. They also established a connection between Boolean matrix multiplication
and this problem, highlighting a difficulty in removing the $O(\sqrt{N})$ from both the preprocessing 
and query time at the same time.

The focus of this paper is the recently introduced variant of the indexing problem, called \emph{gapped indexing for consecutive occurrences}, in which a query consists of
two patterns $P_{1}$ and $P_{2}$ and a range $[a,b]$, and one must find the pairs of consecutive occurrences of $P_1,P_2$ separated by a distance in the range $[a,b]$. Navarro and Thankanchan~\cite{NavarroT16} showed that for $P_1=P_2$ there is a $O(n \log n)$-space index with optimal query time $O(m+\occ)$, where $m = |P_1|=|P_2|$ and $\occ$ is the number of pairs to report, but in conclusion they noticed that extending their solution to the general case of
two patterns might not be possible. Bille et al.~\cite{cpm/BilleGPS21} provided an evidence of hardness of the general case and established a (conditional) lower bound for gapped indexing for consecutive occurrences,
by connecting its complexity to
that of set intersection. This lower bound suggests that, at least for indexes of size $\tilde O(N)$,
achieving \ul{query time better than $\tilde O(|P_{1}|+|P_{2}|+\sqrt{N})$ would contradict the Set Disjointness conjecture, even if $a=0$ is fixed}. In particular,
obtaining query time depending mostly on the lengths of the patterns (perhaps with some additional logarithms),
arguably the whole point of string indexing, is unlikely in this case.

Motivated by the (conditional) lower bound for gapped indexing for consecutive occurrences, we consider
the compressed version of this problem for query intervals $[0,b]$. For exact pattern matching, there is a long line of research
devoted to designing the so-called compressed indexes, that is, indexing structures with the size being a function of
the length of the compressed representation of the text, see e.g. the entry in the Encyclopedia of Algorithms~\cite{MakinenN16}
or the Encyclopedia of Database Systems~\cite{FerraginaV18}.
This suggests the following research direction: can we design an efficient compressed gapped index for consecutive
occurrences? 

The answer of course depends on the chosen compression method. With a goal to design an index that uses very little space, we focus on the most challenging  setting when the compression is capable of describing a string of exponential length (in the size of its representation). An elegant formalism for such a compression method is that of straight-line programs (SLP), which are context-free grammars describing exactly one string. SLPs are known to capture the popular Lempel--Ziv compression method up to a logarithmic factor~\cite{CharikarLLPPRSS02,Rytter02}, and at the same time provide a more convenient interface, and in particular, allow for random access in $O(\log N)$ time~\cite{random_access_grammar_compress}. 

By now it is known that pattern matching admits efficient indexing in SLP-compressed space. Assuming a string $S$ of length $N$ described by an SLP with $g$ productions, Claude and Navarro~\cite{spire/ClaudeN12a} designed an $O(g)$-space index for $S$ that allows retrieving all occurrences of a pattern of length $m$ in time $O(m^2 \log \log N + \occ \log g)$. 
Recently, several results have improved the query time bound while still using a comparable $O(g\log N)$ amount of space: Claude, Navarro and Pacheco~\cite{CNP2021} showed an index with query time $O((m^2 + \occ)  \log g)$; Christiansen et al.~\cite{talg/ChristiansenEKN21} used strings attractors to further improve the time bound to $O(m + \occ \log^\epsilon N)$; and Díaz-Domínguez et al.~\cite{spire/DNP2021} achieved $O((m \log m +\occ)\log g)$ query time. 

However it is not always the case that a highly compressible string is easier to preprocess.
On the negative side, Abboud et al.~\cite{AbboudBBK17} showed that,
for some problems on compressed strings,
such as computing the LCS, \ul{one cannot completely avoid a high dependency on the length of the uncompressed
string} and that for other problems on compressed strings, such as context-free grammar parsing or RNA folding,
one essentially cannot hope for anything better than just decompressing the string and working with the
uncompressed representation! This is also the case for some problems related to linear algebra~\cite{AbboudBBK20}. Hence, it was not clear to us if one can avoid a high dependency on the length of the uncompressed string
in the gapped indexing for consecutive occurrences problem. 

In this work, we address the lower bound of Bille et al.~\cite{cpm/BilleGPS21} and show that, despite the negative results by Abboud et al.~\cite{AbboudBBK17}, one can circumvent it assuming that the text is very compressible:

\begin{theorem}\label{thm:close_co_occurrences}
For an SLP of size $g$ representing a string $S$ of length $N$, there is an $O(g^5\log^5 N)$-space data structure that maintains the following queries: given two patterns $P_1, P_2$ both of length $O(m)$, and a range $[0,b]$, report all $\occ$ consecutive occurrences of $P_1$ and $P_2$ separated by a distance $d \in [0,b]$. The query time is $O(m\log N + (1+\occ) \cdot \log^4 N \log \log N)$. 
\end{theorem}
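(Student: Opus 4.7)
The plan is to extend the grammar-based indexing framework (as in the Claude--Navarro and Christiansen et al.\ results cited above) to the two-pattern consecutive setting. I would start by balancing the SLP so that its derivation tree has height $O(\log N)$ using $O(g\log N)$ variables, and work with the balanced grammar throughout; the final $O(g^5\log^5 N)$ bound is consistent with storing a structure of size $O(g'^{\,c})$ in a grammar of size $g'=O(g\log N)$ for some $c\leq 5$.

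For a pattern $P$, each of its occurrences in $S$ is \emph{cut} by a unique lowest grammar production $A\to BC$ — a prefix of $P$ matches a suffix of the expansion of $B$ and the remaining suffix matches a prefix of the expansion of $C$. Using weighted ancestor queries on the reversed and forward grammar-suffix trees, in $O(m\log N)$ total time I can enumerate the $O(m)$ \emph{split types} of $P_1$ and $P_2$, each encoded as a pair of intervals in the appropriate lex-sorted lists of variables. Any consecutive pair $(q_1,q_2)$ with gap $d\in[0,b]$ lies inside a window of length $O(m+b)$, and walking up the derivation tree there is a unique lowest variable $X\to YZ$ such that $q_1$ lies in the expansion of $Y$ and $q_2$ lies in the expansion of $Z$. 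The triple $(X\to YZ,\sigma_1,\sigma_2)$, where $\sigma_i$ records how $q_i$ splits inside its side of $X$, canonically classifies the pair.

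My precomputed index stores, for every such triple, a persistent range-search structure listing the consecutive pairs of that type inside the expansion of $X$, keyed on the gap $d$. There are $O(g'^{\,3})$ triples and each auxiliary structure takes $O(g'^{\,2})$ space, summing to $O(g'^{\,5})=O(g^5\log^5 N)$. At query time, I iterate over the $O(m)$ split types of each pattern and over the $O(\log N)$ candidate ancestor variables $X$, and for each triple I issue a range query that reports the consecutive pairs with gap at most $b$ in $O(\log^3 N\log\log N)$ per report, plus $O(\log N)$ per split type to locate $X$. Adding these up gives the $O(m\log N+(1+\occ)\log^4 N\log\log N)$ query bound claimed in the theorem.

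The hard part is enforcing the \emph{consecutiveness} condition when reporting: an intermediate occurrence of $P_1$ or $P_2$ that invalidates a candidate pair $(q_1,q_2)$ may be primary at a grammar variable that is completely unrelated to $X$, hence not directly visible in the structure associated to $(X\to YZ,\sigma_1,\sigma_2)$. To handle this, for every variable and every split type I additionally store the offsets of the leftmost/rightmost occurrences of that type inside the variable's expansion; at query time I verify consecutiveness by combining these boundary offsets across the $O(\log N)$ levels of the derivation tree using predecessor/successor queries. Charging the cost of this verification to the reported occurrences — while making sure that non-reported candidate pairs are pruned at the range-search step rather than being enumerated — is the main technical obstacle and is what forces the persistent-structure machinery and the $\log\log N$ factor in the output-sensitive term.
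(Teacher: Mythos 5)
Your high-level strategy (balance the grammar, classify candidate pairs by the production that ``cuts'' them, precompute per-class structures keyed on the gap, report via range queries) does broadly track the paper, but there are two concrete gaps that break the claimed bounds.

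First, the split count. You enumerate ``$O(m)$ split types'' of each pattern, which is what you get from a generic height-$O(\log N)$ balanced SLP: a pattern of length $m$ may genuinely be cut at $\Theta(m)$ distinct offsets across the productions. If you then ``iterate over the $O(m)$ split types of each pattern and over the $O(\log N)$ candidate ancestor variables,'' you have $\Theta(m^2\log N)$ (split-of-$P_1$, split-of-$P_2$, ancestor) combinations, and paying even $O(1)$ per combination already exceeds the $O(m\log N + (1+\occ)\log^4 N\log\log N)$ target once $m\gg\log N$. The paper's fix is not merely balancing but using the locally-consistent RLSLP of Lemma~\ref{lm:locally_consistent}, whose crucial extra property is $|\splits(G',P)|=O(\log N)$; this collapses the cross-product to $O(\log^2 N)$ pairs of splits and is what makes the query bound possible. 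Your proposal uses a balanced SLP but not this property, so the time analysis does not go through.

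Second, periodicity. Your ``hard part'' paragraph worries about consecutiveness, but the deeper obstruction is that the position $q_1$ of $P_1$ in a consecutive pair $(q_1,q_2)$ need not coincide (after the usual offset) with a primary occurrence of the canonical string attached to its split: when $P_1$ is periodic, the occurrence of $P_1$ immediately preceding $q_2$ can be one of many $\pi_1$-shifted copies sliding under a single primary occurrence, and these shifts are not bounded by a constant. This is exactly Case~\ref{it:periodic} of Corollary~\ref{cor:q1_and_q2}, handled in the paper by the period $\pi_1$, the overlap quantity $\ov(S_1,S_2)$, and the separate tree $\mathcal{T}_2$. Without an argument for this regime, storing ``leftmost/rightmost occurrences of that type inside the variable's expansion'' and combining them with predecessor/successor queries does not bound the number of false candidates you must prune: in the periodic case a single variable can contain $\Theta(|\str{A}|/\pi_1)$ candidate $q_1$'s for one $q_2$, and you must report or discard them without touching each one. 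Your outline as written would either miss valid pairs or enumerate non-output candidates, so the output-sensitive $(1+\occ)\cdot\log^4 N\log\log N$ term is not justified.

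Smaller points: the paper's structures are indexed by quadruples of trie loci (size $O(g'^4)$) with $O(g')$ payload each, reaching $O(g'^5)$; your $O(g'^3)\times O(g'^2)$ accounting happens to give the same exponent but the objects are different and the $O(g'^2)$-per-triple payload is not derived. You also omit the run-length rules $A\to B^k$ needed after balancing via Lemma~\ref{lm:locally_consistent}; Claims~\ref{claim:primary_occurrence} and~\ref{claim:relevant_cons_occ} both have a separate branch for them, and your uniqueness-of-the-cutting-production argument needs that case too.
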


While achieving $O(g)$ space and $O(m+\occ)$ query time would contradict the Set Disjointness conjecture by the reduction of Bille et al.~\cite{cpm/BilleGPS21}, one might wonder if the space can be improved without increasing the query time and what is the true complexity of the problem when $a$ is not fixed (recall that $[a,b]$ is the range limiting the distance between co-occurrences to report). While we leave improvement on space and the general case as an interesting open question, we show that in the simpler case $a = 0, b = N$ (i.e. when there is no bound on the distance between the starting positions of $P_1$ and $P_2$), our techniques do allow for $O(g^2\log^4 N)$ space complexity, see Corollary~\ref{cor:all}\footnote{Note that the conditional lower bound of Bille et al.~\cite{cpm/BilleGPS21} does not hold for this simpler case.}.

Throughout the paper we assume a unit-cost RAM model of computation with word size $\Theta(\log N)$. All space
complexities refer to the number of words used by a data structure.

\section{Preliminaries}
\label{sec:prelim}
A \emph{string} $S$ of length $|S| = N$ is a sequence $S[0]S[1]\dots S[N-1]$ of characters from an alphabet~$\Sigma$. We denote the \emph{reverse} $S[N-1] S[N-2] \ldots S[0]$ of $S$ by $\rev{S}$. We define $S[i \dots j]$ to be equal to $S[i] \dots S[j]$ which we call a \emph{substring} of $S$ if $i \le j$ and to the empty string otherwise. We also use notations $S[i \dots j)$ and $S(i\dots j]$ which naturally stand for $S[i] \dots S[j-1]$ and $S[i+1] \dots S[j]$, respectively. 
We call a substring $S[0 \dots i]$ \emph{a prefix} of $S$ and use a simplified notation $S[\dots i]$, and a substring $S[i \dots N-1]$ \emph{a suffix} of $S$ denoted by $S[i \dots]$. We say that $X$ is a \emph{substring} of $S$ if $X = S[i \dots j]$ for some $0 \le i \le j \le N-1$. The index $i$ is called an \emph{occurrence} of $X$ in $S$. 

An occurrence $q_1$ of $P_1$ and an occurrence $q_2$ of $P_2$ form a  \emph{consecutive occurrence (co-occurrence)} of strings $P_1,P_2$ in a string $S$ if there are no occurrences of $P_1,P_2$ between $q_1$ and $q_2$, formally, there should be no occurrences of $P_1$ in $(q_1,q_2]$ and no occurrences of $P_2$ in $[q_1,q_2)$. For brevity, we say that a co-occurrence is \emph{$b$-close} if $q_2-q_1 \le b$.  
 
An integer $\pi$ is a \emph{period} of a string $S$ of length $N$, if $S[i]=S[i+\pi]$ for all $i=0,\dots, N-1-\pi$. The smallest period of a string $S$ is called \emph{the period} of $S$. We say that $S$ is \emph{periodic} if  the period of $S$ is at most $N/2$. We exploit the well-known corollary of the Fine and Wilf's periodicity lemma~\cite{fine1965uniqueness}:

\begin{corollary}\label{cor:arithmetic_progression}
If there are at least three occurrences of a string $Y$ in a string $X$, where $|X| \le 2|Y|$, then the occurrences of $Y$ in $X$ form an arithmetic progression with a difference equal to the period of $Y$. 
\end{corollary}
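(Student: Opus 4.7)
My plan is to invoke the Fine--Wilf periodicity lemma twice. Let $i_1 < i_2 < i_3$ be three of the given occurrences. Since $|X| \le 2|Y|$ forces every occurrence position into $[0, |X|-|Y|]$, we have $i_3 - i_1 \le |Y|$, so all pairs among the three occurrences overlap. Aligning any two overlapping copies $Y = X[i_a \ldots i_a + |Y| - 1]$ and $Y = X[i_b \ldots i_b + |Y| - 1]$ with $i_a < i_b$ shows that the shift $i_b - i_a$ is a period of $Y$. In particular, both $p := i_2 - i_1$ and $q := i_3 - i_2$ are periods of $Y$, with $p + q \le |Y|$.

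By Fine--Wilf, $d := \gcd(p, q)$ is then also a period of $Y$, and $d \le (p+q)/2 \le |Y|/2$, so the smallest period $\pi$ of $Y$ satisfies $\pi \le d \le |Y|/2$. Applying Fine--Wilf a second time, now to the periods $\pi$ and $d$ of $Y$ (whose sum is $\le 2d \le |Y|$), yields $\gcd(\pi,d) = \pi$ as a period, which forces $\pi \mid d$ and hence $\pi \mid p$, $\pi \mid q$. Running the same two-step argument on the triple $(i_{\min}, j, i_{\max})$, where $i_{\min}, i_{\max}$ are the extreme occurrences and $j$ is any other occurrence between them, shows that every occurrence lies at a multiple of $\pi$ from $i_{\min}$. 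Thus the set of occurrences is contained in the arithmetic progression $i_{\min}, i_{\min}+\pi, \ldots, i_{\max}$.

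For the reverse inclusion I would work inside the window $W := X[i_{\min} \ldots i_{\max} + |Y| - 1]$. Each occurrence of $Y$ inside $W$ witnesses that $W$ has period $\pi$ on a length-$(|Y| - \pi)$ interval. For any intermediate occurrence $i_{\mathrm{mid}}$, both gaps $i_{\mathrm{mid}} - i_{\min}$ and $i_{\max} - i_{\mathrm{mid}}$ are at most $(i_{\max} - i_{\min}) - \pi \le |Y| - \pi$, so the three (or more) periodicity intervals overlap pairwise and together cover all of $W$ outside its last $\pi$ positions. Hence $W$ itself has period $\pi$, and $W[k\pi \ldots k\pi + |Y| - 1] = W[0 \ldots |Y| - 1] = Y$ for every $k$ with $i_{\min} + k\pi \le i_{\max}$, completing the arithmetic progression. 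The one mildly delicate step is this covering argument: it is exactly where the hypothesis of \emph{three} occurrences (rather than just two) is essential, since two non-overlapping copies of $Y$ would leave a gap in the periodicity of $W$ that no amount of Fine--Wilf reasoning could bridge.
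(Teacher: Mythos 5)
Your proof is correct. The paper itself gives no argument for this corollary (it is stated as ``the well-known corollary of Fine and Wilf's periodicity lemma'' and used without proof), so there is no in-text proof to compare against; your derivation --- bounding $i_3 - i_1 \le |X| - |Y| \le |Y|$ so that the pairwise shifts are periods, applying Fine--Wilf once to get a period $d = \gcd(p,q) \le |Y|/2$ and once more to force $\pi \mid d$, and then the three-interval covering argument to extend period $\pi$ across the whole window $W$ --- is a clean and complete instantiation of the standard argument, including the correct observation that the third occurrence is exactly what makes the covering step go through.
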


\subsection{Grammars}
\begin{definition}[Straight-line program~\cite{tit/KiefferY00}]
A \emph{straight-line program} (SLP) $G$ is a context-free grammar (CFG) consisting of a set of non-terminals, a set of terminals, an initial symbol, and a set of productions, satisfying the following properties:
\begin{itemize}
\item A production consists of a left-hand side and a right-hand side, where the left-hand side is a non-terminal $A$ and the right-hand side is either a sequence $BC$, where $B,C$ are non-terminals, or a terminal;
\item Every non-terminal is on the left-hand side of exactly one production;
\item There exists a linear order $<$ on the non-terminals such that $A < B$ whenever $B$ occurs on the right-hand side of the production associated with $A$.
\end{itemize}
\end{definition}

A \emph{run-length straight-line program} (RLSLP) \cite{mfcs/NishimotoIIBT16} additionally allows productions of form $A\rightarrow B^k$ for positive integers $k$, which correspond to concatenating $k$ copies of $B$. If $A$ is associated with a production $A \rightarrow a$, where $a$ is a terminal, we denote $\head(A) = a$, $\tail(A) = \varepsilon$ (the empty string); if $A$ is associated with a production $A \rightarrow BC$, we denote $\head(A) = B$, $\tail(A) = C$; and finally if $A$ is associated with a production $A \rightarrow B^k$, then $\head(A) = B$, $\tail(A) = B^{k-1}$.

The \emph{expansion} $\str{S}$ of a sequence of terminals and non-terminals $S$ is the string that is obtained by iteratively replacing non-terminals by the right-hand sides in the respective productions, until only terminals remain. We say that $G$ \emph{represents} the expansion of its initial symbol.

\begin{definition}[Parse tree]
 The \emph{parse tree} of a SLP (RLSLP) is a rooted tree defined as follows: 
\begin{itemize}
\item The root is labeled by the initial symbol;
\item Each internal node is labeled by a non-terminal;
\item If $S$ is the expansion of the initial symbol, then the $i$th leaf of the parse tree is labeled by a terminal $S[i]$;
\item A node labeled with a non-terminal $A$ that is associated with a production $A\rightarrow BC$, where $B,C$ are non-terminals, has $2$ children labeled by $B$ and $C$, respectively. If $A$ is associated with a production $A\rightarrow a$, where $a$ is a terminal, then the node has one child labeled by $a$.
\item (RLSLP only) A node labeled with non-terminal $A$ that is associated with a  production $A\rightarrow B^k$, where $B$ is a non-terminal, has $k$ children, each labeled by $B$. 
\end{itemize}
\end{definition}

The \emph{size} of a grammar is its number of productions. The \emph{height} of a grammar is the height of the parse tree. We say that a non-terminal $A$ is an \emph{ancestor} of a non-terminal $B$ if there are nodes $u,v$ of the parse tree labeled with $A, B$ respectively, and $u$ is an ancestor of $v$. For a node $u$ of the parse tree, denote by $\off(u)$ the number of leaves to the left of the subtree rooted at $u$. 

\begin{definition}[Relevant occurrences]
Let $A$ be a non-terminal associated with a production $A\rightarrow \head(A)\tail(A)$. We say that an occurrence $q$ of a string $P$ in $\str{A}$ is \emph{relevant with a split~$s$} if $q = |\str{\head(A)}|-s \le |\str{\head(A)}| \le q+|P|-1$.
\end{definition}

For example, in Fig.~\ref{fig:occurrences} the occurrence $q = 3$ of $P=cab$ is a relevant occurrence in $\str{C}$ with a split~$s=1$ but $\str{A}$ contains no relevant occurrences of $P$.

\begin{restatable}{claim}{primaryocc}
\label{claim:primary_occurrence}
Let $q$ be an occurrence of a string $P$ in a string $S$. Consider the parse tree of an RLSLP representing $S$, and let $w$ be the lowest node containing leaves $S[q], S[q+1], \dots, S[q+|P|-1]$ in its subtree, then either
\begin{enumerate}
\item The label $A$ of $w$ is associated with a production $A \rightarrow BC$, and $q-\off(w)$ is a relevant occurrence in $\str{A}$; or
\item The label $A$ of $w$ is associated with a production $A \rightarrow B^r$ and $q-\off(w)=q'+r' |\str{B}|$ for some $0 \le r' \le r$, where $q'$ is a relevant occurrence of $P$ in $\str{A}$.
\end{enumerate}
\end{restatable}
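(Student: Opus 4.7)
The plan is to leverage the minimality of $w$ in the parse tree: since $w$ is the lowest node whose subtree contains all the leaves of the occurrence, the occurrence must straddle a child boundary inside $w$'s subtree, and this straddling will translate directly into the definition of a relevant occurrence. Assuming $|P| \ge 2$ (the case $|P| = 1$ is trivial), I first observe that a node associated with a terminal production $A \to a$ has a single-leaf subtree, so the label $A$ of $w$ must be associated with either $A \to BC$ or $A \to B^{r}$. I set $q^{\star} = q - \off(w)$, so that $q^{\star}$ is the starting position of the occurrence of $P$ inside $\str{A}$.

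In the binary case $A \to BC$, the two children of $w$ have subtrees spanning positions $[0, |\str{B}|)$ and $[|\str{B}|, |\str{A}|)$ of $\str{A}$, and the minimality of $w$ prevents the occurrence from fitting entirely inside either child's subtree. This gives $q^{\star} < |\str{B}|$ and $q^{\star} + |P| - 1 \ge |\str{B}|$, and since $|\str{B}| = |\str{\head(A)}|$ this is exactly the relevant-occurrence condition with split $s = |\str{\head(A)}| - q^{\star}$, establishing item~(1).

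In the run-length case $A \to B^{r}$, I write $q^{\star} = r' |\str{B}| + q'$ by Euclidean division with $0 \le q' < |\str{B}|$. By minimality of $w$ the occurrence does not fit inside a single child subtree, giving $q' + |P| - 1 \ge |\str{B}|$. The slightly more delicate step will be to show that $q'$ itself is an occurrence of $P$ in $\str{A}$: here I will use that $\str{A} = \str{B}^{r}$ has period $|\str{B}|$, hence $\str{A}[q' + j] = \str{A}[q^{\star} + j]$ for $j = 0, \dots, |P|-1$, with all indices remaining in $[0, |\str{A}|)$ since $q' \le q^{\star}$ and $q^{\star} + |P| - 1 < |\str{A}|$. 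Together with $q' < |\str{\head(A)}| \le q' + |P| - 1$ this proves $q'$ is a relevant occurrence, and the identity $q^{\star} = q' + r'|\str{B}|$ completes item~(2).

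The main obstacle is the periodicity step in the run-length case: I need to verify that shifting the occurrence left by $r'|\str{B}|$ preserves it inside $\str{A}$ and that the resulting remainder $q'$ simultaneously satisfies both inequalities of the relevant-occurrence definition. The rest of the argument is a routine tree-minimality analysis and does not require any additional tools beyond the parse-tree definitions already stated.
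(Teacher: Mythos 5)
Your proof is correct and follows essentially the same approach as the paper: use the minimality of $w$ to force the occurrence to straddle a child boundary, and in the run-length case shift left by $r'\,|\str{B}|$ to land on a relevant occurrence. The one thing you add beyond the paper's terse argument is an explicit verification, via the period $|\str{B}|$ of $\str{B}^{r}$, that the shifted position $q'$ is still an occurrence of $P$ — a step the paper simply asserts — which is a nice bit of extra rigor but not a different route.
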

\begin{proof}
Assume first that $A$ is associated with a production $A \rightarrow BC$. We then have that the subtree rooted at the left child of $w$ (that corresponds to $\str{B}$) does not contain $S[q+|P|-1]$ and the subtree rooted at the right child of $w$ (that corresponds to $\str{C}$) does not contain $S[q]$. As a consequence, $q-\off(w)$ is a relevant occurrence in $\str{A}$. 

Consider now the case where $A$ is associated with a production $A \rightarrow B^r$. The leaves labeled by $S[q]$ and $S[q+|P|-1]$ belong to the subtrees rooted at different children of $A$. If $S[q]$ belongs to the subtree rooted at the $(r'+1)$-th child of $A$, then $q'=q-\off(w)-|\str{B}| \cdot r'$ is a relevant occurrence of $P$ in $\str{A}$. 
\end{proof}

\begin{definition}[Splits]
Consider a non-terminal $A$ of an RLSLP $G$. If it is associated with a production $A \rightarrow BC$, define 
$$\lsplits(A,P) = \rsplits(A,P) = \{s : q \text{ is a relevant occurrence of } P \text{ in } \str{A} \text{ with a split } s\}.$$ 
If $A$ is associated with a rule $A \rightarrow B^k$, define 
\begin{align*}
\lsplits(A,P) &=  \{s : q \text{ is a relevant occurrence of } P \text{ in } \str{A} \text{ with a split } s\};\\
\rsplits(A,P) &=  \{|P|-s : q \text{ is a relevant occurrence of } \rev{P} \text{ in } \rev{\str{A}} \text{ with split } s\}.
\end{align*}
Define $\lsplits(G,P)$ $(\rsplits(G,P))$ to be the union of $\lsplits(A,P)$ $(\rsplits(A,P))$ over all non-terminals $A$ in~$G$, and $\splits(G,P) = \lsplits(G,P) \cup \rsplits(G,P)$. 
\end{definition}

We need the following lemma, which can be derived from Gawrychowski~et~al.~\cite{soda/GawrychowskiKKL18}:

\begin{restatable}{lemma}{locallyconsistent}\label{lm:locally_consistent}
Let $G$ be an SLP of size $g$ representing a string $S$ of length $N$, where $g \le N$. There exists a Las Vegas algorithm that builds a RLSLP $G'$ of size $g' = O(g \log N)$ of height $h = O(\log N)$ representing $S$ in time $O(g \log N)$ with high probability. This RLSLP has the following additional property: For a pattern $P$ of length $m$, we can in $O(m\log N)$ time provide a certificate that $P$ does not occur in $S$, or compute the set $\splits(G',P)$. In the latter case, $|\splits(G',P)| = O(\log N)$. 
\end{restatable}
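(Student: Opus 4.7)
The plan is to construct $G'$ via the locally consistent (hierarchical) parsing of Gawrychowski et al.~\cite{soda/GawrychowskiKKL18}, applied symbolically to the SLP rather than to the uncompressed text. The construction proceeds in $O(\log N)$ rounds: in each round, maximal runs of the same symbol are replaced by a single RLE non-terminal $A \to B^k$, and then adjacent distinct symbols are paired by a locally consistent rule (based on Karp--Rabin signatures with an alternating coloring as a randomized tiebreaker) and replaced by fresh non-terminals $A \to BC$. Each round shrinks the current string by a constant factor in expectation, so $O(\log N)$ rounds suffice and the height of $G'$ is $O(\log N)$. Running each round directly on the grammar, rather than on $\str{S}$, introduces $O(g)$ new non-terminals per round, giving $g' = O(g \log N)$ and construction time $O(g \log N)$ with high probability.

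To bound $|\splits(G', P)|$ by $O(\log N)$ I would invoke the central property of locally consistent parsing: any two occurrences of $P$ in $\str{S}$ are parsed identically at each level, except for a boundary of constant size on the left and right. Hence at each of the $O(\log N)$ levels of the hierarchy, only $O(1)$ distinct relative alignments between the local parsing of $P$ and that of $S$ are possible; each such alignment contributes one split, and summing over levels yields $|\lsplits(G', P)| + |\rsplits(G', P)| = O(\log N)$.

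Computing $\splits(G', P)$ (or producing a certificate that $P$ does not occur in $S$) in $O(m \log N)$ time then reduces to parsing $P$ itself with the same locally consistent procedure. This takes $O(m \log N)$ time and yields, at each level $i$, a \emph{core} sequence of signatures of length $O(m/2^i)$, padded by $O(1)$ boundary symbols. At each level and for each of the $O(1)$ candidate alignments I would look up the corresponding signature of $P$ among the non-terminals of $G'$ at that level; a hit records the corresponding split in $\splits(G', P)$, whereas a global absence of hits across all levels furnishes a signature-mismatch certificate that $P$ does not occur in $S$.

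The main technical obstacle is handling the RLE productions $A \to B^k$ cleanly. A single such rule can host many occurrences of $P$ when $P$ aligns with the periodic structure of $B^k$; these are captured by the arithmetic-progression argument of Corollary~\ref{cor:arithmetic_progression}, and the definition of $\rsplits$ is tailored precisely to expose them from the right so that each RLE rule still contributes only $O(1)$ entries to $\splits(G', P)$. Verifying that the splits as defined in this paper coincide with the boundaries naturally produced by the hierarchical parsing of~\cite{soda/GawrychowskiKKL18}, in particular that the $O(1)$ alignment choices per level survive the presence of long runs, is where one has to be most careful in the full proof.
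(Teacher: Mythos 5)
The paper itself does not prove this lemma; it states it and cites Gawrychowski et al.~\cite{soda/GawrychowskiKKL18} as the source from which it "can be derived." So there is no in-paper proof against which to check you line by line. That said, your sketch is a faithful high-level account of the intended derivation, and it correctly identifies the three load-bearing ideas: (i) $O(\log N)$ alternating rounds of run-length compression and locally consistent pairing, applied symbolically to the grammar, yielding $g' = O(g\log N)$, height $O(\log N)$, and $O(g\log N)$ construction time with high probability; (ii) local consistency of the parse, which ensures that any occurrence of $P$ is parsed identically except for a constant-size boundary at each level, so each level contributes only $O(1)$ splits; and (iii) computing $\splits(G',P)$ by re-parsing $P$ itself with the same scheme and looking up level-$i$ signatures, producing either the split set or a mismatch certificate in $O(m\log N)$ time.

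Two small cautions. First, the randomization in the recompression/locally-consistent-parsing framework of~\cite{soda/GawrychowskiKKL18} is a random pairing function (effectively random priorities or colors on symbols deciding which adjacent pairs merge), not Karp--Rabin fingerprinting; Karp--Rabin fingerprints do appear in the companion trie machinery (Lemma~\ref{lm:tries}) but not in the shrinking step, so the phrasing "Karp--Rabin signatures with an alternating coloring" conflates two different roles of hashing. Second, your per-level count of $O(1)$ alignments is the right intuition but is stated loosely: the precise statement one needs is that the parse of the "core" of $P$ (everything except $O(1)$ symbols on each side at each level) is position-independent, so that the only freedom is in how many boundary symbols are absorbed at each end, which is constant per level. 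You flag exactly the right technical risk at the end --- the interaction between RLE rules $A\to B^k$ and the definition of $\rsplits$ vs.\ $\lsplits$ --- and this is indeed where the details of the derivation from~\cite{soda/GawrychowskiKKL18} require the most care, since a single RLE node can host many occurrences of $P$ and one must show they collapse to $O(1)$ splits via the arithmetic-progression structure. As a blind reconstruction of a lemma the paper merely cites, this is sound in outline.
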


\subsection{Compact Tries}
\label{sec:compact_tries}
We assume the reader to be familiar with the definition of a compact trie (see e.g.~\cite{Gusfield1997}). Informally, a trie is a tree that represents a lexicographically ordered set of strings. The edges of a trie are labeled with strings. We define the label $\lab(u)$ of a node $u$ to be the concatenation of labels on the path from the root to $u$ and an interval $I(u)$ to be the interval of the set of strings starting with $\lab(u)$. From the implementation point of view, we assume that a node $u$ is specified by the interval $I(u)$. The \emph{locus} of a string $P$ is the minimum depth node $u$ such that $P$ is a prefix of $\lab(u)$. 

The standard tree-based implementation of a trie for a generic set of strings $\mathcal{S}= \{S_1, \ldots, S_k\}$ takes $\Theta\left(\sum_{i=1}^k |S_i|\right)$ space. Given a pattern $P$ of length $m$ and $\tau > 0$
suffixes $Q_1,\dots,Q_{\tau}$ of $P$, the trie allows retrieving the ranges of strings in (the lexicographically-sorted) $\mathcal{S}$ prefixed by
$Q_1,\dots,Q_{\tau}$ in $O(m^2)$ time. However, in this work, we build the tries for very special sets of strings only, which allows for a much more efficient implementation based on the techniques of Christiansen et al.~\cite{talg/ChristiansenEKN21}, the proof is given in Appendix~\ref{app:proofs}:

\begin{restatable}{lemma}{tries}\label{lm:tries}
Given an RLSLP $G$ of size $g$ and height $h$. Assume that every string in a set~$\mathcal{S}$ is either a prefix or a suffix of the expansion of a non-terminal of $G$ or its reverse. The trie for $\mathcal{S}$ 
can be implemented in space $O(|\mathcal{S}|)$ to maintain the following queries in $O(m + \tau \cdot (h + \log m))$ time: Given a pattern $P$ of length $m$ and suffixes $Q_i$ of $P$, $1 \le i \le \tau$, find, for each $i$, the interval of strings in the (lexicographically sorted) $\mathcal{S}$ prefixed by $Q_i$. 
\end{restatable}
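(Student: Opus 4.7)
The plan is to realize the compacted (Patricia) trie $T$ of $\mathcal{S}$ with implicit edge labels. Because every string of $\mathcal{S}$ is, by hypothesis, either a prefix or a suffix of $\str{A}$ (or of $\rev{\str{A}}$) for some non-terminal $A$, every edge label of $T$ is itself a substring of such an expansion and can be stored as a triple (pointer to the relevant non-terminal, offset, length) in $O(1)$ space. Together with $O(1)$-size auxiliary data stored at every node -- the string-depth $|\lab(u)|$, the leaf-interval $I(u)$, a Karp--Rabin fingerprint $\sig(u)$ of $\lab(u)$, and a perfect hash over the children keyed by the first character of the outgoing edge -- the whole trie fits into the required $O(|\mathcal{S}|)$ words. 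The first characters of the child edges and the fingerprints $\sig(u)$ are obtained, once, via the standard grammar-based oracle that returns the Karp--Rabin fingerprint of any substring of any $\str{A}$ (or of $\rev{\str{A}}$) in $O(h)$ time for an RLSLP of height $h$.

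At query time, given a pattern $P$ of length $m$, I first spend $O(m)$ time tabulating the Karp--Rabin fingerprints of all prefixes of $P$; from these, the fingerprint of any prefix of any suffix $Q_i$ of $P$ is available in $O(1)$. For every suffix $Q_i$ of length $\ell_i\le m$, I then descend $T$ from the root along the path of matching prefixes using the fingerprint-based navigation of Christiansen et al.: the trie is organised by heavy paths, each equipped with a weight-balanced search structure indexed by string-depth and storing the precomputed fingerprints. Along a single heavy path, the predicate ``$\lab(v)$ is a prefix of $Q_i$'' is monotone in the depth of $v$ (an ancestor of a matching node is itself matching), so a binary search using $O(1)$-time fingerprint comparisons finds the deepest matching node on that path; transitions across light edges to the next heavy path are resolved in $O(1)$ via the child hash tables (the character of $Q_i$ required to select the child being an $O(1)$ lookup in the pattern's fingerprint table). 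Since the descent only involves string-depths up to $\ell_i\le m$, the total number of comparisons is $O(\log m)$.

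Finally, the locus may lie strictly inside an edge of $T$ rather than at a node, in which case one additional verification is needed: I retrieve the child of the deepest matched node whose edge starts with the appropriate character of $Q_i$ via the child hash, and perform a single $O(h)$-time grammar-based Karp--Rabin comparison between the required prefix of $Q_i$ and the corresponding prefix of the edge label. Once the locus is identified, $I(\text{locus})$ is returned in $O(1)$.

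The main technical obstacle is ensuring that fingerprint-based equality tests do not produce false positives. I handle this in the standard Las Vegas fashion, by choosing the Karp--Rabin modulus large enough so that, by a union bound over the $\poly(N,m,\tau)$ comparisons made across all queries, no collision occurs with high probability. Summing the $O(m)$ pattern-preprocessing cost with the $O(h+\log m)$ cost per suffix across the $\tau$ suffixes $Q_1,\dots,Q_\tau$ yields the claimed $O(m+\tau(h+\log m))$ query time.
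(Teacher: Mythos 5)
Your high-level plan is the same one the paper uses: represent edge labels implicitly, navigate the compacted trie of $\mathcal{S}$ with Karp--Rabin fingerprints organised along heavy paths (the z-fast-trie technique of Belazzougui et al.\ as packaged in Christiansen et al.'s Lemma~6.5), and supply grammar-based oracles for extraction and fingerprints. The paper simply cites that lemma as a black box, parameterised by an extraction oracle $f_e$ and a fingerprint oracle $f_h$, and then plugs in $f_e(l)=O(l)$ and $f_h(l)=O(h+\log l)$; you instead re-derive the navigation internals. That re-derivation is not wrong in spirit, but two specific steps in your version have real problems.

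First, the fingerprint oracle. You invoke ``the standard grammar-based oracle that returns the Karp--Rabin fingerprint of any substring\dots in $O(h)$ time for an RLSLP.'' That bound is standard for a plain SLP, but it is \emph{not} automatic for an RLSLP: a production $A\to B^k$ has $k$ children, and computing the fingerprint of a prefix that spans $q$ of those copies requires combining the fingerprint of $\str{B}$ with itself $q$ times, which costs $\Theta(\log q)$ via doubling, not $O(1)$. This is precisely the non-trivial content of the paper's Claim on fingerprint extraction, which carefully handles the $B^k$ rules and arrives at $f_h(l)=O(h+\log l)$ by charging the extra $\log$ against the multiplicative drop in the remaining prefix length. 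Your $O(h)$ claim glosses over this; the overall bound $O(h+\log m)$ that you state still happens to be correct because $l\le m$, but the step as written is unjustified and, taken literally, false for RLSLPs.

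Second, the randomness analysis. You propose to choose the Karp--Rabin modulus once and ``union bound over the $\poly(N,m,\tau)$ comparisons made across all queries,'' calling this Las Vegas. This does not work: the set of future query patterns is unbounded and adversarially chosen after $p,r$ are fixed, so you cannot union-bound over it, and a fingerprint-only verification gives a Monte Carlo guarantee at best, not a Las Vegas or deterministic one. What Christiansen et al.\ (and hence the paper) actually do is choose $p,r$ at \emph{construction} time so that there are no collisions among substrings of the fixed set $\mathcal{S}$, and then, at query time, verify the candidate locus \emph{by extracting up to $m$ characters via $f_e$ and comparing them directly to $Q_i$}, not by a fingerprint test. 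This is exactly why the $f_e(m)=O(m)$ term appears in Lemma~6.5 and why the stated query time contains an $m$ term beyond just reading $P$. Replacing the extraction-based verification with one more fingerprint comparison, as you do, breaks the determinism that the lemma asserts.

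A smaller point: asserting ``$O(\log m)$ total comparisons'' from ``a binary search on each heavy path'' needs more care, since a root-to-locus path can cross $\Theta(\log|\mathcal{S}|)$ heavy paths; the actual z-fast search is a single weight-balanced search over string depth, not independent searches per heavy path. Since this is internal to Christiansen et al.'s Lemma~6.5, the paper avoids it entirely by citation; if you insist on unpacking the lemma, that detail also needs to be spelled out.
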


\section{Relevant, extremal, and predecessor occurrences in a non-terminal}
\label{sec:occurrences}
In this section, we present a data structure that allows various efficient queries, which we will need to prove Theorem~\ref{thm:close_co_occurrences}.
We also show how it can be leveraged for an index in the simpler case of consecutive occurrences $(a = 0, b = N)$. 
Recall that the text $S$ is a string of length $N$ represented by an SLP $G$ of size $g$. By applying Lemma~\ref{lm:locally_consistent}, we transform $G$ into an RLSLP $G'$ of size $g' = O(g \log N)$ and depth $h = O(\log N)$ representing $S$, which we fix from now on. We start by showing that $G'$ can be processed in small space to allow multiple efficient queries:

\begin{restatable}{theorem}{occurrences}\label{th:occurrences}
There is a $O(g^2\log^4 N)$-space data structure for $G'$ that given a pattern $P$ of length $m$ can preprocess it in $O(m \log N + \log^2 N)$ time to support the following queries for a given non-terminal $A$ of $G'$:
\begin{enumerate}
\item Report the sorted set of relevant occurrences of $P$ in $\str{A}$ in $O(\log N)$ time;
\item Decide whether there is an occurrence of $P$ in $\str{A}$ in $O(\log N \log \log N)$ time;
\item Report the leftmost and the rightmost occurrences of $P$ in $\str{A}$, $\str{\head(A)}$, and $\str{\tail(A)}$ in $O(\log^2 N \log \log N)$ time;
\item Given a position $p$, find the rightmost (leftmost) occurrence $q \le p$ ($q \ge p$) of $P$ in $\str{A}$ in $O(\log^3 N \log \log N)$ time (predecessor/successor). 
\end{enumerate}
\end{restatable}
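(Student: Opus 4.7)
I will view relevant occurrences of $P$ as points on a 2D grid indexed by two compact tries of non-terminal expansions. For each production $A' \to BC$ (respectively $A' \to B^k$) of $G'$ I place one grid point whose $x$-coordinate is the location of $\str{B}$ in a trie $T_{\mathrm{suf}}$ storing reverses of expansions of non-terminals, and whose $y$-coordinate is the location of $\str{C}$ (respectively $\str{B}$) in a trie $T_{\mathrm{pre}}$ storing expansions of non-terminals. By the definition of a relevant occurrence, for every split $s \in \splits(G', P)$ the set of non-terminals $A'$ admitting a relevant occurrence of $P$ at split $s$ is exactly the set of grid points lying in an axis-aligned rectangle $R_s = R_s^{\mathrm{suf}} \times R_s^{\mathrm{pre}}$, where $R_s^{\mathrm{suf}}$ is the $T_{\mathrm{suf}}$-interval of strings prefixed by $\rev{P[0 \dots s)}$ and $R_s^{\mathrm{pre}}$ is the $T_{\mathrm{pre}}$-interval of strings prefixed by $P[s \dots |P|)$.

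\textbf{Data structure.} I build $T_{\mathrm{pre}}$ and $T_{\mathrm{suf}}$ via Lemma~\ref{lm:tries}, plug in the grid points, and then, for each non-terminal $A$ of $G'$, construct a 2D rectangle emptiness/reporting structure (for instance, based on persistent predecessor trees, giving $O(\log \log N)$-time emptiness queries) over the grid points corresponding to those non-terminals $A'$ that label some node in the subtree rooted at a node labeled $A$ (including $A$ itself). Since any non-terminal of $G'$ has at most $g' = O(g\log N)$ such distinct descendants, the collection has total size $O((g')^2) = O(g^2 \log^2 N)$; with the polylogarithmic overhead of each 2D structure, the total space is $O(g^2 \log^4 N)$.

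\textbf{Preprocessing and queries.} Given $P$, Lemma~\ref{lm:locally_consistent} either certifies non-occurrence in $S$ or returns $\splits(G', P)$ of size $O(\log N)$ in $O(m \log N)$ time. A single batched call to Lemma~\ref{lm:tries} with the $O(\log N)$ suffixes of $P$ and the corresponding suffixes of $\rev{P}$ yields all ranges $R_s^{\mathrm{pre}}, R_s^{\mathrm{suf}}$ in $O(m + \log N \cdot (h + \log m)) = O(m \log N + \log^2 N)$ time. For query~(1), for each of the $O(\log N)$ splits, a constant-time lookup checks whether $A$'s own grid point lies in $R_s$; if so, the split contributes a relevant occurrence at position $|\str{\head(A)}| - s$ (a single position for $A \to BC$, and an arithmetic progression with step $|\str{B}|$ for $A \to B^k$, reported implicitly), and traversing the splits in decreasing order produces the sorted output in $O(\log N)$ time. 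Query~(2) issues one rectangle emptiness query per split against $A$'s 2D structure, giving $O(\log N \log \log N)$ in total. Queries~(3) and (4) descend the parse tree rooted at $A$: at each of $O(\log N)$ levels, query~(2) on the two children indicates on which side occurrences lie, and query~(1) at the current level supplies candidate relevant occurrences. This yields $O(\log^2 N \log \log N)$ for~(3); a predecessor search then invokes $O(\log N)$ such traversals along the root-to-$p$ path, giving $O(\log^3 N \log \log N)$ for~(4).

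\textbf{Main obstacle.} The principal technical difficulty will be handling RLE productions $A \to B^k$ cleanly. The string $\str{\tail(A)} = \str{B}^{k-1}$ is not an SLP non-terminal of its own, so neither query~(2) nor query~(3) can be invoked on it directly; moreover the prefix condition ``$P[s \dots |P|)$ is a prefix of $\str{B}^{k-1}$'', needed to decide membership in $R_s^{\mathrm{pre}}$, does not reduce to a single lookup in $T_{\mathrm{pre}}$ when $|P|-s > |\str{B}|$, since one must additionally verify that $|\str{B}|$ is a period of $P[s \dots |P|)$ and that $k-1$ is large enough. I would resolve this by augmenting each RLE node with period information and by treating the ``tail'' $\str{B}^{k-1}$ virtually --- recovering leftmost, rightmost, and predecessor operations on it from those of $B$ combined with the arithmetic progression of relevant occurrences produced by~(1) (whose $O(\log N)$ progressions at a single RLE level account for the extra $\log N$ factor in~(4)).
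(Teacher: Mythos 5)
Your overall architecture matches the paper's: two tries over (reversed) heads and tails of non-terminals, one 2D point per non-terminal, and for each non-terminal $A$ a range-emptiness structure over the points of its descendants, queried with $O(\log N)$ rectangles (one per split); queries (3) and (4) then recurse down the parse tree driven by emptiness queries, exactly as the paper does in Claim~\ref{claim:leftmost_rightmost} and Lemma~\ref{lm:predecessor}. However, there is a genuine gap in the range-searching layer. You use a \emph{four-sided} rectangle $R_s^{\mathrm{suf}}\times R_s^{\mathrm{pre}}$, and claim $O(\log\log N)$-time emptiness via ``persistent predecessor trees.'' Chan's structure (and persistent predecessor search in general) supports two-sided \emph{dominance} queries in $O(\log\log N)$ time with linear space, not arbitrary four-sided rectangles; known linear-space four-sided emptiness structures cost $O(\log^\varepsilon N)$ per query, which would push query~(2) to $O(\log^{1+\varepsilon}N)$ and cascade through~(3) and~(4). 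The paper closes exactly this gap with a heavy-path decomposition of both tries: a descendant $A'$ is duplicated once per pair $(h_{pre},h_{suf})$ of heavy paths that its two root-to-leaf paths cross, stored at coordinates $(|\lab(u)|,|\lab(v)|)$ so that each query becomes a two-sided dominance query $[|\lab(u_0)|,\infty]\times[|\lab(v_0)|,\infty]$. Since a root-to-leaf path crosses only $O(\log g')$ heavy paths, this multiplies the point count by $O(\log^2 g')$ — the source of the extra $\log^2$ in the $O(g^2\log^4 N)$ space bound — but enables linear-space, $O(\log\log N)$-time dominance emptiness. Your proposal either needs this heavy-path reduction, or must instead invoke a superlinear-space four-sided emptiness structure (e.g., $O(n\log^\varepsilon n)$ space with $O(\log\log n)$ query) and re-justify the space.

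Two smaller points. First, you treat the RLE tail $\str{B}^{k-1}$ as problematic for the trie and propose a period-based workaround; in the paper this is unnecessary because Lemma~\ref{lm:tries} only requires each stored string to be a prefix or suffix of some $\str{A}$ (or its reverse), and $\str{B}^{k-1}$ is a prefix of $\str{B^k}$, so the paper simply stores $\str{\tail(A)}=\str{B^{k-1}}$ in \Tsuf\ directly. The period analysis you sketch is indeed needed later, but only for the recursive leftmost/rightmost and predecessor queries on RLE nodes (where the paper shifts relevant occurrences by multiples of $|\str{B}|$), not for the trie itself. Second, your description of query~(1) states that a split contributes ``an arithmetic progression with step $|\str{B}|$'' of relevant occurrences for $A\to B^k$; by the paper's definition a relevant occurrence must satisfy $q=|\str{\head(A)}|-s\le|\str{\head(A)}|\le q+|P|-1$, so there is exactly one position per split — the progression describes the induced (secondary) occurrences $q+r'|\str{B}|$ handled in Claim~\ref{claim:primary_occurrence}, not the relevant ones that query~(1) is asked to return.
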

\noindent Before we proceed to the proof, let us derive a data structure to report all consecutive occurrences (co-occurrences) of a given pair of patterns.

\begin{corollary}\label{cor:all}
For an SLP of size $g$ representing a string $S$ of length $N$, there is an $O(g^2\log^4 N)$-space data structure that supports the following queries: given two patterns $P_1, P_2$ both of length $O(m)$, report all $\occ$ co-occurrences of $P_1$ and $P_2$ in $S$. The query time is $O(m\log N + (1+\occ) \cdot \log^3 N \log \log N)$. 
\end{corollary}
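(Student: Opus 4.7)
The plan is to instantiate Theorem~\ref{th:occurrences} with $A$ set to the start non-terminal of $G'$ (so $\str{A}=S$) and build the algorithm on top of its predecessor/successor primitive. I would first preprocess both $P_1$ and $P_2$ by the theorem, spending $O(m\log N+\log^2 N)$ time per pattern. The rest of the query then reduces to locating the nearest $P_1$- or $P_2$-occurrence at or after (resp.\ at or before) a given text position in $S$ in $O(\log^3 N\log\log N)$ time per call.

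The main structural observation is that any co-occurrence $(q_1,q_2)$ is uniquely pinned down by either of its coordinates: the absence of a $P_2$-occurrence in $[q_1,q_2)$ forces $q_2$ to be the leftmost $P_2$-occurrence $\ge q_1$, and symmetrically $q_1$ must be the rightmost $P_1$-occurrence $\le q_2$; consequently no two distinct co-occurrences share either coordinate. This motivates a simple left-to-right scan. Maintain a cursor $i$ initialised to $0$ and repeat the following until a query fails: set $q_1$ to the leftmost $P_1$-occurrence $\ge i$, then $q_2$ to the leftmost $P_2$-occurrence $\ge q_1$, then $q_1'$ to the rightmost $P_1$-occurrence $\le q_2$; report $(q_1',q_2)$ and advance $i:=q_1'+1$.

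I expect the only subtle step to be the correctness proof. For each reported pair, $q_1'$ is by construction the rightmost $P_1\le q_2$, and any $P_2$-occurrence in $[q_1',q_2)$ would also lie in $[q_1,q_2)$ and contradict the choice of $q_2$, so $(q_1',q_2)$ is a genuine co-occurrence. Between iterations $q_2$ strictly increases, because the new $q_1$ is strictly greater than $q_1'$ and, by the rightmost-property, no $P_1$-occurrence lies in $(q_1',q_2]$, so the new $q_1$ (and hence the new $q_2$) exceeds the previous $q_2$; combined with the uniqueness observation, every co-occurrence is reported exactly once. Each iteration makes $O(1)$ predecessor/successor calls, the loop runs $O(1+\occ)$ times, and the preprocessing contributes $O(m\log N+\log^2 N)$, so the total query time is $O(m\log N+(1+\occ)\log^3 N\log\log N)$, within the $O(g^2\log^4 N)$ space inherited from Theorem~\ref{th:occurrences}.
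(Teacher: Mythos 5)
Your proposal is correct and follows essentially the same strategy as the paper's proof: a left-to-right scan using the successor/predecessor primitive of Theorem~\ref{th:occurrences} on the start symbol, with the same successor-$P_1$, successor-$P_2$, predecessor-$P_1$ triple of queries per iteration. The only cosmetic difference is your cursor update $i := q_1'+1$ versus the paper's $i := q_2+1$, which are functionally equivalent since $q_1'$ being the rightmost $P_1$-occurrence $\le q_2$ guarantees no $P_1$-occurrence in $(q_1',q_2]$, so the next successor query returns the same position either way.
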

\begin{proof}
We exploit the data structure of Theorem~\ref{th:occurrences} for $G'$. To report all co-occurrences of $P_1,P_2$ in $S$, we preprocess $P_1,P_2$ in $O(m  \log N + \log^2 N)$ time and then proceed as follows. Suppose that we want to find the leftmost co-occurrence of $P_1$ and $P_2$ in the string $S[i\dots]$, where at the beginning $i=0$. We find the leftmost occurrence $q'_1$ of $P_1$ with $q'_1\geq i$ (if it exists) by a successor query on the initial symbol of $G'$ (the expansion of which is the entire string~$S$). Then we find the leftmost occurrence $q_2$ of $P_2$ with $q_2\geq q'_1$ (if it exists) by a successor query and the rightmost occurrence $q_1$ of $P_1$ with $q_1\leq q_2$ by a predecessor query. If either $q'_1$ or $q_2$ do not exist, then there are no more co-occurrences in $S[i\dots]$. Otherwise, clearly, $(q_1,q_2)$ is a co-occurrence, and there can be no other co-occurrences starting in $S[i\dots q_2]$. In this case, we return $(q_1,q_2)$ and set $i=q_2+1$. The running time of the retrieval phase is $O(\log^{3}N\log\log N\cdot (\occ+1))$, since we use at most three successor/predecessor queries to either output a new co-occurrence or decide that there are no more co-occurrences.
\end{proof}

\subsection{Proof of Theorem~\ref{th:occurrences}}
The data structure consists of two compact tries \Tpre\ and \Tsuf\ defined as follows. For each non-terminal $A$, we store $\rev{\str{\head(A)}}$ in \Tpre\ and $\str{\tail(A)}$ in  \Tsuf. We augment \Tpre\ and \Tsuf\ by computing their heavy path decomposition: 

\begin{definition}
The \emph{heavy path} of a trie $T$ is the path that starts at the root of $T$ and at each node $v$ on the path branches to the child with the largest number of leaves in its subtree (\emph{heavy} child), with ties broken arbitrarily. The heavy path decomposition is a set of disjoint paths defined recursively, namely it is defined to be a union of the singleton set containing the heavy path of $T$ and the heavy path decompositions of the subtrees of $T$ that hang off the heavy path. 
\end{definition}

For each non-terminal $A$ of $G'$, a heavy path $h_{pre}$ in \Tpre, and a heavy path $h_{suf}$ in \Tsuf, we construct a multiset of points $\Pset(A,h_{pre},h_{suf})$. For every non-terminal $A'$ and nodes $u \in h_{pre}$, $v \in h_{suf}$ the multiset contains a point $(|\lab(u)|, |\lab(v)|)$ iff $A', u, v$ satisfy the following properties:
\begin{enumerate}
\item $A$ is an ancestor of $A'$;
\item  \label{prop:leaves-below} $I(u)$ contains $\rev{\str{\head(A')}} $ and $I(v)$ contains $\str{\tail(A')} $. 
\item $u,v$ are the lowest nodes in $h_{pre}, h_{suf}$, respectively, satisfying Property~\ref{prop:leaves-below}.
\end{enumerate}
(See Fig.~\ref{fig:occurrences}.) The set 
$P(A,h_{pre},h_{suf})$ is stored in a two-sided 2D orthogonal range emptiness data structure~\cite{Lewenstein13,journals/talg/Chan13} which occupies  $O(|\Pset(A,h_{pre},h_{suf})|)$ space. 
 Given a 2D range of the form $[\alpha,\infty]\times[\beta,\infty]$, it allows to decide whether the range contains a point in $\Pset(A,h_{pre},h_{suf})$ in $O(\log\log N)$ time.

\begin{figure}
\centering
\captionsetup[subfigure]{justification=centering}
\begin{subfigure}[b]{0.4\textwidth}
\centering
\includegraphics[width=\textwidth]{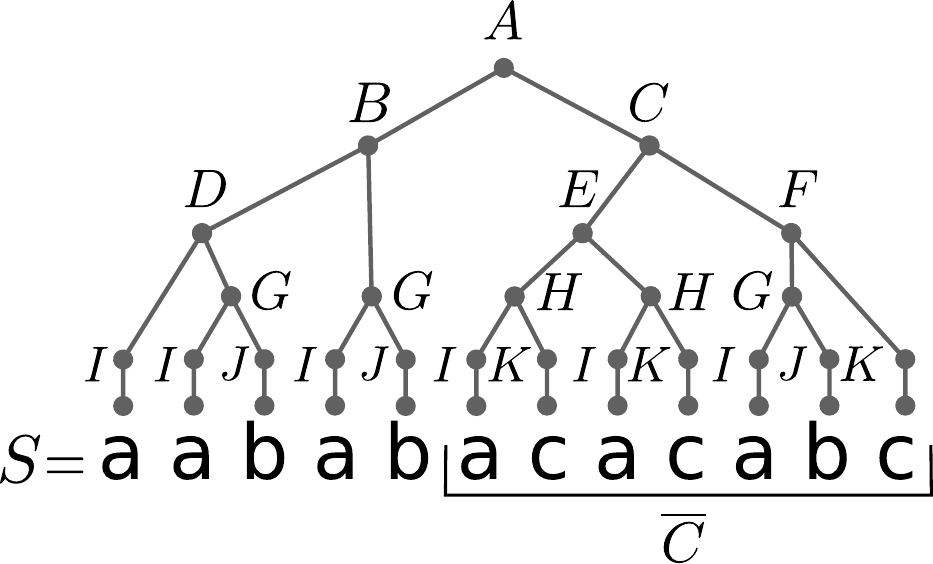}
\caption{Parse tree of $G'$.}
\end{subfigure}
\hfill
\begin{subfigure}[b]{0.58\textwidth}
\centering
\includegraphics[width=\textwidth]{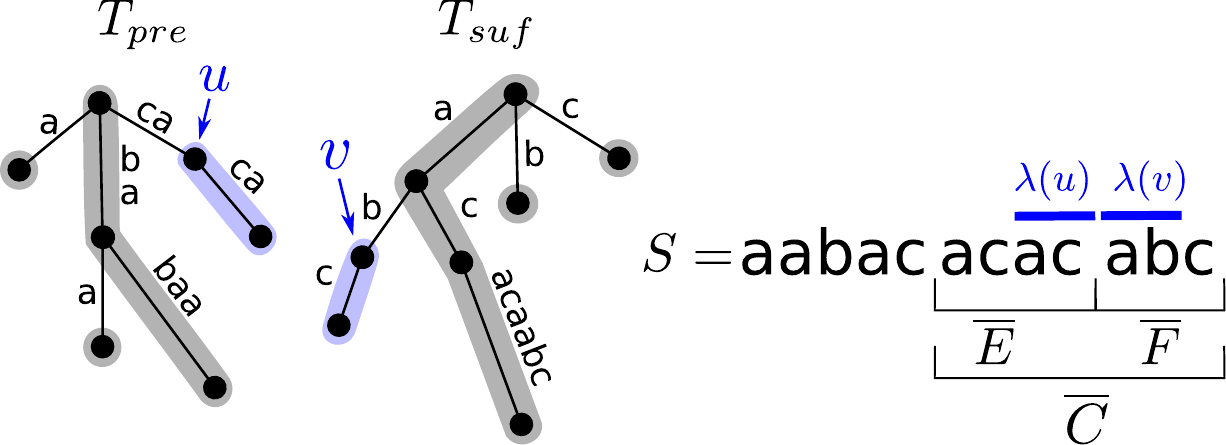}
\caption{Searching for \texttt{cab} with a split $s=1$.}
\end{subfigure}
\caption{A string $S=\mathrm{aababacacabc}$ is generated by an SLP $G'$. Nodes $u$ and $v$ are the loci of \texttt{c} and \texttt{ab} in \Tpre\  and \Tsuf\, respectively. The heavy paths $h_{pre}$ in \Tpre\ and $h_{suf}$ in \Tsuf\ are shown in blue. We have $(2,2) \in \Pset(A,h_{pre},h_{suf})$ corresponding to $C,u,v$.}
\label{fig:occurrences}
\end{figure}

\begin{claim}\label{claim:space-all}
The data structure occupies $O(g^2 \log^4 N)$ space.
\end{claim}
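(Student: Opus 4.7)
The plan is to sum the space used by the two compact tries and by the collection of multisets separately. First, I would observe that $T_{pre}$ and $T_{suf}$ are compact tries over at most $g' = O(g\log N)$ strings (one reversed head-expansion and one tail-expansion per non-terminal of $G'$), so each trie has $O(g')$ nodes. Together with the heavy-path labels, they contribute only $O(g\log N)$ words of storage, which is negligible compared with the target bound.

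The core of the argument is to bound $\sum_{A, h_{pre}, h_{suf}} |\Pset(A, h_{pre}, h_{suf})|$. My plan is to swap the order of summation and count, for each non-terminal $A'$, the number of triples $(A, h_{pre}, h_{suf})$ that receive a point contributed by $A'$. By Property~1, $A$ must be an ancestor of $A'$ in the parse tree of $G'$, and there are at most $g' = O(g\log N)$ such ancestors. By Properties~2 and~3, the node $u$ is the lowest node on $h_{pre}$ that lies on the root-to-locus path of $\rev{\str{\head(A')}}$ in $T_{pre}$; hence $h_{pre}$ must be one of the heavy paths intersecting that single root-to-leaf path. The standard heavy-path decomposition property restricts this to $O(\log g') = O(\log N)$ possibilities, and a symmetric argument restricts $h_{suf}$ to $O(\log N)$ possibilities.

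Multiplying the three bounds, each $A'$ contributes to at most $O(g' \cdot \log^2 N) = O(g\log^3 N)$ triples. Summing over the $O(g\log N)$ choices of $A'$ yields a grand total of $O(g^2 \log^4 N)$ points. Since each point uses $O(1)$ words of storage and the 2D range-emptiness structure of Chan cited in the text occupies space linear in the number of points, the total space is $O(g^2 \log^4 N)$ as claimed.

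I do not expect a real obstacle here; the main point to watch is the heavy-path accounting, which relies on the fact that for a fixed $A'$ the loci of $\rev{\str{\head(A')}}$ and $\str{\tail(A')}$ are uniquely determined, so each root-to-locus path crosses only $O(\log N)$ heavy paths. The remaining ingredients — the linear size of compact tries in the number of stored strings and the linear-space 2D range-emptiness structure — are entirely standard.
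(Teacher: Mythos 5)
Your proposal is correct and follows essentially the same counting argument as the paper: bound, for each non-terminal $A'$, the number of ancestors ($O(g')$) and the number of heavy paths crossed by the root-to-leaf paths of $\rev{\str{\head(A')}}$ in $T_{pre}$ and of $\str{\tail(A')}$ in $T_{suf}$ ($O(\log g')$ each), then sum over the $O(g')$ non-terminals. The paper's proof is terser but identical in substance; your additional remarks about the tries themselves and the linear-space 2D range-emptiness structure are correct and just make explicit what the paper leaves implicit.
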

\begin{proof}
Each non-terminal $A'$ has at most $g'$ distinct ancestors and each root-to-leaf path in \Tpre\ or \Tsuf\ crosses $O(\log g')$ heavy paths (as each time we switch heavy paths, the number of leaves in the subtree of the current node decreases by at least a factor of two). As a corollary, each non-terminal creates $O(g' \log^2 g') = O(g \log^3 N)$ points across all orthogonal range emptiness data structures. 
\end{proof}

When we receive a pattern $P$, we compute $\splits(G',P)$ via Lemma~\ref{lm:locally_consistent} in $O(m \log N)$ time or provide a certificate that $P$ does not occur in $S$, in which case there are no occurrences of $P$ in the expansions of the non-terminals of $G'$. Recall that $|\splits(G',P)| \in O(\log N)$. We then sort $\splits(G',P)$ in $O(\log^2 N)$ time (a technicality which will allow us reporting relevant occurrences sorted without time overhead). Finally, we compute, for each $s \in \splits(G', P)$, the interval of strings in \Tpre\ prefixed by $\rev{P[\dots s]}$ (which is the interval $I(u)$ for the locus $u$ of $\rev{P[\dots s]}$ in \Tpre) and the interval of strings in \Tsuf\ prefixed by $P(s \dots ]$ (which is the interval $I(u)$ for the locus $u$ of $P(s \dots]$ in \Tsuf). By Lemma~\ref{lm:tries}, with $\tau=|\splits(G',P)|=O(\log N)$ and $h=O(\log N)$, this step takes $O(m+\log^2 N)$ time.

Reporting relevant occurrences is easy: by definition, each relevant occurrence $q$ of $P$ in $\str{A}$ is equal to $|\str{\head(A)}|-s$ for some  $s \in \splits(G',P)$ such that $\rev{P[\dots s]}$ is a prefix of $\rev{\str{\head(A)}}$ and $P(s\dots]$ is a prefix of $\str{\tail(A)}$. As we already know the intervals of the strings in \Tsuf\ and \Tpre\ starting with $\rev{P[\dots s]}$ and $P(s\dots]$, respectively, both conditions can be checked in constant time per split, or in $O(|\splits(G',P)|) = O(\log N)$ time overall. Note that since $\splits(G',P)$ are sorted, the relevant occurrences are reported sorted as well. 

We now explain how to answer emptiness queries on a non-terminal:
\begin{claim}\label{claim:emptiness}
Let $A$ be a non-terminal labeling a node in the parse tree of $G'$. We can decide whether $\str{A}$ contains an occurrence of $P$ in $O(\log N\log \log N)$ time. 
\end{claim}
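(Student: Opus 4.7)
The plan is to reduce the emptiness decision on $\str{A}$ to $O(\log N)$ two-sided range emptiness queries, one per element of $\splits(G',P)$. By Claim~\ref{claim:primary_occurrence}, any occurrence of $P$ in $\str{A}$ has a lowest covering node in the parse tree whose label $A'$ is a descendant of $A$, and either the occurrence itself (in the case $A' \to BC$) or a translate of it by a multiple of $|\str{B}|$ (in the case $A' \to B^k$) is a relevant occurrence in $\str{A'}$ with some split $s \in \lsplits(A',P) \subseteq \splits(G',P)$. Deciding emptiness therefore reduces to deciding whether there exist a descendant $A'$ of $A$ and a split $s \in \splits(G',P)$ such that $\rev{P[\dots s]}$ is a prefix of $\rev{\str{\head(A')}}$ and $P(s\dots]$ is a prefix of $\str{\tail(A')}$.

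For every $s$ the preprocessing already computed the locus $u_s$ of $\rev{P[\dots s]}$ in \Tpre\ and the locus $v_s$ of $P(s\dots]$ in \Tsuf. I would augment every node of the two tries with its containing heavy path and its label length, so that $h_{pre}(s) \ni u_s$, $h_{suf}(s) \ni v_s$, $|\lab(u_s)|$ and $|\lab(v_s)|$ can all be read off in $O(1)$. The key observation is that a descendant $A'$ of $A$ satisfies the two prefix conditions for split $s$ iff the leaf of \Tpre\ carrying $\rev{\str{\head(A')}}$ lies in the subtree of $u_s$ (and symmetrically in \Tsuf); equivalently, the root-to-leaf path exits $h_{pre}(s)$ at a node $u$ with $|\lab(u)| \ge |\lab(u_s)|$, and therefore the recorded point $(|\lab(u)|,|\lab(v)|) \in \Pset(A,h_{pre}(s),h_{suf}(s))$ lies in the rectangle $[|\lab(u_s)|,\infty) \times [|\lab(v_s)|,\infty)$.

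The algorithm is then: for each of the $O(\log N)$ splits $s$, issue one two-sided range emptiness query on $\Pset(A,h_{pre}(s),h_{suf}(s))$ over the above rectangle, and answer ``yes'' iff some query is non-empty. Each query costs $O(\log \log N)$ on the precomputed range emptiness structures, giving the claimed $O(\log N \log \log N)$ bound. Correctness in the forward direction is the observation above: a non-empty query produces a descendant $A'$ witnessing the prefix conditions, hence a relevant occurrence inside $\str{A'} \subseteq \str{A}$. In the backward direction Claim~\ref{claim:primary_occurrence} produces the pair $(A',s)$ from any occurrence of $P$ in $\str{A}$, and the same observation places the corresponding point of $\Pset$ inside the query rectangle.

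The one mildly delicate step, and the one I would verify carefully, is the RLSLP case $A' \to B^k$: the occurrence may span any of the $k-1$ internal boundaries of $\str{A'} = \str{B}^k$ rather than only the first one, so at first glance the corresponding split might look like it is recorded only through $\rsplits$. However, by the periodic structure of $\str{A'}$ one can translate such an occurrence by an appropriate multiple of $|\str{B}|$ to obtain a relevant occurrence at the first boundary, matching the $\head(A'),\tail(A')$ decomposition that defines $\Pset$, so the $\lsplits$-style point stored for $A'$ really does witness the query. Iterating over all of $\splits(G',P)$ rather than just $\lsplits(G',P)$ is harmless, as any split for which no descendant satisfies the prefix conditions yields an empty query by construction.
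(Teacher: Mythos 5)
Your proof is correct and follows essentially the same route as the paper's: reduce emptiness to one two-sided range-emptiness query per split in $\splits(G',P)$, with the forward direction reading off a descendant $A'$ from a point in the rectangle and the backward direction invoking Claim~\ref{claim:primary_occurrence} to produce the pair $(A',s)$ and then checking that the stored point of $\Pset(A,h_{pre},h_{suf})$ lies in $[|\lab(u_s)|,\infty)\times[|\lab(v_s)|,\infty)$. Your extra worry about the $A'\to B^k$ case is already subsumed by Claim~\ref{claim:primary_occurrence}, which guarantees the translate crossing the first boundary is a relevant occurrence of $\str{A'}=\str{\head(A')}\str{\tail(A')}$, so no additional argument is needed there.
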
 
\begin{proof}
Below we show that $P$ occurs in $\str{A}$ iff there exists a split $s \in \splits(G',P)$ such that for $u$ being the locus of $\rev{P[\dots s]}$ in \Tpre\ and $v$ the locus of $P(s \dots]$ in \Tsuf , for $h_{pre}$ the heavy path containing $u$ in \Tpre and $h_{suf}$ the heavy path containing $v$ in \Tsuf , the rectangle $[|\lab(u)|,+\infty] \times [|\lab(v)|,+\infty]$ contains a point from $\Pset(A,h_{pre},h_{suf})$. Before we proceed to the proof, observe that by the bound on $|\splits(G',P)|$ this allows us to decide whether $P$ occurs in $\str{A}$ in $O(\log N)$ range emptiness queries, which results in $O(\log N\log \log N)$ query time. 

Assume that $[|\lab(u)|,+\infty] \times [|\lab(v)|,+\infty]$ contains a point $(x,y) \in \Pset(A,h_{pre},h_{suf})$ corresponding to a non-terminal $A'$. By construction, $A$ is an ancestor of $A'$, the subtree of $u$ contains a leaf corresponding to $\rev{\str{\head(A')}}$ and the subtree of $v$ contains a leaf corresponding to $\str{\tail(A')}$. Consequently, $\str{A'}$ contains an occurrence of $P$, which implies that $\str{A}$ contains an occurrence of $P$. 
To show the reverse direction, let $\ell = \off(u)+1$ and $r = \off(u)+|\str{A}|$, i.e. $S[\ell \dots r] = \str{A}$. The string $\str{A}$ contains an occurrence $\str{A}[q \ldots q+|P|)$ of $P$ iff $S[\ell+q \ldots \ell+q+|P|)$ is an occurrence of $P$ in $S$. From Claim~\ref{claim:primary_occurrence} it follows that if $w$ is the lowest node in the parse tree of $G'$ that contains leaves $S[\ell+q], \dots, S[\ell+q+|P|-1]$ in its subtree and $A'$ is its label, then there exists a split $s \in \splits(G',P)$ such that $\rev{P[\dots s]}$ is a prefix of $\rev{\str{\head(A')}}$ and $P(s\dots ]$ of $\str{\tail(A')}$. By definition of $u$ and $v$, the leaf of \Tpre\ labeled with $\rev{\str{\head(A')}}$ belongs to $I(u)$ and the leaf of \Tsuf\ labeled with $\str{\tail(A')}$ belongs to $I(v)$. Let $h_{pre}$ ($h_{suf}$) be the heavy path in \Tpre (\Tsuf) containing $u$ ($v$) and $(x,y)$ be the point in $\Pset(A,h_{pre},h_{suf})$ created for $A'$. As $|\lab(u)| \le x$ and $|\lab(v)| \le y$, the rectangle $[|\lab(u)|,+\infty] \times [|\lab(v)|,+\infty]$ is not empty.  
\end{proof}

It remains to explain how to retrieve the leftmost/rightmost occurrences in a non-terminal, as well as to answer predecessor/successor queries. The main idea for all four types of queries is to start at any node of the parse tree of $G'$ labeled by $A$ and recurse down via emptiness queries and case inspection. Since the length of the expansion decreases each time we recurse from a non-terminal to its child and the height of $G'$ is $h = O(\log N)$, this allows to achieve the desired query time. We provide full details in Appendix~\ref{app:occurrences}. 

\section{Compressed Indexing for Close Co-occurrences}\label{sec:close}
In this section, we show our main result, Theorem~\ref{thm:close_co_occurrences}. Recall that $S$ is a string of length $N$ represented by an SLP $G$ of size $g$. We start by applying Lemma~\ref{lm:locally_consistent} to transform $G$ into an RLSLP $G'$ of size $g' = O(g \log N)$ and height $h = O(\log N)$ representing $S$. 

The query algorithm uses the following strategy: first, it identifies all non-terminals of $G'$ such that their expansion contains a $b$-close relevant co-occurrence, where a relevant co-occurrence is defined similarly to a relevant occurrence: 

\begin{definition}[Relevant co-occurrence]
Let $A$ be a non-terminal of $G'$. We say that a co-occurrence $(q_1,q_2)$ of $P_1, P_2$ in $\str{A}$ is \emph{relevant} if $q_1 \le |\str{\head(A)}| \le q_2 + |P_2|-1$.   
\end{definition}

Second, it retrieves all $b$-close relevant co-occurrences in each of those non-terminals, and finally, reports all $b$-close co-occurrences by traversing the (pruned) parse tree of $G'$, which is possible due to the following claim:

\begin{restatable}{claim}{relevantcoocc}
\label{claim:relevant_cons_occ}
Assume that $P_2$ is not a substring of $P_1$, and let $(q_1,q_2)$ be a co-occurrence of $P_1, P_2$ in a string $S$. In the parse tree of $G'$, there exists a unique node $u$ such that either
\begin{enumerate}
\item Its label $A$ is associated with a production $A \rightarrow BC$, and $(q_1-\off(u),q_2-\off(u))$ is a relevant co-occurrence of $P_1,P_2$ in $\str{A}$;
\item Its label $A$ is associated with a production $A \rightarrow B^k$, $q_1-\off(u)=q_1'+k' |\str{B}|$, $q_2-\off(u)=q_2'+k' |\str{B}|$ for some $0 \le k' \le k$, where $(q_1',q_2')$ is a relevant co-occurrence of $P_1, P_2$ in $\str{A}$. 
\end{enumerate}
\end{restatable}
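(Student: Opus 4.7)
The plan is to take $u$ to be the lowest common ancestor (LCA) of the leaves representing positions $q_1$ and $q_2+|P_2|-1$ of $S$ in the parse tree of $G'$, mimicking the proof of Claim~\ref{claim:primary_occurrence}. To see that $u$ is a proper internal node and hence carries a non-terminal label $A$, I would use the hypothesis $P_2 \not\subseteq P_1$: since $P_2$ cannot appear inside any occurrence of $P_1$, no occurrence of $P_2$ lies in $[q_1, q_1+|P_1|-|P_2|]$; since $q_2$ is the leftmost $P_2$-occurrence at or after $q_1$ (by the co-occurrence property), this gives $q_2+|P_2|-1 \ge q_1+|P_1| > q_1$, so the two leaves are distinct. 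Thus $A$ is associated with a production of the form $A \to BC$ or $A \to B^k$.

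For existence, I split on the form of the production of $A$. When $A \to BC$, the LCA property forces $q_1$ into the subtree of $u$'s left child and $q_2+|P_2|-1$ into that of the right child, yielding $q_1-\off(u) < |\str{B}| = |\str{\head(A)}| \le q_2 - \off(u) + |P_2| - 1$, which is precisely relevance. When $A \to B^k$, the positions $q_1$ and $q_2+|P_2|-1$ fall in different copies of $B$, say the $(k_1{+}1)$-st and the $(k_2{+}1)$-st with $k_1 < k_2$; I set $k' = k_1$ and $q_i' = q_i - \off(u) - k'|\str{B}|$, from which $q_1' \in [0,|\str{B}|-1]$ and $q_2'+|P_2|-1 \ge (k_2-k_1)|\str{B}| \ge |\str{B}|$, giving relevance. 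In both cases, the pair is a true co-occurrence in $\str{A}$ because $\str{A}$ is the contiguous substring $S[\off(u) \ldots \off(u)+|\str{A}|-1]$ of $S$: any hypothetical intervening occurrence of $P_1$ or $P_2$ inside $\str{A}$ translates, either directly in the first case, or by an appropriate multiple of the period $|\str{B}|$ in the second (using $\str{A}=\str{B}^k$), to an occurrence in $S$ inside $(q_1,q_2]$ or $[q_1,q_2)$, contradicting the assumed co-occurrence.

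For uniqueness, suppose some $u' \ne u$ also satisfies the claim. If $u$ and $u'$ were incomparable in the parse tree, position $q_1$ would have to lie in both of their subtrees, which is impossible. If $u'$ were a proper descendant of $u$, then $u'$ would sit inside a single child of $u$, but by relevance at $u$ the endpoints $q_1$ and $q_2+|P_2|-1$ lie in distinct children of $u$ and so cannot both live in the subtree of $u'$; the case where $u'$ is a proper ancestor is symmetric. Hence $u$ must equal the LCA, and in the RLE subcase $k'$ is then forced by the copy of $B$ that contains $q_1$. The main obstacle will be the RLE half of existence: one must use the periodicity of $\str{A}$ together with the co-occurrence property of $(q_1,q_2)$ in $S$ to rule out intervening occurrences of $P_1$ and $P_2$ inside $\str{A}$, via a case split on whether the hypothetical intervening occurrence lies in the same residue class modulo $|\str{B}|$ as $q_1'$ (respectively $q_2'$), with a translation by a multiple of $|\str{B}|$ producing the desired contradiction in $S$.
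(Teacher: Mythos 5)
Your proof takes the same approach as the paper: choose $u$ to be the lowest node whose subtree contains leaves $S[q_1],\ldots,S[q_2+|P_2|-1]$, rule out a terminal production using $P_2 \not\subseteq P_1$, and split on whether the production has the form $A\to BC$ or $A\to B^k$. You actually go a bit further than the paper's proof, which simply asserts (without argument) that the shifted pair in the RLE case is still a co-occurrence and does not address uniqueness at all, whereas you explicitly argue uniqueness and correctly flag the periodicity/translation check in the RLE case as the delicate point requiring care.
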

\begin{proof}
Let $A$ be the label of the lowest node $u$ in the parse tree that contains leaves $S[q_1], S[q_1+1], \ldots, S[q_2+|P_2|-1]$ in its subtree. Because $P_2$ is not a substring of $P_1$, $A$ cannot be associated with a production $A \rightarrow a$. By definition, $S[\off(u)+1]$ is the leftmost leaf in the subtree of this node. 

Assume first that $A$ is associated with a production $A \rightarrow BC$. We then have that the subtree rooted at the left child of $u$ (labeled by $B$) does not contain $S[q_2+|P_2|-1]$ and the subtree rooted at the right child of $u$ (labeled by $C$) does not contain $S[q_1]$. As a consequence, $(q_1-\off(u),q_2-\off(u))$ is a relevant co-occurrence of $P_1,P_2$ in $\str{A}$. 

Consider now the case where $A$ is associated with a production $A \rightarrow B^k$. The leaves labeled by $S[q_1]$ and $S[q_2+|P_2|-1]$ belong to the subtrees rooted at different children of $A$. If $S[q_1]$ belongs to the subtree rooted at the $k'$-th child of $A$, then $(q_1-\off(u)-|\str{B}| \cdot (k'-1),q_2-\off(u)-|\str{B}| \cdot (k'-1))$ is a relevant co-occurrence of $P_1,P_2$ in $\str{A}$. 
\end{proof}

\subsection{Combinatorial observations}
Informally, we define a set of $O(g^2)$ strings and show that for any patterns $P_1,P_2$ there are two strings $S_1,S_2$ in the set with the following property: whenever the expansion of a non-terminal $A$ in $G'$ contains a pair of occurrences $P_1,P_2$ forming a relevant co-occurrence, there are occurrences of $S_1,S_2$ in the proximity. This will allow us to preprocess the non-terminals of $G'$ for occurrences of the strings in the set and use them to detect $b$-close relevant co-occurrences of $P_1,P_2$. 

Consider two tries, \Tpre\ and \Tsuf: For each production of $G'$ of the form $A\rightarrow BC$, we store $\str{C}$ in  \Tsuf\ and $\rev{\str{B}}$ in \Tpre. For each production of the form $A\rightarrow B^k$, we store $\str{B}$, $\str{B^2}$, $\str{B^{k-2}}$, and $\str{B^{k-1}}$ in \Tsuf\ and the reverses of those strings in \Tpre. For $j \in \{1,2\}$ and $s \in \splits(G', P_j)$ define $S_j(s) = \rev{U} V$, where $U$ is the label of the locus of $\rev{P_j[\dots s]}$ in \Tpre\ and $V$ is the label of the locus of $P_j(s \dots]$ in \Tsuf. Let $l_j(s) = |\rev{U}|$ and $\Delta_j(s) = l_j(s)-s$.  

Consider a non-terminal $A$ such that its expansion $\str{A}$ contains a relevant co-occurrence $(q_1,q_2)$ of $P_1,P_2$. 

\begin{claim}\label{claim:q2}
There exists $s \in \splits(G', P_2)$ such that $p_2 = q_2-\Delta_2(s)$ is an occurrence of $S_2(s)$ in $\str{A}$ and $[p_2,p_2+|S_2(s)|) \supseteq [q_2,q_2+|P_2|)$.
\end{claim}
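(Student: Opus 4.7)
The goal is to exhibit a split $s \in \splits(G',P_2)$ for which $S_2(s)$ appears at $p_2 = q_2 - \Delta_2(s)$ inside $\str{A}$ and covers the interval $[q_2, q_2 + |P_2|)$. My plan is to locate, inside the parse tree rooted at (an occurrence of) $A$, the ``responsible'' non-terminal whose split defines~$s$, and then read off $s$ from it.

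Concretely, I would apply Claim~\ref{claim:primary_occurrence} to the occurrence $q_2$ of $P_2$ in $\str{A}$: this yields the lowest parse-tree node $w$, with label $A''$, whose subtree contains the leaves $\str{A}[q_2], \dots, \str{A}[q_2 + |P_2| - 1]$. In the case $A'' \to BC$, Claim~\ref{claim:primary_occurrence} says that $q_2 - \off(w)$ is a relevant occurrence of $P_2$ in $\str{A''}$ with some split $s$; this $s$ lies in $\lsplits(A'',P_2) \subseteq \splits(G',P_2)$. Unfolding ``relevant with split $s$'', $\rev{P_2[\dots s]}$ is a prefix of $\rev{\str{\head(A'')}}$ and $P_2(s\dots]$ is a prefix of $\str{\tail(A'')}$; both strings are stored in \Tpre\ and \Tsuf\ respectively, so the loci $u,v$ are well defined and their labels $U,V$ satisfy: $\rev{U}$ is a suffix of $\str{\head(A'')}$ of length $l_2(s)$, and $V$ is a prefix of $\str{\tail(A'')}$. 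Concatenating, $S_2(s) = \rev{U} V$ appears in $\str{A''}$ at position $|\str{\head(A'')}| - l_2(s)$; shifting by $\off(w)$ gives exactly $p_2 = q_2 - \Delta_2(s)$ in $\str{A}$. The containment $[p_2, p_2 + |S_2(s)|) \supseteq [q_2, q_2 + |P_2|)$ follows from $l_2(s) \ge s$ (the locus label is at least as long as the queried string, so $p_2 \le q_2$) and $|V| \ge |P_2| - s$ (so the right endpoint reaches past $q_2 + |P_2|$), using $\Delta_2(s) = l_2(s) - s$.

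The case $A'' \to B^k$ is the technical crux. Here Claim~\ref{claim:primary_occurrence} only locates the occurrence up to a shift: $q_2 - \off(w) = (|\str{B}| - s) + k'|\str{B}|$ for some $k' \in \{0,\dots,k-1\}$. I would choose $s$ from either $\lsplits(A'',P_2)$ (the split at the \emph{first} boundary of the run) or $\rsplits(A'',P_2)$ (the split at the \emph{last} boundary) according to whether $k'$ is small or large, so that $S_2(s)$ hugs the correct end of $\str{B^k}$. The particular strings kept in \Tsuf\ and \Tpre\ for this kind of production (both the short expansions $\str{B}, \str{B^2}$ and the long expansions $\str{B^{k-2}}, \str{B^{k-1}}$, together with their reverses) are precisely what bounds $|V|$ (under the $\lsplits$ choice) or $|\rev{U}|$ (under the $\rsplits$ choice) by an amount that fits within the remaining $k-k'-1$ or $k'$ copies of $\str{B}$ at the shifted position. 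The periodicity of $\str{B^k}$ (Corollary~\ref{cor:arithmetic_progression}) then propagates the baseline occurrence of $S_2(s)$ to the shifted one $p_2 = \off(w) + (|\str{B}| - l_2(s)) + k'|\str{B}|$, and the coverage check is identical to case~1.

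The principal obstacle I expect is matching each shift $k'$ with the appropriate choice among $\lsplits$ and $\rsplits$ and then verifying that the loci in the specific tries (which store only $\str{B}, \str{B^2}, \str{B^{k-2}}, \str{B^{k-1}}$, not every power) produce $|\rev{U}|$ and $|V|$ small enough for $S_2(s)$ to sit inside $\str{A''}$. Once the right $s$ is identified, the arithmetic $p_2 = q_2 - \Delta_2(s)$ and the containment argument go through exactly as in case~1.
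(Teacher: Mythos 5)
Your case analysis for $A''\to BC$ matches the paper's, and your reasoning about the trie loci, the occurrence position $|\str{\head(A'')}| - l_2(s)$, and the containment via $\Delta_2(s) = l_2(s) - s$ is essentially right. However, your treatment of the run-length case $A''\to B^k$ has a genuine gap, and it is exactly where the real work lies.

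You describe choosing $s$ from $\lsplits$ or $\rsplits$ ``according to whether $k'$ is small or large'' and then propagating the occurrence of $S_2(s)$ by the periodicity of $\str{B^k}$. But for this to yield an occurrence of $S_2(s)$ that actually fits inside $\str{A''}$ and covers $[q_2,q_2+|P_2|)$, the shifted interval $[p_2, p_2 + |S_2(s)|)$ must stay within $\str{A''}$. Since only the specific powers $\str{B},\str{B^2},\str{B^{k-2}},\str{B^{k-1}}$ (and reverses) are stored in the tries, the locus label $V$ (or $\rev U$) can be as long as $\str{B^{k-2}}$; for a mid-run $k'$ neither the left- nor right-anchored choice keeps the shifted copy of $S_2(s)$ inside $\str{A''}$. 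You flag this as ``the principal obstacle'' but do not resolve it, and it cannot be resolved without a further constraint.

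The missing idea is that you never use the hypothesis that $(q_1,q_2)$ is a \emph{relevant co-occurrence}, which is exactly what the paper exploits. Because $q_1 < |\str{\head(A)}|$ and there is no occurrence of $P_2$ strictly between $q_1$ and $q_2$, the shift $k'$ cannot be large: if $k'\ge 2$ (when $A''=A$) or $k'\ge 1$ (when $A''\ne A$), then $q_2-|\str{B'}|$ would be an occurrence of $P_2$ still to the right of $q_1$, contradicting consecutiveness. Hence $k'\in\{0,1\}$, which is precisely why storing only $\str{B},\str{B^2},\str{B^{k-2}},\str{B^{k-1}}$ suffices and why no periodicity-propagation step is needed for $P_2$. (The paper does use periodicity, but only later in Lemma~\ref{lm:q1} for $P_1$, where the asymmetry of the co-occurrence definition forces it.) Without deriving this bound on $k'$, the proof of the run-length case is incomplete; with it, the propagation step becomes unnecessary and the argument collapses to the same shape as the $A''\to BC$ case.
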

\begin{proof}
Below we show that there exists a descendant $A'$ of $A$ and a split $s \in \splits(G', P_2)$ such that either $\rev{P_2[\dots s]}$ is a prefix of $\rev{\head(A')}$ and $P_2(s \dots ]$ is a prefix of $\str{\tail(A')}$, or $A'$ is associated with a rule $A' \rightarrow (B')^k$, $\rev{P_2[\dots s]}$ is a prefix of $\rev{\str{(B')^2}}$ and $P_2(s\dots ]$ is a prefix of $\str{(B')^{k-2}}$. The claim follows by the definition of \Tpre, \Tsuf, and $S_2(s)$. 

If $q_2$ is relevant in $\str{A}$, there exists a split $s \in \splits(G', P_2)$ such that $\rev{P_2[\dots s]}$ is a prefix of $\rev{\head(A)}$ and $P_2(s \dots ]$ is a prefix of $\str{\tail(A)}$ by definition. If $q_2$ is not relevant, then $q_2 \ge |\str{\head(A)}|$ by the definition of a co-occurrence. By Claim~\ref{claim:primary_occurrence}, there is a descendant $A'$ of $A$ corresponding to a substring $\str{A}[\ell \dots r]$ for which either $(q_2-\ell)$ is relevant (and then we can repeat the argument above), or $A'$ is associated with a rule $A' \rightarrow (B')^k$ and $(q_2-\ell)-k' \cdot |\str{B'}|$ is relevant, for some $0 \le k' \le k$. Consider the latter case. If $A'=A$, then $k'=1$, as otherwise $q_1 < q_2' = q_2-|\str{B'}| < q_2$ is an occurrence of $P_2$ in $\str{A}$ contradicting the definition of a co-occurrence (recall that $(q_1,q_2)$ is a relevant co-occurrence and hence by definition $q_1 < |\str{\head(A)}|$), and therefore $s = |\str{(B')^2}|-q_2+\ell \in \splits(G', P_2)$, $\rev{P_2[ \dots s]}$ is a prefix of $\rev{(B')^2}$ and $P_2(s \dots ]$ is a prefix of $\str{(B')^{k-2}}$. If $A' \neq A$, then we can analogously conclude that $k'=0$, which implies $s = |\str{B'}|-q_2+\ell \in \splits(G', P_2)$, $\rev{P_2[ \dots s]}$ is a prefix of $\rev{B'}$ and $P_2(s \dots ]$ is a prefix of $\str{(B')^{k-1}}$.
\end{proof}

As the definition of a co-occurrence is not symmetric, $q_1$ does not enjoy the same property. However, a similar claim can be shown:

\begin{lemma}\label{lm:q1}
There exists $s \in \splits(G', P_1)$ and an occurrence $p_1$ of $S_1(s)$ in $\str{A}$ such that $[p_1,p_1+|S_1(s)|) \supseteq [q_1,q_1+|P_1|)$ and at least one of the following holds:
\begin{enumerate}
\item $q_1-\Delta_1(s)$ is an occurrence of $S_1(s)$;
\item $q_2$ is a relevant occurrence of $P_2$ in $\str{A}$, the period of $S_1(s)$ equals the period $\pi_1$ of $P_1$, and there exists an integer $k$ such that $p_1 = q_1-\Delta_1(s)-\pi_1 \cdot k$ and $q_2+\pi_1-1 \le p_1 +|S_1(s)|-1 \le q_2+|P_2|-1$.
\end{enumerate}
\end{lemma}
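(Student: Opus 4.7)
The plan is to adapt the proof of Claim~\ref{claim:q2} to $q_1$, with the main subtlety being that the co-occurrence condition is asymmetric (no $P_1$ in $(q_1, q_2]$ versus no $P_2$ in $[q_1, q_2)$), which is precisely why case~2 is needed. I would start by applying Claim~\ref{claim:primary_occurrence} to the occurrence $q_1$ of $P_1$ inside $\str{A}$: let $A''$ be the lowest node in the subtree of $A$ whose expansion covers $[q_1, q_1+|P_1|)$. If $A'' \to B''C''$ and $q_1-\off(A'')$ is relevant in $\str{A''}$ with some split $s \in \splits(G', P_1)$, the argument is parallel to Claim~\ref{claim:q2}: $\rev{\str{B''}}$ lies in \Tpre\ and $\str{C''}$ in \Tsuf, both prefixed by $\rev{P_1[\dots s]}$ and $P_1(s\dots]$ respectively, so the loci $u,v$ make $\rev{U}$ a suffix of $\str{B''}$ (of length $\ge s$) and $V$ a prefix of $\str{C''}$ (of length $\ge |P_1|-s$). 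Consequently $S_1(s)=\rev{U}V$ occurs at $p_1 = q_1-\Delta_1(s)$ entirely inside $\str{A''} \subseteq \str{A}$, with $[p_1, p_1+|S_1(s)|) \supseteq [q_1, q_1+|P_1|)$, yielding case~1.

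The interesting case is $A'' \to (B'')^k$, where Claim~\ref{claim:primary_occurrence} gives $q_1 = \off(A'') + (k'+1)|\str{B''}| - s$ for some $0 \le k' \le k-1$ and $s \in \lsplits(A'', P_1) \subseteq \splits(G', P_1)$. Since $\rev{\str{B''}}$ is in \Tpre, we still have $l_1(s) = |U| \le |\str{B''}|$, so $\rev{U}$ is a suffix of $\str{B''}$ and the left extension of $S_1(s)$ placed at $p_1 = q_1-\Delta_1(s)$ falls inside the $(k'+1)$-th copy of $B''$ and matches by periodicity of $\str{(B'')^k}$. The right extension $V$, being a prefix of some $\str{(B'')^j}$ with $j \in \{1,2,k-2,k-1\}$, either fits inside the remaining $(k-k'-1)|\str{B''}|$ characters of $\str{A''}$---in which case it matches by periodicity and case~1 holds with $p_1=q_1-\Delta_1(s)$---or spills past $\str{A''}$. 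In the latter regime I would invoke case~2. Since $P_1$ straddles a $B''$-boundary in $\str{(B'')^k}$, its period $\pi_1$ is compatible with (and divides) $|\str{B''}|$, and both $\rev{U}$ and $V$ respect this period, so the period of $S_1(s)$ equals $\pi_1$. The co-occurrence property forbids another $P_1$ in $(q_1, q_2]$; but $P_1$ could be shifted by $\pi_1$ within the periodic run of $\str{(B'')^k}$ whenever $q_2-q_1 \ge \pi_1$ and there is room on the right of $q_1$, which forces $q_2 < q_1 + \pi_1$, and combined with the relevant co-occurrence assumption $q_1 \le |\str{\head(A)}| \le q_2+|P_2|-1$, this implies $q_2 \le |\str{\head(A)}|$, so $q_2$ is relevant in $\str{A}$. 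Finally, one picks the integer $k$ so that $p_1 = q_1-\Delta_1(s)-\pi_1 k$ satisfies $q_2+\pi_1-1 \le p_1+|S_1(s)|-1 \le q_2+|P_2|-1$: such a $k$ exists because $|P_2| \ge 2\pi_1$ in this regime (otherwise Corollary~\ref{cor:arithmetic_progression} would yield an extra $P_1$ occurrence inside $(q_1, q_2]$, contradicting co-occurrence), and at this shifted $p_1$ the interval $[p_1, p_1+|S_1(s)|)$ lies inside one common $\pi_1$-periodic run of $\str{A}$ spanning the neighborhoods of $q_1$ and $q_2$, so $S_1(s)$ matches.

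The main obstacle will be this last sub-case, specifically (a) proving rigorously that $q_2$ must be relevant in $\str{A}$ and (b) showing that the shift $k$ always exists with $S_1(s)$ genuinely matching the text at $p_1=q_1-\Delta_1(s)-\pi_1 k$. Both arguments hinge on a careful interplay between the asymmetric co-occurrence condition and Fine--Wilf-style periodicity (Corollary~\ref{cor:arithmetic_progression}) inside the periodic run induced by $\str{(B'')^k}$, and it is this interplay---absent from Claim~\ref{claim:q2} because of the symmetry of the $q_2$-side---that forces the two-case structure of the lemma.
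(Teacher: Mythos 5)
Your overall plan is close to the paper's, and your handling of the non-run-length production and of the ``$V$ fits'' subcase is correct. The gap is in how you split the run-length case. You dichotomize on whether the right extension $V$ of $S_1(s)$ fits inside $\str{A''}$, but the paper dichotomizes on two different inequalities: (1) $q_1+|P_1|-1+|\str{B'}| \ge r$ (the shift of $P_1$ by one copy of $B'$ overruns $\str{A'}$) and (2) $q_1+|\str{B'}|-1 \ge q_2$. If both fail, $q_1+|\str{B'}|$ is an occurrence of $P_1$ in $(q_1,q_2]$, a contradiction; if (1) holds, the paper switches to a \emph{different} split $s'\in\rsplits(A',P_1)$ coming from the \emph{last} $B'$-boundary that $P_1$ crosses, for which $\rev{P_1[\dots s']}$ is a prefix of $\rev{\str{(B')^{k-1}}}$ and $P_1(s'\dots]$ is a prefix of $\str{B'}$; the resulting $S_1(s')$ always fits in $\str{A'}$ and yields case~1. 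You never consider changing the split, and this is where your argument breaks: ``$V$ spills'' does not imply (1) fails. When (1) holds, $P_1$ shifted by $|\str{B'}|$ overruns $\str{A'}$, so there is no forced occurrence of $P_1$ in $(q_1,q_2]$ to bound $q_2$, and $q_2$ may lie well inside $\str{\tail(A)}$ and hence fail to be relevant. In that situation you assert case~2, but case~2 requires $q_2$ relevant and thus does not hold; one must instead fall back on case~1 with the $\rsplits$ split.

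Concretely, your deduction ``$q_2<q_1+\pi_1$'' relies on the unstated assumption that $P_1$ at $q_1+\pi_1$ fits inside the periodic run, which is precisely (1) failing (up to replacing $\pi_1$ by $|\str{B'}|\ge\pi_1$); once you acknowledge that, your ``$V$ spills'' case splinters into two and the $\rsplits$-based case~1 is unavoidable. Also, your final step shifts by arbitrary multiples of $\pi_1$ and needs $|P_2|\ge 2\pi_1$ to hit the window, which you try to justify via Corollary~\ref{cor:arithmetic_progression}; but $|P_2|\ge 2\pi_1$ is not a consequence of the hypotheses, and the paper avoids it by choosing a specific $p_1$ (the leftmost shift of $q_1-\Delta_1(s)$ by multiples of $|\str{B'}|$ that still covers $[q_1,q_1+|P_1|)$) and then verifying the chain $q_2+\pi_1-1 \le q_1+2|\str{B'}|-1 \le q_1+|P_1|-1 \le p_1+|S_1(s)|-1 \le r < q_2+|P_2|-1$ directly from (2) holding, (1) failing, $|P_1|-s>2|\str{B'}|$, and $q_2$ being relevant. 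You also need to argue that this $p_1$ stays within the periodic region so that $S_1(s)$ actually occurs there; asserting it ``lies inside one common $\pi_1$-periodic run'' is not a proof.
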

\begin{proof}
If $q_1$ is a relevant occurrence of $P_1$ in $A$ with a split $s \in \splits(G', P_1)$, then $\rev{P_1[\dots s]}$ is a prefix of $\rev{\str{\head(A)}}$ and $P_1(s \dots ]$ is a prefix of $\str{\tail(A)}$ and therefore the first case holds by the definition of \Tpre\ and \Tsuf. 

Otherwise, by Claim~\ref{claim:primary_occurrence}, there is a descendant $A'$ of $\head(A)$ corresponding to a substring $\str{A}[\ell \dots r]$ for which either $(q_1-\ell)$ is relevant (and then we can repeat the argument above), or $A'$ is associated with a rule $A' \rightarrow (B')^k$ and $(q_1-\ell)-k' \cdot |\str{B'}|$, for some $0 \le k' \le k$, is a relevant occurrence of $P_1$ in $\str{A'}$ with a split $s \in \splits(G',P_1)$. Consider the latter case. We must have (1) $q_1+|P_1|-1+|\str{B'}| \ge r$ or (2) $q_1+|\str{B'}|-1 \ge q_2$, because if both inequalities do not hold, then $q_1 < q_1+|\str{B'}| \le q_2$ is an occurrence of $P_1$ in $\str{A}$, which contradicts the definition of a co-occurrence. Additionally, if (1) holds, then by definition there exists a split $s' \in \splits(G', P_1)$ (which might be different from the split $s$ above) such that $\rev{P_1[ \dots s']}$ is a prefix of $\rev{\str{(B')^{r-1}}}$ and $P_1(s' \dots ]$ is a prefix of $\str{B'}$ and we fall into the first case of the lemma. 

From now on, assume that (2) holds and (1) does not. Since $q_1+|\str{B'}| \le r \le |\str{\head(A)|}$ and $(q_1,q_2)$ is a relevant co-occurrence, $q_2$ must be a relevant occurrence of $P_2$ in $\str{A}$. If $|P_1|-s \le |\str{(B')^2}|$, then $\rev{P_1[ \dots s]}$ is a prefix of $\rev{\str{B'}}$ and $P_1(s \dots ]$ is a prefix of $\str{(B')^2}$ and therefore $q_1-\Delta_1(s)$ is an occurrence of $S_1(s)$. Otherwise, by Fine and Wilf's periodicity lemma~\cite{fine1965uniqueness}, the periods of $\str{A'}$, $P_1$, and $S_1(s)$ are equal, since $P_1$ and hence $S_1(s)$ span at least two periods of $\str{A'}$. By periodicity, $S_1(s)$ occurs at positions $q_1-\Delta_1(s)-|\str{B'}| \cdot k$ of $\str{A}$. Let $p_1$ be the leftmost of these positions which satisfies $p_1+|S_1(s)|-1 \ge q_1+|P_1|-1$. This position is well-defined as (1) does not hold, and furthermore $[q_1,q_1+|P_1|) \subseteq [p_1,p_1+|S_1(s)|)$ as $s \le l_1(s)$ and $|S_1(s)|-l_1(s) \ge |P_1|-s$. We have $p_1 = q_1-\Delta_1(s)-\pi_1 \cdot k''$ for some integer $k''$ (as $|\str{B'}|$ is a multiple of $\pi_1$), and 
$q_2 + \pi_1-1 \le q_1+2|\str{B'}|-1 \le q_1+|P_1|-1 \le p_1+|S_1(s)|-1 \le r < q_2+|P_2|-1$, 
where the last inequality holds as $q_2$ is a relevant occurrence in $\str{A}$. The claim of the lemma follows.
\end{proof}

\begin{figure}[h!]
\centering

\newcommand{\rheight}{0.5}
\newcommand{\Alength}{14}
\newcommand{\Aboundary}{12.4}
\newcommand{\nbB}{6}
\newcommand{\Blength}{1.3}

\newcommand{\nontermA}{
\node (ha) at (0,0) {};
\draw (ha) node[right,shift={(0,{0.55*\rheight})}] {$\str{\head(A)}$} rectangle ++ (\Alength,\rheight) node[left, shift={(0,{-0.55*\rheight})}] (ta) {$\str{\tail(A)}$};

\draw (\Aboundary,-0.5*\rheight) -- (\Aboundary,1.5*\rheight);

\node (l) at (2.1,0) {};
\node[below,text height=0.25cm]  at ($(l)$) {$\ell$};
\foreach \x in {0,...,\nbB}
	\draw ($(l)+ (\x*\Blength,0)$) rectangle ++ (\Blength,\rheight) node[midway] {$\str{B'}$};
\node (r) at ($(l)+(\nbB * \Blength + \Blength,0)$) {};
\node[below,,text height=0.25cm]  at ($(r)$) {$r$};
\draw [decorate,
    decoration = {calligraphic brace,mirror}] ($(l)+(0,-\rheight)$) --  ($(r)+(0,-\rheight)$) node[midway,shift={(0,{-0.7*\rheight})}] {$\str{A'}$};

\draw[white] (0,5*\rheight) -- (\Alength,5*\rheight);
}

\newcommand{\Ponelength}{3.7}
\newcommand{\Ptwolength}{7.4}

\newcommand{\Pone}[2][$q_1$]{
\node (q1) at (#2) {};
\draw[fill=gray!25] ($(q1)$) rectangle ++ (\Ponelength,\rheight) node[midway]  {{$P_1$}};
\path let \p1= (q1) in node (q1label) at (\x1,0) {};
\draw[dotted] ($(q1)$) -- ($(q1label)$) node[below,text height=0.25cm] {\small{#1}};
}

\newcommand{\Sonelength}{6.7}
\newcommand{\Soneheight}{0.2}
\newcommand{\qoneoffset}{0.15*\Sonelength}

\newcommand{\Sone}[2][$p_1$]{
\node (p1) at (#2) {};
\draw[fill=white] ($(p1)$) node[left,shift={(0,{0.5*\Soneheight})}] {$S_1$} rectangle ++ (\Sonelength,\Soneheight);
\draw[fill=gray!25] ($(p1)+(\qoneoffset,0)$) rectangle ++ (\Ponelength,\Soneheight) ;
}

\newcommand{\Ptwo}[1]{
\node (q2) at (#1) {};
\draw[fill=white] (q2) rectangle ++ (\Ptwolength,\rheight) node[midway]  {{$P_2$}};
\path let \p1= (q2) in node (q2label) at (\x1,0) {};
\draw[dotted] ($(q2)$) -- ($(q2label)$) node[below,text height=0.25cm] {\small{$q_2$}};
}

\newcommand{\pione}{0.15}
\newcommand{\periods}[2]{
\node (s1) at (#1) {};
\foreach \x in {1,...,#2} \draw ($(s1)+(\x*\pione*2,0)$) arc(0:180:0.15 and \pione);
\draw ($(s1)+((#2*\pione*2+\pione*2,0.1)$) arc(40:180:0.15 and \pione);
\node at($(s1)+ (\pione, 2*\pione)$) {$\pi_{1}$};
\node at ($(s1)+((#2*\pione*2+\pione*4,0.075)$) {$\dots$};
}


\captionsetup[subfigure]{justification=centering}
\begin{subfigure}{\textwidth}
\centering
\begin{tikzpicture}[scale=0.8, every node/.style={scale=0.7}]
\nontermA
\Ptwo{0.5*\Aboundary,4.5*\rheight}
\Pone[$q_1+|\str{B'}|$]{{0.305*\Aboundary+\Blength},3*\rheight}
\path let \p1= (q1) in node (q1end) at ({\x1},0) {};
\draw[dotted] ($(q1)+(\Ponelength,0)$) -- ($(q1end)+(\Ponelength,0)$) node[below,text height=0.25cm] {\small{$q_1+|\str{B'}|+|P_1|-1$}};
\Pone{0.305*\Aboundary,1.5*\rheight}

\end{tikzpicture}
\caption{If neither (1) nor (2), then $(q_1,q_2)$ is not consecutive.}
\label{subfig:neither1nor2}
\end{subfigure}
\begin{subfigure}{\textwidth}
\centering
\begin{tikzpicture}[scale=0.8, every node/.style={scale=0.7}]
\nontermA
\Ptwo{0.5*\Aboundary,3*\rheight}
\renewcommand{\Ponelength}{6.3}
\Pone{0.305*\Aboundary,1.5*\rheight}
\path let \p1= (q1) in node (q1end) at ({\x1},0) {};
\draw[dotted] ($(q1)+(\Ponelength,0)$) -- ($(q1end)+(\Ponelength,0)$) node[below,text height=0.25cm] {\small{$q_1+|P_1|-1$}};
\end{tikzpicture}
\caption{(1) holds and (2) does not.}
\label{subfig:only1holds}
\end{subfigure}
\begin{subfigure}{\textwidth}
\centering
\begin{tikzpicture}[scale=0.8, every node/.style={scale=0.7}]
\nontermA
\draw[white] (0,5*\rheight) -- (\Alength,6*\rheight);

\Ptwo{0.5*\Aboundary,4.5*\rheight}

\periods{0.29*\Aboundary,{3.6*\rheight+\Soneheight}}{11};
\Sone{0.29*\Aboundary,3.6*\rheight}
\path let \p1= (p1) in node (p1end) at ({\x1},0) {};
\draw[dotted] ($(p1)$) -- ($(p1end)$) node[below,text height=0.25cm] {\small{$q_1-\Delta_1-|\str{B'}|$}};

\Sone{0.29*\Aboundary-\Blength,2.8*\rheight}
\path let \p1= (p1) in node (p1end) at ({\x1},{2*\rheight}) {};
\draw[dotted] ($(p1)$) -- ($(p1end)$) node[below,text height=0.25cm] {\small{$q_1-\Delta_1-2|\str{B'}|$}};
\Pone{0.465*\Aboundary,1.5*\rheight}
\end{tikzpicture}
\caption{(2) holds, (1) does not, and $|P_1|-s  \geq \str{(B')^2}$.}
\label{subfig:only2holds}
\end{subfigure}

\caption{Subcases of Lemma~\ref{lm:q1}.}
\label{fig:lemq1}
\end{figure}

We summarize Claim~\ref{claim:q2} and Lemma~\ref{lm:q1}:

\begin{corollary}\label{cor:q1_and_q2}
Let $(q_1,q_2)$ be a co-occurrence of $P_1,P_2$ in the expansion of a non-terminal $A$. There exist splits $s_1 \in \splits(G',P_1), s_2 \in \splits(G',P_2)$ and occurrences $p_1$ of $S_1(s_1)$ and $p_2$ of $S_2(s)$, where $[p_1,p_1+|S_1(s_1)|) \supseteq [q_1,q_1+|P_1|)$ and $[p_2,p_2+|S_2(s_2)|) \supseteq [q_2,q_2+|P_2|)$, such that at least one of the following holds:
\begin{enumerate}
\item \label{it:aperiodic} The occurrence $p_1$ is either relevant or $p_1+|S_1(s_1)|-1 \le |\str{\head(A)}|$. The occurrence $p_2$ is either relevant or $p_2 > |\str{\head(A)}|$. Additionally, $p_1 = q_1 - \Delta_1(s_1)$ and $p_2 = q_2 - \Delta_2(s_2)$. 
\item \label{it:periodic} The occurrence $p_2$ is relevant and $p_1 \le |\str{\head(A)}|$. Additionally, $p_2 = q_2 - \Delta_2(s_2)$, the period of $S_1(s)$ equals the period $\pi_1$ of $P_1$, and there exists an integer $k$ such that $p_1 = q_1-\Delta_1(s_1)-\pi_1 \cdot k$ and $p_2+\pi_1-1 \le p_1 +|S_1(s_1)|-1 \le p_2+|S_2(s_2)|-1$.
\end{enumerate}
\end{corollary}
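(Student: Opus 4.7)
The plan is to combine Claim~\ref{claim:q2} and Lemma~\ref{lm:q1} and verify that the additional positional assertions follow from two simple facts: first, $\Delta_j(s) = l_j(s) - s \ge 0$, since the locus of $\rev{P_j[\dots s]}$ in \Tpre\ has label length at least $s$; second, $[p_j, p_j + |S_j(s_j)|) \supseteq [q_j, q_j + |P_j|)$ by the containment already built into the two earlier results. Together with the hypothesis that $(q_1, q_2)$ is a relevant co-occurrence in $\str{A}$, i.e.\ $q_1 \le |\str{\head(A)}| \le q_2 + |P_2| - 1$, these facts force $p_1$ and $p_2$ to straddle $|\str{\head(A)}|$ in exactly the way claimed.

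First I would apply Claim~\ref{claim:q2} to obtain $s_2$ and $p_2 = q_2 - \Delta_2(s_2)$; then $p_2 \le q_2$, while the containment combined with relevance of the co-occurrence gives $p_2 + |S_2(s_2)| - 1 \ge q_2 + |P_2| - 1 \ge |\str{\head(A)}|$. Thus whichever side of $|\str{\head(A)}|$ the position $p_2$ itself falls on, we land in one of the two alternatives stated in the corollary: either $p_2 \le |\str{\head(A)}|$ (so $p_2$ is relevant) or $p_2 > |\str{\head(A)}|$.

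Next I would apply Lemma~\ref{lm:q1} to obtain $s_1$ and $p_1$. In either of its two cases one has $p_1 \le q_1$: in case~(1) because $\Delta_1(s_1) \ge 0$, and in case~(2) because the periodic shift moves the occurrence only leftward (the $k$ from the proof of the lemma is non-negative). Hence $p_1 \le |\str{\head(A)}|$ in both cases. If case~(1) of the lemma applies, the same dichotomy around $|\str{\head(A)}|$ that I used for $p_2$ yields case~\ref{it:aperiodic} of the corollary: either $p_1 \le |\str{\head(A)}| \le p_1 + |S_1(s_1)| - 1$ (so $p_1$ is relevant) or $p_1 + |S_1(s_1)| - 1 \le |\str{\head(A)}|$. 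If case~(2) of the lemma applies, then $q_2$ is relevant, and together with $p_2 \le q_2$ and $p_2 + |S_2(s_2)| - 1 \ge q_2 + |P_2| - 1$ this forces $p_2$ to be relevant as well; the inequalities $q_2 + \pi_1 - 1 \le p_1 + |S_1(s_1)| - 1 \le q_2 + |P_2| - 1$ from the lemma then translate to $p_2 + \pi_1 - 1 \le p_1 + |S_1(s_1)| - 1 \le p_2 + |S_2(s_2)| - 1$ using the same two inequalities between $q_2$ and $p_2$, which is exactly case~\ref{it:periodic}.

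I do not foresee a real obstacle here, since the corollary is a structural repackaging of the two prior results. The only care needed is to track the signs of the $\Delta_j(s)$ and to note that the enlargement from $P_j$ to $S_j(s_j)$ extends $[q_j, q_j + |P_j|)$ on both sides, so that $p_j + |S_j(s_j)| - 1$ is pushed past $|\str{\head(A)}|$ whenever the co-occurrence is relevant.
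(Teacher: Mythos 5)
Your proposal is correct and is essentially the argument the paper elides with ``We summarize Claim~\ref{claim:q2} and Lemma~\ref{lm:q1}''; you simply make explicit the positional bookkeeping, using $\Delta_j(s)\ge 0$, the containments $[p_j,p_j+|S_j|)\supseteq[q_j,q_j+|P_j|)$, and the relevant-co-occurrence inequality $q_1 \le |\str{\head(A)}| \le q_2+|P_2|-1$. One small simplification: to get $p_1\le q_1$ in case~(2) of Lemma~\ref{lm:q1} you need not peek at the sign of $k$ in the lemma's proof --- the containment $[p_1,p_1+|S_1(s_1)|)\supseteq[q_1,q_1+|P_1|)$ already stated in the lemma gives it directly.
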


The reverse observation holds as well:

\begin{observation}\label{obs:rev_q1_and_q2}
If $p_j$ is an occurrence of $S_j(s)$ in $\str{A}$, $j = 1,2$, then $q_j = p_j+\Delta_j(s)$ is an occurrence of $P_j$. Furthermore, if $S_1(s)$ is periodic with period $\pi_1$, then $q_1 + \pi_1 \cdot k$, $0 \le k \le \lfloor (|S_1(s)|-q_1-|P_1|)/\pi_1\rfloor$, are occurrences of $P_1$ in $\str{A}$. 
\end{observation}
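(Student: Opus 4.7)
The plan is to read off both claims directly from the definitions of $S_j(s)$ and $\Delta_j(s)$, with no new combinatorial input. Recall that $S_j(s) = \rev{U}V$, where $U$ is the locus of $\rev{P_j[\dots s]}$ in \Tpre\ and $V$ is the locus of $P_j(s\dots]$ in \Tsuf. Equivalently, the string $P_j[\dots s]$ appears as a suffix of the length-$l_j(s)$ block $\rev{U}$, while $P_j(s\dots]$ appears as a prefix of $V$. Concatenating these two halves yields a single copy of $P_j$ inside $S_j(s)$ starting at offset $\Delta_j(s) = l_j(s) - s$, precisely straddling the boundary between $\rev{U}$ and $V$. Hence whenever $S_j(s)$ occurs at position $p_j$ of $\str{A}$, $P_j$ occurs at $p_j + \Delta_j(s) = q_j$, which proves the first assertion.

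For the second assertion, suppose $S_1(s)$ has period $\pi_1$. Iterating the defining identity $S_1(s)[i] = S_1(s)[i+\pi_1]$ (valid as long as both indices lie in $[0,|S_1(s)|-1]$), one sees that if $P_1$ occurs inside $S_1(s)$ at some offset $o$, then $P_1$ also occurs at every offset $o + \pi_1 k$ with $k \ge 0$ such that $o + \pi_1 k + |P_1| \le |S_1(s)|$. Applying this to the occurrence from the first part, which sits at offset $\Delta_1(s)$, and then translating by $p_1$, produces the family $q_1 + \pi_1 k$ of occurrences of $P_1$ in $\str{A}$; the upper bound on $k$ given in the statement is exactly the condition that the shifted copy still fits inside $S_1(s)$.

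No genuine obstacle is expected: the whole argument is bookkeeping around the definition of $\Delta_j(s)$ and a single use of the period identity. The only care required is to check that the endpoint conditions on $k$ match those stated, which they do once one identifies the offset $q_1$ inside $S_1(s)$ with $\Delta_1(s)$.
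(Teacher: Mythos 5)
The paper states this as an observation without a proof, and your argument is the natural one and is correct: the first part is pure bookkeeping around the definitions of $S_j(s)$ and $\Delta_j(s)$ (since $\rev{P_j[\dots s]}$ is a prefix of $U$ and $P_j(s\dots]$ is a prefix of $V$, the string $P_j$ sits inside $S_j(s)=\rev{U}V$ starting at offset $l_j(s)-s=\Delta_j(s)$, and translating by $p_j$ gives $q_j$), and the second part is the standard sliding of a periodic window. You also correctly flag the one delicate point: as written, the bound $\lfloor(|S_1(s)|-q_1-|P_1|)/\pi_1\rfloor$ uses the absolute position $q_1=p_1+\Delta_1(s)$ where the tight ``fits inside $S_1(s)$'' condition would use the offset $\Delta_1(s)$. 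Taken literally the stated bound is only more conservative (hence still sound, since it asserts that these positions \emph{are} occurrences, not that they are all of them); read as the offset inside $S_1(s)$, as you propose, it is tight. Either way the observation holds and your argument establishes it.
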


Finally, the following trivial observation will be important for upper bounding the time complexity of our query algorithm:

\begin{observation}\label{obs:close}
If a string contains a pair of occurrences $(q_1,q_2)$ of $P_1$ and $P_2$ such that $0 \leq q_2-q_1\leq b$, then it contains a $b$-close co-occurrence of $P_1$ and $P_2$.
\end{observation}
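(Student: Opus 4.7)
The plan is to construct, from the given pair $(q_1,q_2)$, a specific co-occurrence that fits inside the interval $[q_1,q_2]$ and is therefore automatically $b$-close. The key idea is to ``tighten'' the pair from both sides: first shrink $q_2$ down to the leftmost occurrence of $P_2$ lying at or to the right of $q_1$, and then pull $q_1$ forward to the rightmost occurrence of $P_1$ lying at or to the left of that new $q_2$.

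Concretely, I would define $q_2' = \min\{\, x \ge q_1 : x \text{ is an occurrence of } P_2\,\}$. This set is nonempty because $q_2$ itself belongs to it, so $q_1 \le q_2' \le q_2$. Next I would set $q_1' = \max\{\, x \le q_2' : x \text{ is an occurrence of } P_1\,\}$, which is nonempty because $q_1 \le q_2'$ is such an occurrence, giving $q_1 \le q_1' \le q_2'$.

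The verification splits into two routine checks. First, $(q_1', q_2')$ is a co-occurrence in the sense of the paper's definition: there can be no occurrence of $P_1$ in $(q_1', q_2']$ by the maximality in the definition of $q_1'$, and no occurrence of $P_2$ in $[q_1', q_2')$ because any occurrence there would be an occurrence of $P_2$ that is $\ge q_1' \ge q_1$ and strictly less than $q_2'$, contradicting the minimality in the definition of $q_2'$. Second, the distance bound follows immediately: $q_2' - q_1' \le q_2' - q_1 \le q_2 - q_1 \le b$, so $(q_1', q_2')$ is $b$-close.

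There is no real obstacle here; the only thing that needs care is getting the one-sided inclusions right in the asymmetric definition of consecutive occurrences (namely that the forbidden interval is $(q_1,q_2]$ for $P_1$ and $[q_1,q_2)$ for $P_2$), which is precisely why the ``tightening'' is done by picking $q_2'$ as a minimum on one side and $q_1'$ as a maximum on the other. This is why the observation is described in the paper as trivial.
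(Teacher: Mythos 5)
Your proof is correct, and since the paper explicitly labels this observation as trivial and gives no proof, there is nothing from the paper to compare against. The ``tightening'' construction you use (first choose $q_2'$ minimal with $q_2' \ge q_1$, then $q_1'$ maximal with $q_1' \le q_2'$) is exactly the right way to handle the asymmetric definition, and it matches the implicit logic the authors use in the proof of Corollary~\ref{cor:all} when they generate co-occurrences via successor/predecessor queries.
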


\subsection{Index}
The first part of the index is the data structure of Theorem~\ref{th:occurrences} and the index of Christiansen et al.~\cite{talg/ChristiansenEKN21}:

\begin{fact}[{\cite[Introduction and Theorem 6.12]{talg/ChristiansenEKN21}}]
There is a $O(g \log^2 N)$-space data structure that can find the $\occ$ occurrences of any pattern $P[1\dots m]$ in $S$ in time $O(m + \occ)$.
\end{fact}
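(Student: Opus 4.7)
This statement is quoted verbatim from Christiansen, Ettienne, Kociumaka, and Navarro, so the ``proof'' inside this paper is simply a citation. My plan would reconstruct the index at a high level, since the underlying combinatorial tool — string attractors — is the kind of object we are manipulating throughout Section~\ref{sec:occurrences}.

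First I would extract from $G$ a \emph{string attractor} $\Gamma \subseteq [0,N)$ of size $O(g)$: a set of positions such that every substring of $S$ has at least one occurrence covering some $\gamma \in \Gamma$. Such a $\Gamma$ can be read off directly from the parse tree by placing one position at the boundary between $\str{\head(A)}$ and $\str{\tail(A)}$ for each non-terminal $A$. Given $\Gamma$, I would partition the occurrences of a query pattern $P$ into \emph{primary} occurrences (those that cover some $\gamma \in \Gamma$) and \emph{secondary} occurrences (translates of a primary copy obtained by following grammar references). The defining property of an attractor guarantees that every non-empty substring has at least one primary occurrence, so it is enough to report primaries and then propagate them through the grammar.

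For primaries I would adapt the two-trie scheme used earlier in the paper: build one compact trie over the reversed left-contexts of attractor positions and another over the right-contexts, so that an occurrence of $P$ covering $\gamma$ with a given split $s$ corresponds to a point in the product grid where $\rev{P[\dots s]}$ matches a prefix in the first trie and $P(s\dots]$ matches a prefix in the second. Store one point per attractor position in an $O(g\log^2 N)$-word 2D range-reporting structure with $O(1)$ time per output, descend the two tries in parallel with the letters of $P$ in total $O(m)$ time (using grammar LCE queries to avoid rescanning), and enumerate primaries in $O(m+\occ_{\mathrm{prim}})$. Secondary occurrences are then obtained by walking up the parse tree from the non-terminal exposing a primary and emitting one occurrence per parse-tree embedding of that non-terminal in $S$.

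The main technical obstacle is achieving the $O(1)$-per-occurrence bound. For primaries this requires a geometric reporting data structure with constant query overhead; using the best such structure on $O(g\log N)$ points inflates the final space to $O(g\log^2 N)$. For secondaries, the parent/occurrence lists must be linked so that each reported position costs $O(1)$ amortized time, which is supplied by a heavy-path decomposition of the parse tree together with precomputed counts of occurrences of each non-terminal. Both ingredients are used as black boxes from Christiansen et al., which justifies invoking the fact as stated.
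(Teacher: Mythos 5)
The paper itself offers no proof here; it invokes the result of Christiansen et al.\ as a black box, and you correctly identify this. Your reconstruction of their index architecture is accurate: extract an $O(g)$-size string attractor from rule boundaries, split occurrences into primary (covering an attractor position) and secondary, index primaries by a pair of compact tries over reversed left-contexts and right-contexts feeding a 2D range-reporting grid, and expand secondaries by walking the parse tree with precomputed links. This is indeed the scheme underlying Theorem 6.12 of Christiansen et al.

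The one place where your sketch understates the difficulty is the $O(m+\occ)$ time bound. A pattern of length $m$ has up to $m-1$ candidate splits, and "descend the two tries in parallel with the letters of $P$ in total $O(m)$ time" is precisely the nontrivial part: it requires z-fast tries navigated via Karp--Rabin fingerprints (so that each split costs $O(1)$ after an $O(m)$ fingerprint precomputation), a range-reporting structure with $O(1)$ per reported point, and a separate mechanism for short patterns where the polylogarithmic overheads cannot be charged to $m$. You do defer these to the citation, so there is no error, but it is worth noting that the present paper avoids this machinery for its own tries by working with a locally consistent RLSLP in which only $O(\log N)$ splits ever matter (Lemma~\ref{lm:locally_consistent}); that is why Lemma~\ref{lm:tries} settles for $O(m+\tau(h+\log m))$ rather than pushing all the way to $O(m+\occ)$, and why the stronger Fact is imported wholesale rather than reproved.
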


The second part of the index are the tries \Tpre\ and \Tsuf, augmented as explained below. Consider a quadruple $(u_1,u_2,v_1,v_2)$, where $u_1$ and $u_2$ are nodes of \Tpre\ and $v_1$ and $v_2$ are nodes of \Tsuf. Let $U_1, U_2, V_1, V_2$ be the labels of $u_1, u_2, v_1, v_2$, respectively. Define $S_1=\rev{U_1}V_1$ and $S_2=\rev{U_2}V_2$, and let $l_1 = |\rev{U_1}|$ and $l_2 = |\rev{U_2}|$.  

First, we store a binary search tree $\mathcal{T}_1(u_1, u_2, v_1, v_2)$ that for each non-terminal $A$ contains at most six integers $d = p_2-p_1$, where $p_1, p_2$ are occurrences of $S_1,S_2$ in $\str{A}$, satisfying at least one of the below:
\begin{enumerate}
\item $p_1$ is the rightmost occurrence of $S_1$ such that $p_1+|S_1|-1 < |\str{\head(A)}|$ and $p_2$ is the leftmost occurrence of $S_2$ such that $p_2 \ge |\str{\head(A)}|$;
\item $p_1$ is a relevant occurrence of $S_1$ with a split $l_1$ and $p_2$ is the leftmost occurrence of $S_2$ such that $p_2 \ge |\str{\head(A)}|$;
\item $p_1$ is a relevant occurrence of $S_1$ with a split $l_1$, $p_2$ is a relevant occurrence of $S_2$ with a split $l_2$; 
\item $p_2$ is a relevant occurrence of $S_2$ with a split $l_2$ and $p_1$ is the rightmost occurrence of $S_1$ such that $p_1+|S_1|-1<p_2$;
\item $p_2$ is a relevant occurrence of $S_2$ with a split $l_2$ and $p_1$ is the leftmost or second leftmost occurrence of $S_1$ in $\str{\head(A)}$ 
such that $p_1 < p_2 \le p_1+|S_1|-1 < p_2+|S_2|-1$. 
\end{enumerate}

Second, we store a list of non-terminals $\mathcal{L}(u_2, v_2)$ such that their expansion contains a relevant occurrence of $S_2$ with a split $l_2$. Additionally, for every $k \in [0, \log N]$, we store, if defined:
\begin{enumerate}
\item The rightmost occurrence $p_1$ of $S_1$ in $S_2$ such that $p_1+(|S_1|-1) \le l_2-2^k$;
\item The leftmost occurrence $p'_1$ of $S_1$ in $S_2$ such that $p'_1 \le l_2-2^k \le p'_1+|S_1|-1$;
\item The rightmost occurrence $p''_1$ of $S_1$ in $S_2$ such that $p''_1 \le l_2-2^k \le p''_1+|S_1|-1$.
\end{enumerate}

Finally, we compute and memorize the period $\pi_1$ of $S_1$. If the period is well-defined (i.e., $S_1$ is periodic), we build a binary search tree $\mathcal{T}_2(u_1, u_2, v_1, v_2)$. Consider a non-terminal~$A$ containing a relevant occurrence $p_2$ of $S_2$ with a split $l_2$.
Let $p_1$ be the leftmost occurrence of $S_1$ such that $p_1 \le  p_2 \le p_1+|S_1|-1 \le p_2 + |S_2|-1$ and $p_1'$ the rightmost. If $p_1$ and $p_1'$ exist ($p_1$ might be equal to $p_1'$) and $p_1'+|S_1|-1 \ge p_2 + \pi_1-1$, we add an integer $(p_1'-p_1)/\pi_1$ to the tree and associate it with $A$. We also memorize a number $\ov(S_1,S_2) = p_2-p_1'$, which does not depend on $A$ by Corollary~\ref{cor:arithmetic_progression} and therefore is well-defined (it corresponds to the longest prefix of $S_2$ periodic with period $\pi_1$).

\begin{claim}
The data structure occupies $O(g^5 \log^5 N)$ space.
\end{claim}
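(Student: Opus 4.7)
The plan is to account for the size of every component of the index separately and verify that the quadruple-indexed binary search trees $\mathcal{T}_1$ and $\mathcal{T}_2$ dominate, yielding the claimed $O(g^5\log^5 N)$ bound. Recall that $g' = O(g\log N)$.

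First, I will bound the tries \Tpre\ and \Tsuf. Each production of $G'$ contributes at most four strings to each trie (two for productions of the form $A\rightarrow BC$ via $\str{C}$ and $\rev{\str{B}}$, and four for $A\rightarrow B^k$ via $\str{B},\str{B^2},\str{B^{k-2}},\str{B^{k-1}}$, or the corresponding reverses). Hence each trie has $O(g')$ leaves, and by Lemma~\ref{lm:tries} each can be implemented in $O(g')$ space, which is negligible compared to the final bound.

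Second, I count the quadruples $(u_1,u_2,v_1,v_2)$ indexing the stored structures. Each trie has $O(g')$ nodes, so there are at most $O((g')^4)$ quadruples. For each such quadruple, the binary search tree $\mathcal{T}_1(u_1,u_2,v_1,v_2)$ contains at most six integers per non-terminal $A$ (one for each of the five cases in its definition, which is a constant), so $|\mathcal{T}_1(u_1,u_2,v_1,v_2)| = O(g')$. Summing over all quadruples gives $O((g')^5) = O(g^5\log^5 N)$ space. The same argument applies to $\mathcal{T}_2(u_1,u_2,v_1,v_2)$: each non-terminal $A$ contributes at most one integer (of the form $(p_1'-p_1)/\pi_1$), so $|\mathcal{T}_2(u_1,u_2,v_1,v_2)| = O(g')$, and the total over all quadruples is again $O(g^5\log^5 N)$. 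The period $\pi_1$ and the overlap $\ov(S_1,S_2)$ add only $O(1)$ extra information per quadruple.

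Finally, for the lists $\mathcal{L}(u_2,v_2)$, there are $O((g')^2)$ pairs $(u_2,v_2)$, and for each, every non-terminal stored in the list contributes $O(\log N)$ records (three occurrences per value of $k \in [0,\log N]$). This gives total space $O((g')^3 \log N) = O(g^3 \log^4 N)$, which is dominated by the bound from $\mathcal{T}_1$ and $\mathcal{T}_2$. The only subtlety worth double-checking is that for a fixed quadruple the number of integers inserted per non-terminal is indeed a constant (for $\mathcal{T}_1$) or at most one (for $\mathcal{T}_2$); this is immediate from the definitions, since each case pins down $p_1$ and $p_2$ uniquely given $A$ and the fixed pair of strings $(S_1,S_2)$. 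Combining all three bounds yields the claimed $O(g^5\log^5 N)$ total space.
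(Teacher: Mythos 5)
Your dominant-term analysis is right, and it is essentially the same route the paper takes: $O((g')^4)$ quadruples, each with trees of size $O(g')$, giving $O((g')^5) = O(g^5\log^5 N)$. However, your accounting has two slips worth noting, both of which happen to fall below the dominant term.

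First, you omit the \emph{first part} of the index entirely: the data structure of Theorem~\ref{th:occurrences} (which occupies $O(g^2\log^4 N)$ space) and the index of Christiansen et al.\ (which occupies $O(g\log^2 N)$ space). You promised to account for every component, and these are explicitly listed as part of the index; the paper's own proof itemizes them. They are dominated, but they should be stated.

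Second, your treatment of the occurrence arrays is misattributed. The three occurrences $p_1,p_1',p_1''$ stored for each $k \in [0,\log N]$ are occurrences of $S_1$ in $S_2$; since $S_1$ depends on $(u_1,v_1)$ and $S_2$ on $(u_2,v_2)$, these arrays are indexed by the full \emph{quadruple}, not the pair $(u_2,v_2)$, and they have nothing to do with the non-terminals stored in $\mathcal{L}(u_2,v_2)$. The correct count is $O((g')^4\cdot\log N) = O(g^4\log^5 N)$, not your $O((g')^3\log N)$. (Separately, the lists $\mathcal{L}(u_2,v_2)$ themselves hold $O(g')$ non-terminals per pair, for $O((g')^3)$ total.) Again the answer is unchanged because both are dominated by $O((g')^5)$, but the decomposition you wrote does not match the actual index. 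With these two corrections your proof matches the paper's.
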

\begin{proof}
The data structure of Theorem~\ref{th:occurrences} occupies $O(g^2\log^4 N)$ space. The index of Christiansen et al. occupies $O(g \log^2 N)$ space. The tries, by Lemma~\ref{lm:tries}, use $O(g') = O(g \log N)$ space. There are $O((g')^4)$ quadruples $(u_1,u_2,v_1,v_2)$ and for each of them the trees take $O(g')$ space. The arrays of occurrences of $S_1$ in $S_2$ use $O(\log N)$ space. Therefore, overall the data structure uses $O(g^5 \log^5 N)$ space.
\end{proof}

\subsection{Query}
Recall that a query consists of two strings $P_1, P_2$ of length at most $m$ each and an integer $b$, and we must find all $b$-close co-occurrences of $P_1,P_2$ in $S$, let $\occ$ be their number. 

We start by checking whether $P_2$ occurs in $P_1$ using a linear-time and constant-space pattern matching algorithm such as~\cite{constantspacepm}. If it is, let $q_2$ be the position of the first occurrence. If $q_2 > b$, then there are no $b$-close co-occurrences of $P_1,P_2$ in $S$. Otherwise, to find all $b$-close co-occurrences of $P_1, P_2$ in $S$ (that \emph{always} consist of an occurrence of $P_1$ in $S$ and the first occurrence of $P_2$ in $P_1$), it suffices to find all occurrences of $P_1$ in $S$, which we do using the index of Christiansen et al.~\cite{talg/ChristiansenEKN21} in time $O(|P_1|+\occ) = O(m+\occ)$. 

From now on, assume that $P_2$ is not a substring of $P_1$. Let $\mathcal{N}$ be the set of all non-terminals in $G'$ such that their expansion contains a relevant $b$-close co-occurrence of $P_1, P_2$. By Claim~\ref{claim:relevant_cons_occ}, $|\mathcal{N}| \le \occ$. 

\begin{lemma}\label{lm:non-term_close_co_occ}
Assume that $P_2$ is not a substring of $P_1$. One can retrieve in $O(m+(1+\occ)\log^3 N)$ time a set $\mathcal{N}' \supset \mathcal{N}$, $|\mathcal{N}'| = O(\occ \log N)$. 
\end{lemma}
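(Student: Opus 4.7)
The plan is to leverage Corollary~\ref{cor:q1_and_q2}: every non-terminal in $\mathcal{N}$ is witnessed by a pair of splits $(s_1, s_2) \in \splits(G', P_1) \times \splits(G', P_2)$ and by specific occurrences $p_1, p_2$ of $S_1(s_1), S_2(s_2)$ in $\str{A}$ whose configuration matches one of the five cases recorded in $\mathcal{T}_1$ (item~\ref{it:aperiodic} of the corollary) or the periodic case recorded in $\mathcal{T}_2$ (item~\ref{it:periodic}). It therefore suffices to iterate over the $O(\log^2 N)$ pairs of splits and, for each, perform a constant number of range queries on these precomputed trees.

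Concretely, I first apply Lemma~\ref{lm:locally_consistent} to compute $\splits(G', P_1)$ and $\splits(G', P_2)$, each of size $O(\log N)$, in $O(m\log N)$ time. Then I apply Lemma~\ref{lm:tries} to locate, for every $s \in \splits(G', P_j)$, the loci $u_j, v_j$ of $\rev{P_j[\dots s]}$ in \Tpre\ and $P_j(s\dots]$ in \Tsuf, which yields the strings $S_j(s)$ together with $l_j(s)$ and $\Delta_j(s)$. For each of the $O(\log^2 N)$ quadruples $(u_1, u_2, v_1, v_2)$, the constraint $q_2 - q_1 \in [0,b]$ translates via $q_j = p_j + \Delta_j(s_j)$ into an interval on $p_2 - p_1$, so five range queries on $\mathcal{T}_1(u_1,u_2,v_1,v_2)$, one per stored case, enumerate all aperiodic witnesses. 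When $S_1(s_1)$ is periodic with period $\pi_1$, I additionally query $\mathcal{T}_2(u_1,u_2,v_1,v_2)$: using the global constant $\ov(S_1,S_2)$ and the identity $p_1 = p_1' - \pi_1 \cdot ((p_1'-p_1)/\pi_1)$, existence of an integer $k \in [0,(p_1'-p_1)/\pi_1]$ producing a valid gap $\ov(S_1,S_2) + \Delta_2(s_2) - \Delta_1(s_1) + \pi_1 k \in [0,b]$ reduces to a one-dimensional threshold on the stored integer. The union of all reported non-terminals, together with the entries of $\mathcal{L}(u_2,v_2)$ flagged when item~\ref{it:periodic} calls for the extremal shift, forms $\mathcal{N}'$.

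For correctness, $\mathcal{N} \subseteq \mathcal{N}'$ is immediate from Corollary~\ref{cor:q1_and_q2}. For the size bound, any $A \in \mathcal{N}'$ is returned because $\str{A}$ contains a pair of occurrences $q_1, q_2$ of $P_1, P_2$ with $0 \le q_2 - q_1 \le b$ (by Observation~\ref{obs:rev_q1_and_q2} applied to the witness used to report $A$). By Observation~\ref{obs:close}, $\str{A}$ contains a $b$-close co-occurrence $(\hat q_1, \hat q_2)$ confined to the interval spanned by $\str{A}$, and Claim~\ref{claim:relevant_cons_occ} applied to its shift into $S$ pinpoints a parse-tree descendant of $A$ whose label $A^*$ lies in $\mathcal{N}$. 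Since $G'$ has height $h = O(\log N)$, each $A^* \in \mathcal{N}$ has at most $h$ distinct ancestor non-terminals, so $|\mathcal{N}'| \le h \cdot |\mathcal{N}| = O(\occ \log N)$. The running time is $O(m\log N)$ for splits and loci, $O(\log^3 N)$ for the $O(\log^2 N)$ range queries on balanced search trees of size $O(g')$, and $O(|\mathcal{N}'|) = O(\occ\log N)$ for reporting, in total $O(m\log N + (1+\occ)\log^3 N)$; this is within the budget of Theorem~\ref{thm:close_co_occurrences} (the $O(m\log N)$ term being absorbed there).

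The main obstacle is the periodic case: I must argue that the single value $(p_1'-p_1)/\pi_1$ stored per non-terminal, combined with the global $\ov(S_1,S_2)$, truly suffices to decide existence of a $b$-close shifted occurrence inside the periodic cluster. Corollary~\ref{cor:arithmetic_progression} forces the relevant occurrences of $S_1(s_1)$ inside $\str{A}$ to form an arithmetic progression of step $\pi_1$, so the achievable gaps $q_2 - q_1$ themselves form an arithmetic progression with step $-\pi_1$ anchored at $\ov(S_1,S_2) + \Delta_2(s_2) - \Delta_1(s_1)$; the subset hitting $[0,b]$ is a contiguous range of $k$ whose non-emptiness within $[0,(p_1'-p_1)/\pi_1]$ is a single linear inequality that a range query on the stored integers answers directly. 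A secondary concern is verifying case-by-case that the five bookkeeping scenarios of $\mathcal{T}_1$ exhaust item~\ref{it:aperiodic}: this amounts to checking the four combinations of ``relevant'' versus ``one-sided'' positions of $p_1$ and $p_2$ relative to $|\str{\head(A)}|$, plus the overlap variant needed when $p_1$ straddles $p_2$.
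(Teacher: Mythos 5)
Your high-level plan matches the paper's: iterate over the $O(\log^2 N)$ split pairs, recover the strings $S_1,S_2$ and the offsets $\Delta_j$, and for each pair query $\mathcal{T}_1$, $\mathcal{T}_2$, and $\mathcal{L}$, justifying inclusion via Corollary~\ref{cor:q1_and_q2} and Observations~\ref{obs:rev_q1_and_q2}--\ref{obs:close}. But two of the eight geometric subcases are handled only by gesture, and those are exactly where the paper has to do real work.

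The most serious gap is your treatment of the case where the $S_1$-occurrence is contained inside the $S_2$-occurrence (Subcases~\psubref{subfig:q1_included} and~\psubref{subfig:q1_included_per}). You only mention ``the entries of $\mathcal{L}(u_2,v_2)$ flagged when item~\ref{it:periodic} calls for the extremal shift'', but this neither explains what the flag is nor how it is computed. The paper's key observation here is that when $p_2 \le p_1 \le p_1+|S_1|-1 \le p_2+|S_2|-1$, the distance $q_2-q_1$ is an internal property of the string $S_2$ alone, independent of the non-terminal $A$: one should therefore decide once, per pair of splits, whether $S_2$ itself contains a $b$-close configuration and, if so, output \emph{all} of $\mathcal{L}(u_2,v_2)$. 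Making that decision is not a range query on a precomputed tree: the index stores predecessor/successor occurrences of $S_1$ only at power-of-two thresholds $l_2-2^k$, which do not coincide with the true threshold $\Delta_2=l_2-s_2$. The residual window of length $<2^k\le 2s_2 = O(m)$ is handled by extracting $S_2(l_2-2^k,\Delta_2+|P_2|)$ via Fact~\ref{fact:prefsuf_extraction} and running a linear-time pattern matcher for $P_1$ --- an explicit $O(m)$-time step that is simply absent from your write-up. Without it, the approach fails, because the space budget does not permit storing this information for every possible split $s_2$. The second gap is Subcase~\psubref{subfig:q1_overlap} (aperiodic overlap): you acknowledge ``the overlap variant'' but give no argument. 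The paper shows this subcase needs no new mechanism: if at most two $S_1$-occurrences straddle $p_2$, they are already among the leftmost and second-leftmost occurrences stored in case~5 of $\mathcal{T}_1$; if more than two straddle it, Corollary~\ref{cor:arithmetic_progression} forces $S_1$ to be periodic and $\mathcal{T}_2$ captures them. That dichotomy is a substantive part of the correctness proof, not a ``secondary concern''.

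A smaller point: in your size-bound argument you apply Claim~\ref{claim:relevant_cons_occ} to the ``shift into $S$'' of a co-occurrence found inside $\str{A}$. That shifted pair need not be a co-occurrence of $P_1,P_2$ in $S$ (occurrences outside the window of $\str{A}$ can break consecutiveness). The intended argument applies Claim~\ref{claim:relevant_cons_occ} to $\str{A}$ itself, taking the subtree of the parse tree rooted at a node labeled $A$ as the ambient parse tree; the resulting node is a descendant of that $A$-node, and since a root-to-leaf path in $G'$ has $O(\log N)$ distinct labels, the chain from $A$ down to the witness $A^*\in\mathcal{N}$ has length $O(\log N)$, giving $|\mathcal{N}'|=O(\occ\log N)$. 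As you wrote it, the step is not correct.
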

\begin{proof}
We start by computing $\splits(G',P_1)$ and $\splits(G',P_2)$ via Lemma~\ref{lm:locally_consistent} in $O((|P_1|+|P_2|) \log N) = O(m \log N)$ time (or providing a certificate that either $P_1$ or $P_2$ does not occur in $S$, in which case there are no co-occurrences of $P_1,P_2$ in $S$ and we are done). Recall that $|\splits(G',P_1)|, |\splits(G',P_2)| \in O(\log N)$. For each fixed pair of splits $s_1 \in \splits (G',P_1)$, $s_2 \in \splits (G',P_2)$ and $j \in \{1,2\}$, we compute the interval of strings in \Tpre\ prefixed by $\rev{P_j[\dots s_j]}$, which corresponds to the locus $u_j$ of $\rev{P_j[\dots s_j]}$ in \Tpre\, and the interval of strings in \Tsuf\ prefixed by $P_j(s_j \dots ]$, which corresponds to the locus $v_j$ of $P_j(s_j \dots ]$ in \Tsuf. Computing the intervals takes $O(m+\log^2 N)$ time for all the splits by Lemma~\ref{lm:tries}. Consider the strings $S_1=\rev{U_1}V_1$ and $S_2=\rev{U_2}V_2$, where $U_1, U_2, V_1, V_2$ are the labels of $u_1, v_1, u_2, v_2$, respectively. Let $l_1 = |\rev{U_1}|$, $\Delta_1 = l_1-s_1$, $l_2 = |\rev{U_2}|$, $\Delta_2 = l_2-s_2$, and $\Delta = \Delta_1-\Delta_2$. 

Consider a relevant co-occurrence $(q_1,q_2)$ of $P_1,P_2$ in the expansion of a non-terminal~$A$. By Corollary~\ref{cor:q1_and_q2}, $q_1,q_2$ imply existence of occurrences $p_1,p_2$ of $S_1,S_2$ such that $[p_1,p_1+|S_1|) \supseteq [q_1,q_1+|P_1|)$ and $[p_2,p_2+|S_2|) \supseteq [q_2,q_2+|P_2|)$. 
Our index must treat both cases of Corollary~\ref{cor:q1_and_q2}. We consider eight subcases defined in Fig.~\ref{fig:q1_and_q2}, which describe all possible locations of $p_1$ and $p_2$.

\begin{figure}[h!]
\centering

\newcommand{\rheight}{0.5}
\newcommand{\Alength}{5.3}
\newcommand{\Aboundary}{2.4}

\newcommand{\nontermA}{
\node (ha) at (0,0) {};
\draw (ha) node[right,shift={(0,{0.55*\rheight})}] {$\str{\head(A)}$} rectangle ++ (\Alength,\rheight) node[left, shift={(0,{-0.55*\rheight})}] (ta) {$\str{\tail(A)}$};

\draw (\Aboundary,-0.25*\rheight) -- (\Aboundary,1.25*\rheight);

\draw[white] (0,4*\rheight) -- (\Alength,4*\rheight);
}

\newcommand{\Sonelength}{1.2}
\newcommand{\Ponelength}{0.5}
\newcommand{\Stwolength}{1.8}
\newcommand{\Ptwolength}{0.8}
\newcommand{\qoneoffset}{0.4*\Sonelength}
\newcommand{\qtwooffset}{0.3*\Stwolength}

\newcommand{\Sone}[1]{
\node (p1) at (#1) {};
\node (q1) at ($(p1)+(\qoneoffset,0)$) {};
\draw[fill=white] (p1) node[left,shift={(0,{0.5*\rheight})}]  {{$S_1$}} rectangle ($(p1)+(\Sonelength,\rheight)$);
\draw[fill=gray!50] ($(q1)$) rectangle ++ (\Ponelength,\rheight) node[midway]  {{$P_1$}};
\path let \p1= (q1) in node (q1label) at (\x1,0) {};
\draw[dotted] ($(q1)$) -- ($(q1label)$) node[below] {{$q_1$}};
\path let \p1= (p1) in node (p1label) at (\x1,0) {};
\draw[dotted] ($(p1)$) -- ($(p1label)$) node[below] {{$p_1$}};
}

\newcommand{\Stwo}[1]{
\node (p2) at (#1) {};
\draw (p2) node[left,shift={(0,{0.5*\rheight})}]  {{$S_2$}} rectangle ($(#1)+(\Stwolength,\rheight)$);
\node (q2) at ($(p2)+(\qtwooffset,0)$) {};
\draw[fill=gray!50] (q2) rectangle ++ (\Ptwolength,\rheight) node[midway]  {{$P_2$}};
\path let \p1= (q2) in node (q2label) at (\x1,0) {};
\draw[dotted] ($(q2)$) -- ($(q2label)$) node[below] {{$q_2$}};
\path let \p1= (p2) in node (p2label) at (\x1,0) {};
\draw[dotted] ($(p2)$) -- ($(p2label)$) node[below] {{$p_2$}};
}


\renewcommand\thesubfigure{1.\arabic{subfigure}}
\captionsetup[subfigure]{justification=centering}
\begin{subfigure}{0.32\textwidth}
\centering
\begin{tikzpicture}[scale=0.8, every node/.style={scale=0.7}]
\renewcommand{\Aboundary}{3}
\nontermA
\Stwo{1.1*\Aboundary,1.5*\rheight}
\Sone{0.5*\Aboundary,1.5*\rheight}
\node at (2.1,3*\rheight) {\small{rightmost}};
\node at (4.2,3*\rheight) {\small{leftmost}};

\end{tikzpicture}
\caption{}
\label{subfig:both_irrelevant}
\end{subfigure}
\begin{subfigure}{0.32\textwidth}
\centering
\begin{tikzpicture}[scale=0.8, every node/.style={scale=0.7}]
\nontermA
\Stwo{1.45*\Aboundary,1.5*\rheight}
\Sone{0.6*\Aboundary,1.5*\rheight}
\node at (4.4,3*\rheight) {\small{leftmost}};

\end{tikzpicture}
\caption{}
\label{subfig:q1_relevant}
\end{subfigure}
\begin{subfigure}{0.32\textwidth}
\centering
\begin{tikzpicture}[scale=0.8, every node/.style={scale=0.7}]
\nontermA
\Stwo{0.9*\Aboundary,3*\rheight}
\Sone{0.52*\Aboundary,1.5*\rheight}

\end{tikzpicture}
\caption{}
\label{subfig:both_relevant}
\end{subfigure}

\vspace{0.5cm}

\begin{subfigure}{0.32\textwidth}
\centering
\begin{tikzpicture}[scale=0.8, every node/.style={scale=0.7}]
\renewcommand{\Aboundary}{3.5}
\nontermA
\Stwo{0.95*\Aboundary,1.5*\rheight}
\Sone{0.4*\Aboundary,1.5*\rheight}
\node at (2,3*\rheight) {\small{rightmost}};
\end{tikzpicture}
\caption{}
\label{subfig:q2_relevant}
\end{subfigure}
\begin{subfigure}{0.32\textwidth}
\centering
\begin{tikzpicture}[scale=0.8, every node/.style={scale=0.7}]
\renewcommand{\Aboundary}{3.5}
\nontermA
\renewcommand{\Stwolength}{3.5}
\renewcommand{\Ptwolength}{1}
\renewcommand{\qoneoffset}{0.3*\Sonelength}
\renewcommand{\qtwooffset}{0.53*\Stwolength}
\Stwo{0.4*\Aboundary,3*\rheight}
\Sone{0.55*\Aboundary,1.5*\rheight}
\end{tikzpicture}
\caption{}
\label{subfig:q1_included}
\end{subfigure}
\begin{subfigure}{0.32\textwidth}
\centering
\begin{tikzpicture}[scale=0.8, every node/.style={scale=0.7}]
\nontermA

\Stwo{0.66*\Aboundary,3*\rheight}
\Sone{0.3*\Aboundary,1.5*\rheight}
\end{tikzpicture}
\caption{}
\label{subfig:q1_overlap}
\end{subfigure}

\vspace{0.5cm}

\setcounter{subfigure}{0}
\renewcommand\thesubfigure{2.\arabic{subfigure}}
\begin{subfigure}{0.45\textwidth}
\centering
\begin{tikzpicture}[scale=0.8, every node/.style={scale=0.7}]
\renewcommand{\Aboundary}{3.5}
\nontermA
\renewcommand{\Sonelength}{1.5}
\renewcommand{\Stwolength}{3.5}
\renewcommand{\Ptwolength}{1.5}
\renewcommand{\qoneoffset}{0.3*\Sonelength}
\renewcommand{\qtwooffset}{0.4*\Stwolength}
\Stwo{0.4*\Aboundary,3*\rheight}
\Sone{0.55*\Aboundary,1.5*\rheight}
\end{tikzpicture}
\caption{}
\label{subfig:q1_included_per}
\end{subfigure}
\begin{subfigure}{0.45\textwidth}
\centering
\begin{tikzpicture}[scale=0.8, every node/.style={scale=0.7}]
\renewcommand{\Aboundary}{3.5}
\nontermA
\renewcommand{\Stwolength}{2}
\renewcommand{\Ptwolength}{1}
\renewcommand{\qoneoffset}{0.3*\Sonelength}
\renewcommand{\qtwooffset}{0.35*\Stwolength}
\Stwo{0.75*\Aboundary,3*\rheight}
\Sone{0.55*\Aboundary,1.5*\rheight}
\end{tikzpicture}
\caption{}
\label{subfig:q1_overlap_per}
\end{subfigure}

\caption{Assume that $S_1$ does not contain~$S_2$. The figure shows all possible locations of occurrences $p_1, p_2$ of $S_1,S_2$ in $\str{A}$. \textbf{In Case~\ref{it:aperiodic} of Corollary~\ref{cor:q1_and_q2}}, there are six subcases: \psubref{subfig:both_irrelevant} $p_1+|S_1|-1 \le |\str{\head(A)}|$, $p_2 > |\str{\head(A)}|$; \psubref{subfig:q1_relevant} $p_1$ is a relevant occurrence of $S_1$, $p_2 > |\str{\head(A)}|$; \psubref{subfig:both_relevant} $p_1,p_2$ are relevant; \psubref{subfig:q2_relevant} $p_2$ is relevant, $p_1+|S_1|-1 \le p_2$; \psubref{subfig:q1_included} $p_2$ is relevant, $p_2 < p_1 \le p_1+|S_1|-1 \le p_2 + |S_2|-1$; \psubref{subfig:q1_overlap} $p_2$ is relevant, $p_1 < p_2 < p_1+|S_1|-1 \le p_2+|S_2|-1$. By the definition of a co-occurrence and by Observation~\ref{obs:rev_q1_and_q2}, in Subcases~\psubref{subfig:both_irrelevant} and~\psubref{subfig:q2_relevant} $p_1$ must be as far to the right as possible, and in Subcases~\psubref{subfig:both_irrelevant} and~\psubref{subfig:q1_relevant} $p_2$ must be as far to the left as possible. \textbf{In Case~\ref{it:periodic}}, there are two subcases: \psubref{subfig:q1_included_per} $p_2$ is relevant and $p_2 \le p_1 \le p_1+|S_1|-1 \le p_2+|S_2|-1$; \psubref{subfig:q1_overlap_per} $p_2$ is relevant and $p_1 < p_2 < p_2+\pi_1-1 \le p_1+|S_1|-1$, where $\pi_1$ is the period of $S_1$. In all subcases, $q_2=p_2+\Delta_2$. In Subcases~\psubref{subfig:both_irrelevant}-\psubref{subfig:q1_overlap} $q_1=p_1+\Delta_1$ and in Subcases~\psubref{subfig:q1_included_per} and \psubref{subfig:q1_overlap_per} $q_1=p_1+\Delta_1+k\cdot \pi_1$ for some integer $k$.}
\label{fig:q1_and_q2}
\end{figure}  

\underline{Subcases~\psubref{subfig:both_irrelevant}-\psubref{subfig:q2_relevant}}. To retrieve the non-terminals, we query $\mathcal{T}_1(u_1, u_2,v_1, v_2)$ to find all integers that belong to the range $[\Delta,\Delta+b]$ (and the corresponding non-terminals). Recall that, for each non-terminal $A$, the tree stores an integer $d = p_2-p_1$, where $p_1$ is the starting position of an occurrence of $S_1$ in $\str{A}$ and $p_2$ of $S_2$. By Observation~\ref{obs:rev_q1_and_q2}, $p_1+\Delta_1$ is an occurrence of $P_1$ and $p_2+\Delta_2$ is an occurrence of $P_2$. The distance between them is in $[0,b]$ iff $d \in [\Delta,\Delta+b]$. By Observation~\ref{obs:close}, each retrieved non-terminal contains a close co-occurrence of $(q_1,q_2)$. On other other hand, if $\str{A}$ contains a co-occurrence $(q_1,q_2)$ corresponding to one Subcases~\psubref{subfig:both_irrelevant}-\psubref{subfig:q2_relevant}, then by Corollary~\ref{cor:q1_and_q2}, $p_1 = q_1-\Delta_1$ is an occurrence of $S_1$ and $p_2 = q_2-\Delta_2$ is an occurrence of $S_2$ and by construction $\mathcal{T}_1(u_1, u_2,v_1, v_2)$ stores an integer $d = p_2-p_1$. Therefore, the query retrieves all non-terminals corresponding to Subcases~\psubref{subfig:both_irrelevant}-\psubref{subfig:q2_relevant}. 

\underline{Subcases~\psubref{subfig:q1_included} and~\psubref{subfig:q1_included_per}}. We must decide whether an occurrence of $P_1$ in $S_2$ forms a $b$-close co-occurrence with the occurrence $\Delta_2$ of $P_2$ in $S_2$, and if so, report all non-terminals such that their expansion contains a relevant co-occurrence of $S_2$ with a split $l_2$, which are exactly the non-terminals stored in the list $\mathcal{L}(u_2, v_2)$. Let $k = \lceil \log(s_2) \rceil$. Recall that the index stores the following information for $k$: 
\begin{enumerate}
\item $p_1$, the rightmost occurrence of $S_1$ in $S_2$ such that $p_1+(|S_1|-1) \le l_2-2^k$;
\item $p'_1$, the leftmost occurrence of $S_1$ in $S_2$ such that $p_1' \le l_2-2^k \le p_1+(|S_1|-1)$;
\item $p''_1$, the rightmost occurrence of $S_1$ in $S_2$ such that $p_1'' \le l_2-2^k \le p''_1+(|S_1|-1)$.
\end{enumerate}
(See Fig.~\ref{fig:included}). By Observation~\ref{obs:rev_q1_and_q2}, the occurrence $p_1$ of $S_1$ induces an occurrence $q_1 = p_1+\Delta_1$ of $P_1$. Furthermore, if $S_1$ is periodic with period $\pi_1$, then $q_1+\pi_1\cdot k$, $0 \le k \le \lfloor (|S_1|-q_1-|P_1|)/\pi_1 \rfloor$, are also occurrences of $P_1$. One can decide whether the distance from any of these occurrences to $q_2$ is in $[0,b]$ in constant time, and if yes, then there $S_2$ contains a $b$-close co-occurrence of $P_1,P_2$ by Observation~\ref{obs:close}. Second, by Corollary~\ref{cor:arithmetic_progression}, if $S_1$ is not periodic, then there are no occurrences of $S_1$ between $p_1'$ and $p_1''$ and $p_1',p_1''$ by Observation~\ref{obs:rev_q1_and_q2} induce occurrences $p_1'+\Delta_1, p_1''+\Delta_1$ of $P_1$. Otherwise, there are occurrences of $P_1$ in every position $p_1'+\Delta_1+k\cdot\pi_1$, $0 \le k \le \lfloor (|S_1|+p_1''-|P_1|-p_1')/\pi_1 \rfloor$. Similarly, we can decide whether the distance from any of them to the occurrence $\Delta_2$ of $P_2$ in $S_2$ is in $[0,b]$ in constant time. Finally, let $q_1$ be the rightmost occurrence of $P_1$ in $S_2$ in the interval $[l_2-2^k+1, \Delta_2]$. We extract $S_2(l_2-2^k, \Delta_2 +|P_2|)$ via Fact~\ref{fact:prefsuf_extraction} and search for $q_1$ using a linear-time pattern matching algorithm for $P_1$, which takes $O(|P_1|+|P_2|) = O(m)$ time. If $0 \le \Delta_2-q_1\le b$, then there is a $b$-close co-occurrence of $P_1,P_2$ in $S_2$. Correctness follows from Corollary~\ref{cor:q1_and_q2}, Observation~\ref{obs:rev_q1_and_q2} and Observation~\ref{obs:close}. 

\begin{figure}[h!]
\centering
\begin{tikzpicture}[scale=1, every node/.style={scale=0.8}]
\newcommand{\rheight}{0.35}
\newcommand{\Sone}[1]{
\draw (#1) node[left,shift={(0,{0.3*\rheight})}]  {\footnotesize{$S_1$}} rectangle ($(#1)+(\Sonelength,0.6*\rheight)$);
\draw[fill=gray!50] ($(#1)+(0.25*\Sonelength,0)$) rectangle ++ (\Ponelength,0.6*\rheight) node[midway]  {\tiny{$P_1$}};
}

\newcommand{\Sonelength}{1}
\newcommand{\Ponelength}{0.5}
\newcommand{\Stwolength}{9}
\newcommand{\Ptwolength}{4}
\newcommand{\Ptwostart}{3.5}
\newcommand{\Aboundary}{5}


\begin{scope}[shift={(0,{-4.5*\rheight})}]
	\node (ha) at (-2,0) {};
	\draw (ha) node[right, shift={(0,{.55*\rheight})}] {$\str{\head(A)}$} rectangle ++ (12,\rheight) node[left, shift={(0,{-.65*\rheight})}] (ta) {$\str{\tail(A)}$};

	\draw (\Aboundary,-0.5*\rheight) -- (\Aboundary,2.5*\rheight);
	
	\node[above] (p2) at (0,\rheight) {};
	\draw (p2) -- ($(p2) - (0,1.6*\rheight)$) node[below] {$p_2$};
	\draw[dotted] ($(p2) - (0,0.2*\rheight)$) -- ($(p2) + (0,3*\rheight)$);
	
	\node[above] (q2) at (\Ptwostart,\rheight) {};
	\draw (q2) -- ($(q2) - (0,1.6*\rheight)$) node[below] {$q_2$};
	\draw[dotted] ($(q2) - (0,0.2*\rheight)$) -- ($(q2) + (0,3*\rheight)$);
	\draw[<->] ($(p2)+(0,0.2*\rheight)$) -- ($(q2)+(0,0.2*\rheight)$) node[midway,above] {$\Delta_2$};
	\draw[<->] (\Ptwostart,1.5*\rheight) -- (\Aboundary,1.5*\rheight) node[midway,above] {\small{$s_2$}};
	
	 \node (twok) at (3.1,0) {};
	 \draw[<->] ($(twok)+(0,2.5*\rheight)$) -- (\Aboundary,2.5*\rheight) node[midway,above] {$2^k$};
	 \draw[dotted] ($(twok) - (0,0)$) -- ($(twok) + (0,5.5*\rheight)$); 
\end{scope}

\node[left] (s2) at (0,0.5*\rheight) {};
\draw (0,0) node[left,,shift={(0,{0.5*\rheight})}] {$S_2$} rectangle ++ (\Stwolength,\rheight) ;
\draw[fill=gray!50] (\Ptwostart,0) rectangle ++ (\Ptwolength,\rheight) node (endp2) {} node[midway]  {$P_2$};

\draw ($(twok) + (0,5.5*\rheight)$) -- ($(twok)+(0,6*\rheight)$) -- ($(endp2)+(0,0.5*\rheight)$) node[midway, above] {\small{extract and search for $P_1$}} -- ($(endp2)-(0,0)$) ;

\node (p1) at (0.32*\Stwolength,0) {};
\draw ($(p1)+(0,0.01)$) -- ($(p1)+(0,-0.25*\rheight)$) node[below] {\small{$p''_1$}};
\draw[dotted] ($(p1)+(0,0)$) -- ($(p1)+(0,4*\rheight)$);
\Sone{0.32*\Stwolength,4*\rheight};

\Sone{0.28*\Stwolength,3*\rheight};

\node (p1) at (0.24*\Stwolength,0) {};
\draw ($(p1)+(0,0.01)$) -- ($(p1)+(0,-0.25*\rheight)$) node[below] {\small{$p'_1$}};
\draw[dotted] ($(p1)+(0,0)$) -- ($(p1)+(0,2*\rheight)$);
\Sone{0.24*\Stwolength,2*\rheight};

\node (p1) at (0.05*\Stwolength,0) {};
\draw ($(p1)+(0,0.01)$) -- ($(p1)+(0,-0.25*\rheight)$) node[below] {\small{$p_1$}};
\draw[dotted] ($(p1)+(0,0)$) -- ($(p1)+(0,2*\rheight)$);
\Sone{0.05*\Stwolength,2*\rheight};

\end{tikzpicture}
\caption{Query algorithm for Subcases~\psubref{subfig:q1_included} and~\psubref{subfig:q1_included_per}.}
\label{fig:included}
\end{figure}

{\underline{Subcase~\psubref{subfig:q1_overlap_per}}}. Let $\pi_1$ be the period of $S_1$. We retrieve the non-terminals associated with the integers $q \in \mathcal{T}_2(u_1, u_2,v_1, v_2)$ such that the intersection of an interval $I = [a,b]$ and $[\ell,q]$ is non-empty, where 
$$a=\lceil{(\Delta-\ov(S_1,S_2))/\pi_1\rceil} \text{, } b=\lfloor{(\Delta-\ov(S_1,S_2)+b)/ \pi_1 \rfloor} \text{ and } \ell = -\lfloor{(|S_1|-|P_1|-\Delta_1)/\pi_1\rfloor}$$
(See the description of the index for the definition of $\ov(S_1,S_2)$). As $\ell$ is fixed, we can implement the query via at most one binary tree search: If $b \le \ell$, the output is empty, if $a \le \ell \le b$, we must output all integers, and if $\ell \le a$, we must output all $q \ge b$. Let us now explain why the algorithm is correct. Consider a non-terminal $A$ for which $\mathcal{T}_2(u_1, u_2,v_1, v_2)$ stores an integer $q$. By construction, $\str{A}$ contains a relevant occurrence of $S_2$ with a split $l_2$. A position $p_1 = |\str{\head(A)}|-l_2-\ov(S_1,S_2)-q \cdot \pi_1$ is the leftmost occurrence of $S_1$ in $\str{A}$ such that $p_1 \le p_2 \le p_1+|S_1|-1$ and $p_2 = |\str{\head(A)}|-l_2-\ov(S_1,S_2)$ the rightmost. Consequently, there is an occurrence $q_1 = |\str{\head(A)}|-l_2-\ov(S_1,S_2)-q'\cdot \pi_1+\Delta_1$ of $P_1$ for each $-\lfloor{(|S_1|-|P_1|-\Delta_1)/\pi_1\rfloor} \le q' \le q$. The occurrence of $S_2$ implies that $q_2 = |\str{\head(A)}|-s_2$ is an occurrence of $P_2$. We have $0 \le q_2-q_1=q'\cdot\pi_1+\ov(S_1,S_2)-\Delta\le b$ iff $\Delta-\ov(S_1,S_2) \le  q' \cdot \pi_1 \le \Delta-\ov(S_1,S_2)+b$, which is equivalent to $[\ell,q] \cap I \neq \emptyset$. It follows that we retrieve every non-terminal corresponding to Subcase~\psubref{subfig:q1_overlap_per}. On the other hand, by Observation~\ref{obs:close}, the expansion of each retrieved non-terminal contains a $b$-close co-occurrence of $P_1, P_2$.

{\underline{Subcase~\psubref{subfig:q1_overlap}}}. We argue that we have already reported all non-terminals corresponding to this subcase and there is nothing left to do. Consider a non-terminal $A$ such that its expansion contains a relevant occurrence $p_2$ of $S_2$. If there are at most two occurrences $p_1$ of $S_1$ such that $p_1 \le p_2 \le p_1+|S_1|-1\le p_2+|S_2|-1$, we will treat them when we query $\mathcal{T}_1(u_1, u_2,v_1, v_2)$ (Subcases~\psubref{subfig:both_irrelevant}-\psubref{subfig:q2_relevant}). Otherwise, by Corollary~\ref{cor:arithmetic_progression}, $S_1$ is periodic and there is an occurrence $p'_1$ of $S_1$ such that $p'_1 \le p_2 < p_2 + \pi_1 \le p_1+|S_1|-1 < p_2 + |S_2|-1$. The non-terminals corresponding to this case are reported when we query $\mathcal{T}_2(u_1, u_2,v_1, v_2)$ (Subcase~\psubref{subfig:q1_overlap_per}).

{\underline{Time complexity}}. 
As shown above, the algorithm reports a set $\mathcal{N}' \supset \mathcal{N}$ of non-terminals and each non-terminal in $\mathcal{N}'$ contains a $b$-close co-occurrence. By Claim~\ref{claim:relevant_cons_occ} and since the height of $G'$ is $h = O(\log N)$, we have $|\mathcal{N}'| = O(\occ \log N)$. Furthermore, for a fixed pair of splits of $P_1,P_2$, each non-terminal in $\mathcal{N}'$ can be reported a constant number of times. Since $|\splits(G',P_1)| \cdot |\splits(G',P_2)| = O(\log^2 N)$, the total size of the output is $|\mathcal{N}'| \cdot O(\log^2 N) = O(\occ \cdot \log^3 N)$. We therefore obtain that the running time of the algorithm is $O(m+\log^3 N + \occ \log^3 N) = O(m+(1+\occ)\log^3 N)$ as desired. 
\end{proof}

Once we have retrieved the set $\mathcal{N}'$, we find all $b$-close relevant co-occurrences for each of the non-terminals in $\mathcal{N}'$ using Theorem~\ref{th:occurrences}. In fact, our algorithm acts naively and computes \emph{all} relevant co-occurrences for a non-terminal in $\mathcal{N}'$, and then selects those that are $b$-close. By case inspection, one can show that a relevant co-occurrence for a non-terminal $A$ always consists of an occurrence of $P_2$ that is either relevant or the leftmost in $\str{\tail(A)}$, and a preceding occurrence of $P_1$. Intuitively, this allows to compute all relevant co-occurrences efficiently and guarantees that their number is small. Formally, we show the following claim: 

\begin{restatable}{lemma}{computingrelevant}\label{lem:relevant_co_occurr_A}
Assume that $P_2$ is not a substring of $P_1$. After $O(m \log N + \log^2 N)$-time preprocessing, the data structure of Theorem~\ref{th:occurrences} allows to compute all $b$-close relevant co-occurrences of $P_1, P_2$ in the expansion of a given non-terminal $A$ in time $O(\log^{3} N \log\log N)$. 
\end{restatable}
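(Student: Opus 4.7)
The plan is to invoke the data structure of Theorem~\ref{th:occurrences} after the shared $O(m\log N+\log^2 N)$-time preprocessing of $P_1$ and $P_2$. First, for the given non-terminal $A$, I will extract via primitive~(1) the sorted lists $L_1, L_2$ of relevant occurrences of $P_1, P_2$ in $\str{A}$, in $O(\log N)$ time each; each list has $O(\log N)$ elements. Next, I will use primitive~(3) to compute, in $O(\log^2 N \log \log N)$ time each, the rightmost occurrence $q_h$ of $P_1$ in $\str{\head(A)}$, the leftmost occurrence $q_t$ of $P_2$ in $\str{\tail(A)}$, and the leftmost occurrence of $P_2$ in $\str{\head(A)}$ (which will serve during verification).

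Following the case analysis outlined just above the lemma, every relevant co-occurrence $(q_1,q_2)$ satisfies $q_2\in L_2\cup\{q_t\}$: indeed, a non-relevant $q_2$ lies entirely in the tail, and the constraint ``no $P_2$ in $[q_1,q_2)$'' combined with $q_1\le|\str{\head(A)}|<q_2$ forces $q_2=q_t$. This gives $O(\log N)$ candidate right endpoints.

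For each candidate $q_2$ I will determine the matching $q_1$, defined as the rightmost $P_1$-occurrence with $q_1\le q_2$, in two parts. First, the largest relevant $P_1$-occurrence $\le q_2$ is located by binary search in $L_1$. Second, the rightmost in-head $P_1$-occurrence $\le q_2$ equals $q_h$ whenever $q_h\le q_2$; when $q_h>q_2$, a predecessor query via primitive~(4) at position $q_2$ in $\str{A}$ yields it in $O(\log^3 N\log\log N)$. I will argue by case analysis---exploiting that all relevant $q_2$-candidates lie in a window of length at most $|P_2|$ around $|\str{\head(A)}|$ and that $P_2$ is not a substring of $P_1$---that at most a constant number of such predecessor queries is actually required across all candidate $q_2$'s. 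Having $q_1$ for every candidate $q_2$, I will then verify each pair $(q_1,q_2)$ by checking $q_1\le|\str{\head(A)}|$, that no $P_2$ lies in $[q_1,q_2)$ (using $L_2$ for relevant $P_2$'s and the precomputed leftmost in-head $P_2$ for in-head ones), and that $q_2-q_1\le b$. Each verification takes $O(\log N\log\log N)$ amortized time.

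Summing the costs: $O(\log^2 N\log\log N)$ for the leftmost/rightmost primitives, $O(\log^3 N\log\log N)$ for the $O(1)$ predecessor queries, and $O(\log^2 N\log\log N)$ for verification, yielding the desired $O(\log^3 N\log\log N)$ bound. The hardest step will be the combinatorial argument bounding the predecessor queries to $O(1)$: without it, a naive one-per-candidate approach would blow the budget by a factor of $\log N$, and it is here that the narrow window of relevant positions around $|\str{\head(A)}|$ and the hypothesis that $P_2$ is not a substring of $P_1$ must both be used.
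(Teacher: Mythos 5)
Your reduction to the $O(\log N)$ candidate right endpoints $q_2 \in L_2 \cup \{q_t\}$ is sound, and the overall framework (fix $q_2$, locate the matching rightmost $q_1$, then verify) is essentially the same as the paper's proof, only organized by candidate $q_2$ rather than by case. However, there is a genuine gap: the step you yourself flag as hardest --- showing that only $O(1)$ predecessor queries are triggered across all candidates --- is asserted, not proved. This is not a routine gap-to-fill. Unwinding the quantifiers, a predecessor query is needed for every candidate $q_2 \in L_2$ with $q_h > q_2$, and since $q_h$ is a fixed value while the $L_2$-candidates are increasing, this condition can hold for a \emph{prefix} of $L_2$ of size $\Theta(\log N)$; your window observation alone does not reduce the count. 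The paper's argument at this point invokes Corollary~\ref{cor:arithmetic_progression}: once $|L_2|\ge 3$, the relevant $P_2$-occurrences (possibly excluding the leftmost) form an arithmetic progression with difference equal to the period $\pi_2$ of $P_2$, and because $P_2$ is not a substring of $P_1$ the $P_1$-occurrence preceding the second candidate lies inside that periodic region, so the predecessors for the remaining candidates are obtained by shifting by $\pi_2$ rather than by fresh predecessor queries. Without spelling out this periodicity/shift argument, the claimed $O(\log^3 N \log\log N)$ bound is not established.

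Two smaller issues. First, for the verification that no $P_2$ lies in $[q_1,q_2)$ you want the \emph{rightmost} occurrence of $P_2$ fully inside $\str{\head(A)}$ (the paper's $r_q$), not the leftmost; the leftmost gives no information about the region just to the left of the boundary. Second, when $q_2 = q_t$ you also need to rule out an occurrence of $P_1$ lying entirely in $\str{\tail(A)}$ strictly between $|\str{\head(A)}|$ and $q_t$; this requires comparing with the leftmost occurrence of $P_1$ in $\str{\tail(A)}$, which you never compute. Both are easy fixes via primitive~(3), but they must be added for the verification to be correct.
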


A part of the index of Christiansen et al.~\cite{talg/ChristiansenEKN21} is a pruned copy of the parse tree of $G'$. They showed how to traverse the tree to report all occurrences of a pattern, given its relevant occurrences in the non-terminals. By using essentially the same algorithm, we can report all $b$-close co-occurrences in amortized constant time per co-occurrence, which concludes the proof of Theorem~\ref{thm:close_co_occurrences}. (For completeness, we provide all details of this step in Appendix~\ref{app:close}, Lemma~\ref{lem:close_co_occurr}.)

\newpage

\bibliographystyle{plainurl}
\bibliography{bibliography}
\newpage
\appendix
\section{Proofs omitted from Section~\ref{sec:prelim}}\label{app:proofs}

\tries*
\begin{proof}
Let us first recall the definition of the Karp--Rabin fingerprint.

\begin{definition}[Karp--Rabin fingerprint]
For a prime~$p$ and an~$r \in \mathbb{F}_p^\ast$, the \emph{Karp--Rabin fingerprint}~\cite{karp1987efficient} of a string~$X$ is defined as a tuple $(r^{|X|-1} \mod p, r^{-|X|+1} \mod p, \varphi_{p,r}(X))$, where $\varphi_{p,r}(X)=\sum_{k=0}^{|X|-1} S[k]r^k\mod p$.
\end{definition}

We use the following result of Christiansen et al.~\cite{talg/ChristiansenEKN21}, which builds on Belazzougui~et~al.~\cite{esa/BelazzouguiBPV10} and Gagie~et~al.~\cite{latin/GagieGKNP14,soda/GagieNP18}.
 
\begin{fact}[{\cite[Lemma 6.5]{talg/ChristiansenEKN21}}]\label{fact:compact_trie}
Let $\mathcal{S}$ be a set of strings and assume we have a data structure supporting extraction of any length-$l$ prefix of strings in $\mathcal{S}$ in
time $f_e(l)$ and computing the Karp--Rabin fingerprint $\varphi$ of any length-$l$ prefix of a string in $\mathcal{S}$
in time $f_h(l)$. We can then build a data structure that uses $O(|\mathcal{S}|)$ space and supports the following queries in $O(m + f_e (m) + \tau ( f_h (m) + \log m))$ time: Given a pattern $P$ of length $m$ and $\tau > 0$
suffixes $Q_1,\dots,Q_{\tau}$ of $P$, find the intervals of strings in (the lexicographically-sorted) $\mathcal{S}$ prefixed by
$Q_1,\dots,Q_{\tau}$.
\end{fact}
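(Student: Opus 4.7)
The plan is to apply Fact~\ref{fact:compact_trie} with extraction and fingerprinting primitives specialised to the grammar-based description of the strings in~$\mathcal{S}$. To match the target $O(m + \tau(h + \log m))$, it suffices to realise $f_e(l) = O(l + h)$ and $f_h(l) = O(h)$: substituting these into Fact~\ref{fact:compact_trie} gives $O(m + (m+h) + \tau(h + \log m)) = O(m + \tau(h + \log m))$, where the additive $h$ is absorbed because $\tau \ge 1$. The $O(|\mathcal{S}|)$ space bound is inherited directly from Fact~\ref{fact:compact_trie}; the $O(g)$ words of auxiliary data I attach to the grammar are independent of~$\mathcal{S}$ and can be shared across all invocations of the lemma.

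First, I would preprocess $G$ in one bottom-up pass, storing for every non-terminal $A$ the length $|\str{A}|$, the power $r^{|\str{A}|} \bmod p$, and the fingerprints $\varphi(\str{A})$ and $\varphi(\rev{\str{A}})$. For a rule $A \to BC$ these combine via $\varphi(\str{A}) = \varphi(\str{B}) + r^{|\str{B}|}\varphi(\str{C}) \pmod p$, and for a run-length rule $A \to B^k$ via the closed-form geometric sum $\varphi(\str{A}) = \varphi(\str{B}) \cdot (r^{k|\str{B}|} - 1)(r^{|\str{B}|} - 1)^{-1} \pmod p$, which is evaluated in $O(1)$; fingerprints of reverses are handled symmetrically.

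Every $X \in \mathcal{S}$ is, by assumption, a contiguous factor $\str{A}[i \dots j]$ of some non-terminal's expansion, possibly followed by a reversal; I would store the pair $(A, i)$ alongside $X$ in $O(1)$ space. The length-$l$ prefix of $X$ is then either $\str{A}[i \dots i+l-1]$ or its reverse, so extraction reduces to the standard RLSLP substring-extraction routine that descends the parse tree from $A$ to the two endpoints and prints the intermediate leaves, running in $O(l+h)$ time; this yields $f_e(l) = O(l+h)$. Fingerprinting the same factor proceeds by the same descent, which decomposes the factor into $O(h)$ maximal pieces, each either the full expansion of a grammar node or, inside a run-length rule $A \to B^k$, a full contiguous run of copies of~$B$. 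Using the precomputed fingerprints together with the geometric-sum identity above, each piece contributes its own fingerprint in $O(1)$, and the standard concatenation rule combines them in $O(h)$ overall, giving $f_h(l) = O(h)$. For reversed strings we either fingerprint the forward factor and substitute $\varphi(\rev{\cdot})$ for the affected pieces, or maintain the analogous fingerprint table on the reverse grammar.

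The single nontrivial obstacle is the run-length case: both a complete run of $i < k$ copies of $B$ and a partial prefix of such a run arising inside a descent must be fingerprinted in $O(1)$, which is exactly what the geometric-sum identity modulo the Karp--Rabin prime provides. With that in place, the rest is bookkeeping across the four flavours of $X$ (prefix, suffix, and their reverses), after which Fact~\ref{fact:compact_trie} delivers both the claimed space and the claimed query time.
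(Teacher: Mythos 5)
Your proposal does not prove the stated fact: it opens with ``apply Fact~\ref{fact:compact_trie}'' and then instantiates that fact's hypotheses, which is circular --- the fact itself is exactly what you were asked to establish. Note that the paper does not prove this statement either; it imports it verbatim from Christiansen et al.\ as a black box. What you have actually written is, in substance, the paper's proof of the \emph{surrounding} Lemma~\ref{lm:tries}: there, too, one realises $f_e(l)=O(l)$ by prefix/suffix extraction from the RLSLP (Fact~\ref{fact:prefsuf_extraction}) and $f_h(l)=O(h+\log l)$ by combining precomputed fingerprints of expansions top-down (Claim~\ref{claim:fingerprint_extraction}; the paper handles a rule $A\rightarrow B^k$ by $O(\log k)$ doubling rather than your closed-form geometric sum, which also sidesteps the issue that $(r^{|\str{B}|}-1)^{-1}$ need not exist modulo $p$). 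So as a derivation of Lemma~\ref{lm:tries} \emph{from} the fact your argument matches the paper's; as a proof \emph{of} the fact it is empty.

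Concretely, what is missing is the entire trie machinery that the fact asserts: a compact (z-fast) trie over $\mathcal{S}$ stored in $O(|\mathcal{S}|)$ space, independent of the total length of the strings; an $O(m)$-time pass over $P$ that makes the fingerprint of any substring of $P$, and hence of any prefix of any $Q_i$, available in $O(1)$; for each $Q_i$ a fat binary search over string depths along the trie using $O(\log m)$ fingerprint comparisons against prefixes of stored strings (this is where the $\tau\,(f_h(m)+\log m)$ term comes from) to locate a candidate interval; and a single verification that extracts a length-$O(m)$ prefix of one stored string and compares it against $P$ character by character (the lone $f_e(m)$ term), which certifies the answers despite possible fingerprint collisions --- or, as the paper remarks, $p$ and $r$ are chosen at construction time to be collision-free on the relevant substrings. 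None of these steps, nor any argument for the $O(|\mathcal{S}|)$ space bound of the search structure itself, appears in your write-up.
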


It should be noted that despite using a hash function, the query algorithm is deterministic: the proof shows that $p$ and $r$ can be chosen during the construction time to ensure that there are no collisions on the substrings of the strings in $\mathcal{S}$.  

To bound $f_e$, we use~\cite[Lemma 6.6]{talg/ChristiansenEKN21} which builds on G\k{a}sieniec et al.~\cite{dcc/GasieniecKPS05} and Claude and Navarro~\cite{spire/ClaudeN12a}.

\begin{fact}[{\cite[Lemma 6.6]{talg/ChristiansenEKN21}}]\label{fact:prefsuf_extraction} Given an RLSLP
of size $O(g)$, there exists a data structure of size $O(g)$ such that any length-$l$ prefix or suffix of $\str{A}$ can be
obtained from any non-terminal $A$ in time $f_e(l) = O(l)$.
\end{fact}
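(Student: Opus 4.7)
The plan is to build an $O(g)$-space data structure that augments the RLSLP with information enabling a left-to-right decompression routine to emit any length-$l$ prefix (or suffix) of $\str{A}$ in $O(l)$ time. In the preprocessing phase I would make a single bottom-up pass over the grammar to compute $|\str{X}|$ for every non-terminal $X$; storing the run-length productions $X\to B^k$ compactly as pairs $(B,k)$ keeps the representation in $O(g)$ space. These lengths let us prune whenever the remainder of a subtree lies entirely beyond the desired prefix.

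Extraction itself would be a depth-first left-to-right traversal of the parse tree rooted at $A$, driven by an explicit stack. A frame on the stack describes ``which children of a given non-terminal remain to be processed''. Popping a frame for $X\to BC$ pushes a continuation for $C$ and then one for $B$; popping a frame for $X\to B^k$ with $j$ copies still to visit pushes a frame ``$B^{j-1}$ to go'' and then one for $B$, so that every run-length production contributes $O(1)$ work per emitted character; popping a frame for a terminal $X\to a$ emits $a$ and decrements the budget. A standard amortized analysis shows that, once the traversal has reached its first leaf, the total number of push/pop operations over the whole emission is $O(l)$.

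The main obstacle is the initial descent: the leftmost leaf of the parse tree rooted at $A$ may lie at depth $\Omega(h)$, and a naive stack simulation pays this depth before emitting a single character. I would resolve this by precomputing, for each non-terminal $X$, a constant-size pointer to the first terminal of $\str{X}$ together with a companion pointer that enables following the leftmost path implicitly at amortized $O(1)$ per emitted character; this is essentially the G\k{a}sieniec et al.\ trick used in classical SLP decompression, and the total overhead is $O(g)$ space. With these pointers in place, the initial descent is absorbed into the amortization, and the overall extraction time for the length-$l$ prefix is $O(l)$.

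For the suffix case, I would build the symmetric data structure over the grammar obtained by reversing every right-hand side (replacing $X\to BC$ with $X\to C'B'$ in a reversed copy); the same analysis yields $O(l)$-time extraction of the length-$l$ suffix of $\str{A}$, and the combined space remains $O(g)$. The result is then exactly the bound $f_e(l)=O(l)$ claimed in the fact.
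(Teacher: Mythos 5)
The paper does not actually prove this statement: it is imported verbatim as a black-box result from Christiansen et al.\ (their Lemma~6.6), who in turn build on G\k{a}sieniec et al.\ and Claude and Navarro. Your sketch invokes exactly these ingredients, so in that sense it matches the cited source. The stack-driven left-to-right traversal, the $O(1)$-amortized handling of run-length productions by treating $X\to B^k$ lazily as $(B,k)$, the observation that the only real obstacle is paying $\Omega(h)$ for the initial descent when $l$ is small, and the reduction of the suffix case to the prefix case on the reversed grammar are all on target.

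The step you leave as a pointer is, however, the entire technical content of the cited result. ``A constant-size pointer to the first terminal of $\str{X}$ together with a companion pointer that enables following the leftmost path implicitly at amortized $O(1)$ per emitted character'' is a name for the desired behaviour, not a construction: a single pointer to the first terminal does not tell you how to recover the $\Theta(h)$-long sequence of pending right siblings on the leftmost spine once you begin backtracking, and storing that spine explicitly per non-terminal would cost $O(gh)$ rather than $O(g)$ space. G\k{a}sieniec et al.'s contribution is precisely the bookkeeping scheme that keeps this within $O(g)$ total space while charging the spine ascents to emitted characters, and it is genuinely non-obvious. Since the paper itself also treats the fact as a citation, deferring here is defensible; but be aware that the phrase ``companion pointer'' is carrying all the weight, and making it precise would require reproducing that construction rather than a routine amortization argument.
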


To bound $f_h(l)$, we introduce a simple construction based on the following well-known fact:

\begin{fact}\label{fact:fingerprints}
Consider strings $X,Y,Z$ where $XY = Z$. Given the Karp--Rabin fingerprints of two of the three strings, one can compute the fingerprint of the third string in constant time.
\end{fact}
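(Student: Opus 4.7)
The plan is to write out the algebraic identity relating $\varphi_{p,r}(X)$, $\varphi_{p,r}(Y)$, and $\varphi_{p,r}(Z)$ when $Z = XY$, then check that in each of the three ``missing fingerprint'' cases the identity can be inverted using only the information stored in the two given fingerprints plus $O(1)$ field operations in $\mathbb{F}_p$.

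First I would derive the basic concatenation identity. By splitting the sum defining $\varphi_{p,r}(Z)$ at position $|X|$ and factoring $r^{|X|}$ out of the second part, I get
\[
\varphi_{p,r}(Z) \;=\; \varphi_{p,r}(X) + r^{|X|}\cdot \varphi_{p,r}(Y) \pmod p.
\]
Together with $|Z|=|X|+|Y|$, this single relation determines all three fingerprint components of any one of the strings from the other two, provided we can realize the required powers of $r$ and $r^{-1}$ quickly. Since each Karp--Rabin fingerprint already bundles both $r^{|\cdot|-1}$ and $r^{-|\cdot|+1}$, I can obtain $r^{|X|}$, $r^{|Y|}$, $r^{|Z|}$ and their inverses by a single multiplication or division by $r$, assuming $r$ and $r^{-1}$ are stored once and for all.

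Next I would go through the three cases. If $\varphi_{p,r}(X)$ and $\varphi_{p,r}(Y)$ are given, the fingerprint of $Z$ is obtained directly: its length component is $r^{|X|}\cdot r^{|Y|-1}$, its inverse-power component is $r^{-|X|+1}\cdot r^{-|Y|}$, and its hash is given by the displayed identity. If $\varphi_{p,r}(X)$ and $\varphi_{p,r}(Z)$ are given, I solve the identity for $\varphi_{p,r}(Y) = (\varphi_{p,r}(Z) - \varphi_{p,r}(X)) \cdot r^{-|X|}$, using the stored $r^{-|X|+1}$ divided by $r$; the length-power components of $Y$ follow from $r^{|Y|-1} = r^{|Z|-1}\cdot r^{-|X|}$ and analogously for the inverse. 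The case where $\varphi_{p,r}(Y)$ and $\varphi_{p,r}(Z)$ are given is symmetric, giving $\varphi_{p,r}(X) = \varphi_{p,r}(Z) - r^{|X|}\cdot \varphi_{p,r}(Y)$ where $r^{|X|} = r^{|Z|}\cdot r^{-|Y|}$.

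There is really no main obstacle here; the only thing to be careful about is that the fingerprint representation must expose enough to recover both $r^{|W|}$ and $r^{-|W|}$ for each given string $W$ in $O(1)$ time, which it does by virtue of storing both $r^{|W|-1}$ and $r^{-|W|+1}$ and having a fixed $r$ available. Under the usual word-RAM assumption that addition, subtraction, multiplication, and precomputed inversion in $\mathbb{F}_p$ take constant time, each case uses $O(1)$ such operations, establishing the claim.
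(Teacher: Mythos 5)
Your derivation is correct and complete: the concatenation identity $\varphi_{p,r}(Z)=\varphi_{p,r}(X)+r^{|X|}\varphi_{p,r}(Y)$ is exactly the right starting point, and you correctly verify that in each of the three cases the needed powers $r^{\pm|W|}$ are recoverable in $O(1)$ time from the stored components $r^{|W|-1}$ and $r^{-|W|+1}$ together with the fixed $r$. The paper itself gives no proof of this statement (it is cited as a well-known fact), so your argument simply supplies the standard justification the authors omitted; there is nothing to object to.
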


\begin{claim}\label{claim:fingerprint_extraction}
Given a RLSLP $G$ of size $g$ and height $h$, there exists a data structure of size $O(g)$ that given a non-terminal $A$ and an integer $l$ allows to retrieve the Karp-Rabin fingerprints of the length-$l$ prefix and suffix of $\str{A^r}$ and $\rev{\str{A^r}}$ in time $f_h(l) = O(h + \log l)$.
\end{claim}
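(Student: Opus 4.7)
The plan is to precompute a small amount of information per non-terminal bottom-up, and then answer each query by descending the parse tree while maintaining a running fingerprint. For every non-terminal $A$ I would store $|\str{A}|$, the powers $r^{|\str{A}|}\bmod p$ and $r^{-|\str{A}|}\bmod p$, the fingerprint $\varphi(\str{A})$, and the analogous $\varphi(\rev{\str{A}})$; together these occupy $O(1)$ words per non-terminal, giving $O(g)$ total space. They are computed bottom-up: for a rule $A\to BC$ we combine the stored values via Fact~\ref{fact:fingerprints}, while for $A\to B^k$ we use the identity $\varphi(\str{B^k}) = \varphi(\str{B})\cdot\sum_{i=0}^{k-1}(r^{|\str{B}|})^i$, evaluated in $O(\log k)$ operations modulo $p$ by the division-free recursion $\sum_{i=0}^{k-1} z^i = (1+z^{\lfloor k/2\rfloor})\sum_{i=0}^{\lfloor k/2\rfloor-1} z^i + [k\text{ odd}]\,z^{k-1}$; the length and the precomputed power $r^{|\str{B^k}|}=(r^{|\str{B}|})^k$ are obtained similarly. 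The same procedure applies to $\rev{\str{A}}$, interpreting the rule as $\rev{\str{A}}=\rev{\str{C}}\rev{\str{B}}$ or $\rev{\str{A}}=\rev{\str{B}}^k$.

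To compute the fingerprint of the length-$l$ prefix of $\str{A}$, I would recursively descend the parse tree. If $A$ is a terminal or $l=|\str{A}|$, return the stored value. If $A\to BC$: when $l\le|\str{B}|$ recurse into $B$ on $l$; otherwise take the stored $\varphi(\str{B})$, recurse into $C$ for a prefix of length $l-|\str{B}|$, and concatenate the two via Fact~\ref{fact:fingerprints}. If $A\to B^k$: write $l=q|\str{B}|+r$ with $0\le r<|\str{B}|$, evaluate $\varphi(\str{B^q})$ in $O(\log q)$ time via the same geometric-series trick applied to the stored $\varphi(\str{B})$, then recurse into $B$ for the length-$r$ prefix of $\str{B}$ and concatenate. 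Suffixes of $\str{A}$ are handled symmetrically, and prefixes or suffixes of $\rev{\str{A}}$ by running the same recursion on the reversed view of each rule using the stored $\varphi(\rev{\cdot})$ values.

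The main obstacle is bounding the query time by $O(h+\log l)$, since a single run-length rule with factor $k=\Theta(N)$ already costs $\Theta(\log N)$ and the descent may pass through many such rules. The key step is a telescoping argument on the remaining length. Let $l_0\ge l_1\ge\dots$ be the prefix lengths along the descent; $l$ is weakly decreasing at every level. Binary-rule levels contribute $O(1)$ time each, for $O(h)$ in total. At a run-length level with factor $q_i$ we have $l_i=q_i|\str{B}|+r_i$ with $r_i<|\str{B}|$, hence $l_{i+1}=r_i<l_i/q_i$, so $\log(l_i/l_{i+1})>\log q_i$. Summing and telescoping over all levels yields $\sum_i \log q_i < \sum_i \log(l_i/l_{i+1}) = \log l_0 - \log l_{\text{final}} \le \log l$. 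Adding the two contributions gives the required $O(h+\log l)$ query time and completes the claim.
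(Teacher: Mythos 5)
Your proposal is correct and takes essentially the same approach as the paper: store lengths and fingerprints per non-terminal in $O(g)$ space, descend the parse tree combining fingerprints via Fact~\ref{fact:fingerprints}, use repeated doubling (equivalently your geometric-series recursion) to handle $A\to B^k$ in $O(1+\log q)$ time, and charge the logarithmic cost at run-length nodes to the drop in the remaining prefix length, telescoping to $O(h+\log l)$.
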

\begin{proof}
The claim for $\rev{\str{A^r}}$ follows for the claim for $\str{A^r}$ by considering the grammar $G_{rev}$, where the order of the non-terminals in each production is reversed. Below we focus on extracting the fingerprints for $\str{A^r}$, and we further restrict our attention to prefixes of $\str{A^r}$, the algorithm for suffixes being analogous. 

The data structure consists of two sets. The first set contains the lengths of the expansions of all non-terminals in the grammar, and the second one their fingerprints. 
 
By Fact~\ref{fact:fingerprints} and doubling, it suffices to show an algorithm for computing the fingerprint of the length-$l$ prefix of $\str{A}$. Assume that $A$ associated with a rule $A\rightarrow BC$. If the length of $\str{A}$ is smaller than $l$, we return error. Otherwise, to compute the fingerprint of the length-$l$ prefix of $\str{A}$, we consider two cases. If $l\leq |\str{B}|$, we recurse on $B$ to retrieve the fingerprint of the $l$-length prefix of $\str{B}$. Otherwise, we recurse on $C$ to retrieve the fingerprint of $\str{C}[\dots l-|\str{B}|)$ and then compute the fingerprint of the $l$-length prefix of $\str{A}$ from the fingerprints of $\str{B}$ and $\str{C}[\dots l-|\str{B}|)$ in constant time by Fact~\ref{fact:fingerprints}. 

For a non-terminal $A$ associated with a rule $A\rightarrow B^{r}$, we compute the fingerprint analogously. If the length of $\str{A}$ is smaller than $l$, we return error. Otherwise, let $q$ be such that $q \cdot |\str{B}| \leq l < (q+1) \cdot |\str{B}|$. We compute the fingerprint of $\str{B}^q$ from the fingerprint of $\str{B}$ by applying Fact~\ref{fact:fingerprints} $O(1+\log q)$ times, and the fingerprint of $\str{B}[\dots l-q \cdot |\str{B}|)$ recursively. We can then apply Fact~\ref{fact:fingerprints} to compute the fingerprint of the length-$l$ prefix of $\str{A}$ in constant time. Note that in this case, the length of the prefix decreases by a factor at least $q$. 

If we are in a terminal $A$, the calculation takes $O(1)$ time (the prefix must be equal to $A$ itself). 
 
In total, we spend $O(h+\log l)$ time as we recurse $O(h)$ times, and whenever we spend more than constant time in a symbol, we charge it on the decrease in the length. The fingerprints of length-$l$ suffixes are computed analogously.
\end{proof}

By substituting the bounds for $f_e(l)$ (Fact~\ref{fact:fingerprints}) and $f_h(l)$ (Claim~\ref{fact:prefsuf_extraction}) into Fact~\ref{fact:compact_trie}, we obtain the claim of the lemma.
\end{proof}

\section{Proofs omitted from Section~\ref{sec:occurrences}}
\label{app:occurrences}
\begin{claim}\label{claim:leftmost_rightmost}
Given a non-terminal $A$ of $G'$, we can find the leftmost and the rightmost occurrences of $P$ in $\str{A}$ and as a corollary in $\str{\head(A)}$ and $\str{\tail(A)}$ in $O(\log^{2} N\log\log N)$ time.  
\end{claim}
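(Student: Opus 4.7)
The plan is to describe one recursive procedure that returns the leftmost occurrence of $P$ in $\str{A}$; the rightmost case is completely symmetric. Starting from a node labeled $A$ in the parse tree of $G'$, the procedure descends strictly toward the leaves, guided at each step by the emptiness query of Claim~\ref{claim:emptiness} and the sorted list of relevant occurrences provided by Item~1 of Theorem~\ref{th:occurrences}.

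Consider the non-terminal $A$. If $A\to a$, we simply compare $a$ to $P$. If $A\to BC$, every occurrence of $P$ in $\str{A}$ is either (i) contained entirely in $\str{B}$, (ii) relevant to $A$ (i.e., straddles the $B$/$C$ boundary), or (iii) contained entirely in $\str{C}$, and these three zones are ordered by starting position. Hence we first run the emptiness query on $B$: if it succeeds we recurse on $B$. Otherwise we fetch the sorted list of relevant occurrences of $P$ in $\str{A}$ in $O(\log N)$ time, and return its smallest element if nonempty; else we recurse on $C$. If $A\to B^{k}$, Claim~\ref{claim:primary_occurrence} implies that every occurrence of $P$ in $\str{A}$ is either fully inside some copy of $B$ (and, by the periodicity of $\str{A}$, therefore fully inside \emph{every} copy) or of the form $q'+r'|\str{B}|$ for some relevant occurrence $q'$ and integer $0\le r'\le k-1$. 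So we again begin with an emptiness query on $B$: if it succeeds, the leftmost occurrence lies in the first copy of $B$ and we recurse on $B$. Otherwise the leftmost relevant occurrence of $A$ (corresponding to $r'=0$) is the answer.

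Each recursive step does $O(1)$ emptiness queries costing $O(\log N\log\log N)$ and one relevant-occurrence query costing $O(\log N)$, and then triggers at most one recursive call to a strictly lower node. Since $G'$ has height $h=O(\log N)$, the total running time is $O(\log^{2} N\log\log N)$, matching the claimed bound. For the corollary concerning $\str{\head(A)}$ and $\str{\tail(A)}$: when $A\to BC$, these are simply $\str{B}$ and $\str{C}$, so the procedure is invoked directly on $B$ or $C$; when $A\to B^{k}$, the non-terminal $\head(A)=B$ is handled directly, and the sequence $\tail(A)=B^{k-1}$ is treated as a virtual run-length production $\tilde A\to B^{k-1}$ for which the relevant occurrences coincide, up to a shift by $|\str{B}|$, with those of $A$, so at most constant overhead is added.

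The main obstacle will be the run-length case: one has to verify, by combining Claim~\ref{claim:primary_occurrence} with Corollary~\ref{cor:arithmetic_progression}, that once $\str{B}$ contains no occurrence of $P$ the leftmost occurrence in $\str{A}$ is necessarily the leftmost relevant occurrence (rather than, say, a periodic shift that happens to start earlier), and dually that the rightmost occurrence in $\str{A}$ is obtained by shifting a suitably chosen relevant occurrence as far to the right as permitted by the length of $\str{A}$. Once these two observations are established, the rest of the recursion is routine.
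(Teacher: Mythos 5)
Your proposal is correct and takes essentially the same approach as the paper: descend the parse tree from a node labeled~$A$, at each step using an emptiness query (Claim~\ref{claim:emptiness}) on the head to decide whether to recurse into it, otherwise scanning the $O(\log N)$ relevant occurrences reported in sorted order, and otherwise recursing into the tail; since the height is $O(\log N)$ and each level costs $O(\log N\log\log N)$, the $O(\log^2 N\log\log N)$ bound follows.

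A couple of small remarks. The paper additionally begins with an emptiness check on $\str{A}$ itself to allow early termination when $P$ does not occur; you omit this, which is harmless for the stated bound but is a clean optimization. The paper phrases the recursion uniformly via $\head(A)$ and $\tail(A)$, whereas you split $A\to BC$ and $A\to B^k$ explicitly; the logic is the same, and in particular your observation that when $\str{B}$ has no occurrence, a leftmost occurrence in $\str{B^k}$ (if any) must be a relevant occurrence is exactly the argument the paper uses to conclude $\str{\tail(A)}=\str{B^{k-1}}$ cannot help. Where you write ``symmetric'' for the rightmost case, note that the run-length production is \emph{not} literally symmetric: one must take the maximum of (i) the rightmost occurrence in the last copy of $B$, shifted by $(k-1)|\str{B}|$, and (ii) the relevant occurrences each shifted by the largest admissible multiple of $|\str{B}|$ (not just the rightmost relevant occurrence, since the admissible shift depends on the starting position). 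You flag this correctly as the point needing verification; the paper is equally terse here in the proof of the claim itself and only spells out this case distinction in the proof of Lemma~\ref{lm:predecessor}. Finally, a tiny slip: when treating $\tail(A)=B^{k-1}$ as a virtual production, the relevant occurrences coincide with those of $A$ with \emph{no} shift (both straddle the boundary at offset $|\str{B}|$ from the left end of the expansion); the ``shift by $|\str{B}|$'' you mention would only arise if you insisted on viewing $\str{B^{k-1}}$ as the suffix of $\str{A}$, which is unnecessary.
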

\begin{proof}
We explain how to find the leftmost occurrence of $P$ in $\str{A}$, the rightmost one can be found analogously. We first check whether $\str{A}$ contains an occurrence of $P$ via Claim~\ref{claim:emptiness} in $O(\log N\log \log N)$ time. If it does not, we can stop immediately. Below we assume that there is an occurrence of $P$ in  $\str{A}$. Next, we check whether $\str{\head(A)}$ contains an occurrence of $P$ via Claim~\ref{claim:emptiness} in $O(\log N\log \log N)$ time. If it does, the leftmost occurrence of $P$ in $\str{A}$ is the leftmost occurrence of $P$ in $\str{\head(A)}$ and we can find it by recursing on $\head(A)$. If $\str{\head(A)}$ does not contain an occurrence of $P$, but $\str{A}$ contains relevant occurrences of $P$, then the leftmost occurrence of $P$ in $\str{A}$ is the leftmost relevant occurrence of $P$ in $\str{A}$ and we can find it in $O(|\splits(G',P)|) = O(\log N)$ time. Finally, if $P$ neither occurs in $\str{\head(A)}$ nor has relevant occurrences in $\str{A}$, then the leftmost occurrence of $P$ in $\str{A}$ is the leftmost occurrence of $P$ in $\str{\tail(A)}$. If $\tail(A)$ is a non-terminal $C$, we recurse on $C$ to find it. If $\tail(A)=B^{r-1}$ for a non-terminal $B$, $\str{\tail(A)}$ cannot contain an occurrence of $P$ because $\str{B}$ does not contain $P$ and there are no relevant occurrences in $A$. We recurse down at most $h = O(\log N)$ levels, and spend $O(\log N\log \log N)$ time per level. The claim follows.
\end{proof}

\begin{lemma}\label{lm:predecessor}
Let $A$ be a non-terminal of $G'$. For any position $p$, we can find the rightmost occurrence $q \le p$ of $P$ in $\str{A}$ and the leftmost occurrence $q'\geq p$ of $P$ in $\str{A}$ in $O(\log^{3} N\log \log N)$ time.  
\end{lemma}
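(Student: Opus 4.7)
The approach mirrors that of Claim~\ref{claim:leftmost_rightmost}: I will recurse down the parse tree of $G'$ starting at a node labeled $A$, making at most one recursive predecessor (respectively successor) call on a child of $A$ at each step, and relying on parts (1)--(3) of Theorem~\ref{th:occurrences} as subroutines. I describe only the predecessor query $q \le p$; the successor case is symmetric.

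If $A$ is a terminal, the answer is immediate. Suppose $A$ is associated with a production $A \rightarrow BC$, and set $\ell = |\str{\head(A)}|$. Every occurrence of $P$ in $\str{A}$ is (i) an occurrence fully contained in $\str{B}$ (position $< \ell$), (ii) a relevant occurrence in $\str{A}$ at position $\ell - s$ for some $s \in \lsplits(A,P)$, or (iii) an occurrence fully contained in $\str{C}$, shifted by $\ell$. If $p \ge \ell$, I first recursively compute the predecessor $q_C$ of $P$ in $\str{C}$ at position $p - \ell$; if it exists, $q_C + \ell$ is a candidate. Otherwise I compute the rightmost relevant occurrence in $\str{A}$ by scanning the $O(\log N)$ elements of $\splits(G',P)$ and keeping the smallest valid $s$ (this maximizes $\ell - s$); if still absent, I return the rightmost occurrence of $P$ in $\str{B}$ via part (3). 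If $p < \ell$, then type (iii) candidates are all $\ge \ell > p$ and can be ignored; I scan the splits for the smallest valid $s$ with $s \ge \ell - p$ (giving the rightmost relevant occurrence at $\le p$), and recurse on $B$ at position $p$. In both subcases I issue at most one recursive predecessor call on a child, plus $O(1)$ leftmost/rightmost/emptiness queries and an $O(\log N)$ scan of the splits.

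For a production $A \rightarrow B^k$, set $\ell = |\str{B}|$ and write $p = j\ell + r$ with $0 \le r < \ell$. By Claim~\ref{claim:primary_occurrence}, every occurrence of $P$ in $\str{A}$ is either of type (i) fully inside a copy of $\str{B}$, at position $i\ell + q$ for some occurrence $q$ of $P$ in $\str{B}$, or of type (ii) spanning a boundary between consecutive copies, at position $i\ell + (\ell - s)$ for some $s \in \lsplits(A,P)$ and $0 \le i \le k-2$. I then collect three sources of candidates for $q \le p$: for type (ii), directly compute the maximum $i\ell + (\ell - s) \le p$ over each valid split $s$ in $O(\log N)$ total time; for type (i) inside block $j$, recursively query the predecessor of $P$ in $\str{B}$ at position $\le r$, shifted by $j\ell$; and, if $j \ge 1$, for type (i) in block $j-1$, use the rightmost occurrence of $P$ in $\str{B}$ (part (3)) shifted by $(j-1)\ell$. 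The predecessor in $\str{A}$ is the maximum of these candidates. Again, this incurs only one recursive predecessor call on a child.

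Combining the two cases, the total number of recursive predecessor calls is bounded by the height $h = O(\log N)$ of $G'$. At each level I perform $O(1)$ leftmost/rightmost queries costing $O(\log^2 N \log \log N)$ each by Claim~\ref{claim:leftmost_rightmost}, an $O(\log N \log \log N)$ emptiness query where needed, and an $O(\log N)$ scan over splits; the per-level cost is $O(\log^2 N \log \log N)$, and the overall bound is $O(\log^3 N \log \log N)$. The main obstacle I anticipate is the run-length case $A \rightarrow B^k$, where I must argue that despite the $k$ shifted copies of every relevant occurrence and the constraint $p = j\ell + r$, it suffices to issue a single recursive call on $B$ (and not two), so that the recursion tree remains a path of depth $h$ rather than branching.
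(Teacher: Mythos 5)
Your approach is essentially the same as the paper's: recurse down the parse tree from a node labeled $A$, at each level spend $O(\log^2 N \log\log N)$ on leftmost/rightmost queries via Claim~\ref{claim:leftmost_rightmost} and $O(\log N)$ on the splits, and make at most one recursive predecessor/successor call on a child so that the recursion is a single path of depth $h = O(\log N)$. The obstacle you flag at the end is resolved exactly as you propose, and exactly as in the paper: in the run-length case the candidate coming from block $j-1$ is supplied by a \emph{rightmost-occurrence} query (part (3) of Theorem~\ref{th:occurrences}), not by a second recursive predecessor call, so the recursion does not branch.

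Two small imprecisions worth correcting. First, in the run-length case you describe the boundary-crossing candidates as $i\ell + (\ell - s)$ for $0 \le i \le k-2$; the correct constraint is that the shift must also keep the occurrence \emph{inside} $\str{A}$, i.e.\ $i\ell + (\ell - s) + |P| - 1 \le |\str{A}| - 1$, which can force $i$ to be strictly less than $k-2$ when $|P|$ is large relative to $\ell = |\str{B}|$ (the paper states this as ``ends before $|\str{A'}|$''); this is still an $O(1)$ computation per split. Second, in the $A\to BC$ case with $p < \ell$, you correctly note that you must combine the rightmost relevant occurrence with $s \ge \ell - p$ and the recursion on $B$; the paper's own write-up of this subcase only says ``recurse on $B'$'', so your version is actually the more careful one, since a relevant occurrence at position $\ell - s \le p$ (which lies outside $\str{B}$) can dominate.
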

\begin{proof}
First we describe how to locate $q$. Consider a node $u$ of the parse tree of $G'$ labeled by $A$. The algorithm starts at $u$ and recurses down. Let $A'$ be the label of the current node. It computes the leftmost and rightmost occurrences in $\str{A'}$, $\str{\head(A')}$ and $\str{\tail(A')}$ as well as all relevant occurrences via Claim \ref{claim:leftmost_rightmost}. If the leftmost occurrence of $P$ in $\str{A'}$ is larger than $p$, the search result is empty. Otherwise, consider two cases. 

\begin{enumerate}
\item $A'$ is associated with a rule $A' \rightarrow B'C'$, i.e. $\head(A') = B'$, $\tail(A') = C'$. 
\begin{enumerate}
\item If $p \le |\str{B'}|$, recurse on $B'$. 
\item Assume now that $p > |\str{B'}|$. If the leftmost  occurrence of $P$ in $\str{C'}$ is smaller than $p$, recurse on $C'$. Otherwise, return the rightmost relevant occurrence of $P$ in $\str{A'}$ if it exists else the rightmost occurrence of $P$ in $\str{B'}$. 
\end{enumerate}
\item $A'$ is associated with a rule $A \rightarrow (B')^r$, i.e. $\head(A') = B'$, $\tail(A') = (B')^{r-1}$.  Let an integer $k$ be such that $(k-1) \cdot |\str{B'}|+1 \le p \le k \cdot |\str{B'}|$. The desired occurrence of $P$ is the rightmost one of the following ones:
\begin{enumerate}
	\item The rightmost occurrence $q \le p$ of $P$ which crosses the border between two copies of $\str{B'}$. To compute $q$, we compute all relevant occurrences of $P$ in $\str{A'}$ and then shift each of them by the maximal possible shift $r' \cdot |\str{B'}|$, where $r'$ is an integer, which guarantees that it starts before $p$ and ends before $|\str{A'}|$ and take the rightmost of the computed occurrences to obtain $q$.
	\item The rightmost occurrence $q$ of $P$ such that for some integer $k'$, we have $(k'-1) \cdot |\str{B'}| \le q \le q+|P|-1 \le k' \cdot |\str{B'}|$ (i.e. the occurrence fully belongs to some copy of $\str{B'}$). In this case, $q$ is either the rightmost occurrence of $P$ in the $(k-1)$-th copy of $\str{B'}$, or the rightmost occurrence of $P$ in the $k$-th copy of $\str{B'}$ that is smaller than $p$. In the second case, we compute $q$ by recursing on $B'$.
	\end{enumerate}
\end{enumerate}
 We recurse down at most $h$ levels. On each level we spend $O(\log^{2} N\log\log N)$ time to compute the leftmost, the rightmost, and relevant occurrences and respective shifts for a constant number of non-terminals via Claim~\ref{claim:leftmost_rightmost}. Therefore, in total we spend $O(h \cdot \log^{2} N\log\log N) = O(\log^{3} N\log\log N)$ time.

Locating $q'$ is very similar and differs only in small technicalities. The algorithm starts at the node $u$ and recurses down. Let $A'$ be the label of the current node. We compute the leftmost and rightmost occurrences in $\str{A'}$, $\str{\head(A')}$ and $\str{\tail(A')}$ as well as all relevant occurrences via Claim \ref{claim:leftmost_rightmost}. If the rightmost occurrence of $P$ in $\str{A'}$ is smaller than $p$, the search result is empty. Otherwise, consider two cases. 

\begin{enumerate}
\item $A'$ is associated with a rule $A' \rightarrow B'C'$, i.e. $\head(A') = B'$, $\tail(A') = C'$. 
\begin{enumerate}
\item If $p > |\str{B'}|$, recurse on $C'$. 
\item Assume now that $p \leq |\str{B'}|$. If the rightmost  occurrence of $P$ in $\str{B'}$ is larger than $p$, recurse on $B'$. Otherwise, return the leftmost relevant occurrence $q$ satisfying $q\geq p$, if it exists, and otherwise the leftmost occurrence of $P$ in $\str{C'}$. 
\end{enumerate}
\item $A'$ is associated with a rule $A \rightarrow (B')^r$, i.e. $\head(A') = B'$, $\tail(A') = (B')^{r-1}$.  Let an integer $k$ be such that $(k-1) \cdot |\str{B'}|+1 \le p \le k \cdot |\str{B'}|$. The desired occurrence of $P$ is the leftmost one of the following ones:
\begin{enumerate}
	\item The leftmost occurrence $q' \geq p$ of $P$ which crosses the border between two copies of $\str{B'}$. To compute $q'$, we compute all relevant occurrences of $P$ in $\str{A'}$ and then shift each of them by the minimal possible shift $r' \cdot |\str{B'}|$, where $r'$ is an integer, which guarantees that it starts after $p$ and ends before $|\str{A'}|$ (if it exists) and take the leftmost of the computed occurrences to obtain $q$.
	\item The leftmost occurrence $q'$ of $P$ such that for some integer $k'$, we have $(k'-1) \cdot |\str{B'}| \le q' \le q'+|P|-1 \le k' \cdot |\str{B'}|$ (i.e. the occurrence fully belongs to some copy of $\str{B'}$). In this case, $q'$ is either the leftmost occurrence of $P$ in the $(k+1)$-st copy of $\str{B'}$, or the leftmost occurrence of $P$ in the $k$-th copy of $\str{B'}$ that is larger than $p$. In the second case, we compute $q'$ by recursing on $B'$.
	\end{enumerate}
\end{enumerate}
The time complexities are the same as for computing $q$.
\end{proof}

\section{Proofs omitted from Section~\ref{sec:close}}
\label{app:close}

\computingrelevant*
\begin{proof}
We preprocess $P_1, P_2$ in $O(m \log N + \log^2 N)$ time as explained in Theorem~\ref{th:occurrences}. Upon receiving a non-terminal $A$, we compute the leftmost and the rightmost occurrences of $P_1, P_2$ in $\str{\head(A)}$ and $\str{\tail(A)}$, as well as a set $\Pi_1$ of all relevant occurrences of $P_1$ in $\str{A}$ and a set $\Pi_2$ of all relevant occurrences of $P_2$ in $\str{A}$ via Claim~\ref{claim:leftmost_rightmost}.
We will compute all relevant co-occurrences in $\str{A}$, selecting those of them that are $b$-close is then trivial. 
As $q_1 \le q_2$ by definition, each relevant co-occurrence $(q_1,q_2)$ of $P_1,P_2$ in $\str{A}$ falls under one of the following categories:

\begin{enumerate}
\item $q_1$ is a relevant occurrence of $P_1$ in $\str{A}$ and $q_2$ is a relevant occurrence of $P_2$ in $\str{A}$ (i.e. $q_1 \in \Pi_1, q_2 \in \Pi_2$). To check whether a pair $q_1 \in \Pi_1, q_2 \in \Pi_2$ forms a co-occurrence of $P_1, P_2$ in $\str{A}$, we must check whether there is an occurrence $q$ of either $P_1$ or $P_2$ between $q_1$ and $q_2$. The occurrence $q$ can only be the rightmost occurrence $r_q$ of $P_2$ in $\str{\head(A)}$, the leftmost occurrence $l_q$ of $P_1$ in $\str{\tail(A)}$, or an occurrence in $\Pi_1 \cup \Pi_2$. Consequently, we can find all co-occurrences in this category by merging two (sorted) sets: $\Pi_1 \cup \{l_q\}$ and $\{r_q\} \cup \Pi_2$, which can be done in $O(2 + |\Pi_1 \cup \Pi_2|)$ time.

\item $1 \le q_1  \le q_1+|P_1|-1 \le |\str{\head(A)}|$ and  $|\str{\head(A)}| < q_2 \le q_2+|P_2|-1$. In this case, $q_1$ must be the rightmost occurrence of $P_1$ in $\str{\head(A)}$ and $q_2$ the leftmost occurrence in $\str{\tail(A)}$, $q_1 \le q_2$, and there must be no occurrence $q \in \Pi_1 \cup \Pi_2$  such that $q_1 \le q \le q_2$. Therefore, if there is a co-occurrence in this category, we can retrieve it in $O(|\Pi_1 \cup \Pi_2|)$ time.

\item $q_1$ is a relevant occurrence of $P_1$ in $\str{A}$ (i.e. $q_1 \in \Pi_1$) and $|\str{\head(A)}| < q_2 \le q_2+|P_2|-1$. In this case, $q_1$ must be the rightmost occurrence in $\Pi_1$ and $q_2$ the leftmost occurrence of $P_2$ in $\str{\tail(A)}$, and there should be no occurrence from $\Pi_2$ between $q_1$ and $q_2$. Therefore, if there is a co-occurrence in this category, we can find it in $O(|\Pi_1 \cup \Pi_2|)$ time.

\item $q_1 \le q_1+|P_1|-1 \le |\str{\head(A)}|$ and $q_2$ is a relevant occurrence of $P_2$ in $\str{A}$ (i.e. $q_2 \in \Pi_2$). First, consider the leftmost occurrence in $q_2 \in \Pi_2$. We find the rightmost occurrence $q_1 \le q_2$  of $P_1$ in $\str{A}$ via a predecessor query. The pair $(q_1,q_2)$  is a co-occurrence iff the rightmost occurrence of $P_2$ in $\str{\head(A)}$ is smaller than $q_1$, which can be checked in constant time. Second, we consider the remaining occurrences in $\Pi_2$. Let $q_2'$ be the leftmost one. We begin by computing the preceding occurrence $q_1'$ of $P_1$ via a predecessor query and if $q_2 \le q_1'$, output the resulting co-occurrence. If $\Pi_2 = \{q_2, q_2'\}$, we are done. Otherwise, by Corollary~\ref{cor:arithmetic_progression}, the occurrences in $\Pi_2 \setminus \{q_2\}$ form an arithmetic progression with difference equal to the period of $P_2$ (as all of them contain the position $|\str{\head(A)}|$). Furthermore, as $P_1$ does not contain $P_2$, the occurrence of $P_1$ preceding $q_2'$ belongs to the periodic region formed by the relevant occurrences of $P_2$. Therefore, all the remaining co-occurrences can be obtained from the co-occurrence for $q_2'$ by shifting them by the period. In total, this step takes $O(|\Pi_2| + \log^{3} N\log\log N)$ time.
\end{enumerate}
\end{proof}

\begin{lemma}
\label{lem:close_co_occurr}
Assume that $P_2$ is not a substring of $P_1$. One can compute all $b$-close co-occurrences of $P_1, P_2$  in $S$ in time $O(m + (1+\occ) \cdot \log^{4} N \log\log N)$. 
\end{lemma}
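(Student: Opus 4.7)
The plan is to combine the three ingredients already developed: the set $\mathcal{N}'$ of non-terminals produced by Lemma~\ref{lm:non-term_close_co_occ}, the per-non-terminal relevant co-occurrence enumeration of Lemma~\ref{lem:relevant_co_occurr_A}, and a pruned-parse-tree traversal in the style of Christiansen et~al.~\cite{talg/ChristiansenEKN21} to turn relevant co-occurrences into actual positions in $S$.

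First, I would invoke Lemma~\ref{lm:non-term_close_co_occ} to obtain, in $O(m + (1+\occ)\log^3 N)$ time, a set $\mathcal{N}' \supseteq \mathcal{N}$ of size $O(\occ \log N)$ such that every non-terminal whose expansion contains a $b$-close relevant co-occurrence of $P_1,P_2$ is in $\mathcal{N}'$. As a by-product of that call the query $P_1,P_2$ is already preprocessed for Theorem~\ref{th:occurrences} (the $O(m\log N + \log^2 N)$ preprocessing from Lemma~\ref{lem:relevant_co_occurr_A} is of the same form and can be shared). For every $A \in \mathcal{N}'$ I then call Lemma~\ref{lem:relevant_co_occurr_A} and keep only those relevant co-occurrences $(q_1,q_2)$ with $q_2-q_1 \in [0,b]$; this takes $O(\log^3 N \log\log N)$ per non-terminal, so $O(\occ \log^4 N \log\log N)$ in total.

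By Claim~\ref{claim:relevant_cons_occ}, every $b$-close co-occurrence $(Q_1,Q_2)$ of $P_1,P_2$ in $S$ corresponds uniquely to a node $u$ of the parse tree of $G'$ whose label $A$ has a relevant co-occurrence $(q_1,q_2)$ in $\str{A}$, related by $Q_j = q_j + \off(u)$ in the $A\to BC$ case and by an additional shift $k'|\str{B}|$ in the $A\to B^k$ case. Thus, to report all answers, for each $A\in\mathcal{N}$ I need to enumerate every node $u$ of the parse tree labeled with $A$ and emit the shifted copy (or copies, for run-length productions) of each retained relevant co-occurrence. The standard pruned parse tree of Christiansen et~al.\ stores, for each non-terminal $A$, a traversal structure that enumerates in amortized $O(1)$ time per node all occurrences of $A$ in the parse tree together with their offsets $\off(u)$; I would reuse exactly that structure. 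Multiplicities from $A\to B^k$ productions are handled by outputting, for each relevant co-occurrence in $A$, a single new position (in the $A\to BC$ case) or an arithmetic progression with $k$ terms spaced by $|\str B|$ (in the $A\to B^k$ case), which the traversal already supports in constant time per emitted position.

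The correctness follows because Claim~\ref{claim:relevant_cons_occ} gives a bijection between $b$-close co-occurrences in $S$ and pairs of (a node in the parse tree, a relevant co-occurrence in the labeling non-terminal), and each such pair is produced exactly once by the procedure. For the time bound, the pruned-parse-tree step emits exactly $\occ$ co-occurrences in amortized $O(1)$ time each, so it costs $O(\occ)$; combining with the two previous stages we obtain $O(m + (1+\occ)\log^4 N \log\log N)$. The main technical obstacle in the plan is checking that the pruned-tree traversal of Christiansen et~al., which was originally designed to list occurrences of a single non-terminal, extends cleanly to our situation where we want to list only the nodes labeled by non-terminals in the pre-computed set $\mathcal{N}'$, accounting properly for the $B^k$ productions without losing the amortized constant-time-per-output guarantee; this is exactly where I would reuse the analysis in~\cite{talg/ChristiansenEKN21} and defer the details (as the authors do) to an appendix.
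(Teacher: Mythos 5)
Your plan matches the paper's own high-level structure: invoke Lemma~\ref{lm:non-term_close_co_occ} to obtain $\mathcal{N}'$ in $O(m+(1+\occ)\log^3 N)$ time, apply Lemma~\ref{lem:relevant_co_occurr_A} to each non-terminal in $\mathcal{N}'$ to extract its $b$-close relevant co-occurrences in $O(\occ\log^4 N\log\log N)$ total time, and then propagate each relevant co-occurrence to all its copies in the parse tree via a Christiansen-et-al.-style pruned-tree traversal in amortized $O(1)$ per reported co-occurrence. The time-accounting and the appeal to Claim~\ref{claim:relevant_cons_occ} for correctness are both sound, and the implicit assertion that each (node, relevant co-occurrence) pair shifts to a genuine co-occurrence in $S$ does hold (using that $P_2$ is not a substring of $P_1$, one can show neither pattern can ``leak'' past $\str{A}$ and spoil consecutiveness), so the bijection you rely on is correct even though you don't verify it.

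The gap is that the substance of this lemma is precisely the pruned-tree traversal you defer. The paper's proof actually specifies the data structure: prune the parse tree so each non-terminal labels one internal node, collapse the last $k-1$ children of a $B^k$-node into a single leaf labeled $B^{k-1}$, store $\anc(\cdot)$ and $\nextnode(\cdot)$ links; then run a priority-queue traversal with four explicit steps (reporting, next-node, sibling --- which expands a co-occurrence set by the arithmetic progression $\cup_{0\le i\le k} W+i|\str{B}|$ when the sibling is $B^k$, and duplicate-ancestor) and prove, via the observation that two distinct report-sequences ending in a next-node step would force one endpoint's subtree to be pruned, that each co-occurrence is output exactly once. You flag this adaptation from ``occurrences of one pattern'' to ``co-occurrences propagated from a pre-computed set $\mathcal{N}'$ with run-length productions'' as your ``main technical obstacle'' and defer it; but this lemma \emph{is} the appendix where those details are meant to live, so deferring them leaves the proof unproved. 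In particular, it is not obvious without the explicit sibling step and $B^{k-1}$-leaf construction that the run-length productions are handled without either missing shifted copies or double-reporting, and the no-duplicate argument is not something you get for free from Christiansen et~al., whose traversal was designed for a single starting non-terminal per query rather than the $O(\occ\log N)$-size set $\mathcal{N}'$.
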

\begin{proof}
During the preprocessing, we prune the parse tree: First, for each non-terminal $B$, all but the first node labeled by $B$ in the preorder is converted into a leaf and its subtree is pruned. For each node $v$ labeled by a non-terminal $B$, we store $\anc(v)$, the nearest ancestor $u$ of $v$ labeled by $A$ such that $u$ is the root or $A$ labels more than one node in the tree. Second, for every node labeled by a non-terminal $A$ associated with a rule $A \rightarrow B^k$, we replace its $k-1$ rightmost children with a leaf labeled by $B^{k-1}$. We call the resulting tree \emph{the pruned parse tree} and for each node $v$ labeled by a non-terminal $B$ store $\nextnode(v)$, the next node labeled by $B$ in preorder, if there is one. As every non-terminal labels at most one internal node of the pruned parse tree and every node has at most two children, it occupies $O(g')$ space.

When the algorithm of Lemma~\ref{lm:non-term_close_co_occ} outputs $A \in \mathcal{N}'$, we compute all relevant co-occurrences $(q_1,q_2)$ in $\str{A}$ in time $O(\log^{3} N\log \log N)$ using Lemma~\ref{lem:relevant_co_occurr_A} select those which satisfy $q_2-q_1 \leq b$.

Fix a $b$-close relevant co-occurrence $(q_1,q_2)$ in $\str{A}$. If $A$ is associated with a rule $A \rightarrow BC$, construct a set $\occ(A) := \{(q_1,q_2)\}$, and otherwise if $A$ is associated with a rule $A \rightarrow B^k$,
$$\occ(A) := \{(q_1+i \cdot |\str{B}|, q_2+i \cdot |\str{B}|) : 0 \le i \le \lfloor (|\str{A}|-q_2-|P_2|+1)/|\str{B}| \rfloor\}$$
Suppose that $A$ labels nodes $v_1, v_2, \ldots, v_k$ of the unpruned parse tree of $G'$. If $W$ is a set of co-occurrences, denote for brevity $W+\delta = \{(q_1+\delta,q_2+\delta) : (q_1,q_2) \in W\}$. Below we show an algorithm that generates a set $\mathcal{S} = \cup_i \occ(A) + \off(v_i)$ that contains all secondary $b$-close co-occurrences due to $(q_1,q_2)$.  

We traverse the pruned parse tree, while maintaining a priority queue. The queue is initialized to contain the first node in the preorder labeled by $A$ together with $\occ(A)$. Until the priority queue is empty, pop a node $v$ and a set $W$ of co-occurrences of $P_1,P_2$ in the expansion of its label, and perform the following steps:
\begin{itemize}
\item\label{step:reporting} \underline{Reporting step:}  If $v$ is the root, report $W$;
\item \label{step:next} \underline{Next node step:} If $\nextnode(v)$ is defined, push $(\nextnode(v),W+\off(\nextnode(v))-\off(v))$;
\item \label{step:siblings} \underline {Sibling step:}  If $v$ is labeled by a non-terminal $B$ and its sibling by $B^k$, for some integer $k$, then $W := \cup_{0 \le i \le k} W+i \cdot |\str{B}|$
\item \label{step:anc} \underline{Duplicate ancestor step:} Push to the queue $(\anc(v),W+\off(\anc(v))-\off(v))$. 
\end{itemize}

By construction and as every node is connected with the root by a path of $\anc$ links, the algorithm generates all co-occurrences in $\mathcal{S}$. Let us show that it reports every co-occurrence in $\mathcal{S}$ at most once. Assume that a co-occurrence $(q_1',q_2') \in \mathcal{S}$ was reported twice by two different sequences of steps. Let $u_1,u_2$ be the nodes which were added to the queue after the last next node step in the sequences. We claim that either $u_1$ is an ancestor of $u_2$, or vice versa (otherwise, the subtrees of $u_1$ and $u_2$ are disjoint, and hence the sets of co-occurrences that we created for $u_1,u_2$ have an empty intersection). Assume w.l.o.g. that $u_1$ is an ancestor of $u_2$. However, as we arrived at $u_1$ by a $\nextnode$ link, then by definition the subtree of $u_1$ must be pruned, a contradiction.  

The time complexity follows: The algorithm of Lemma~\ref{lm:non-term_close_co_occ} takes $O(m +(1+\occ) \cdot \log^3 N)$ time; applying Lemma~\ref{lem:relevant_co_occurr_A} to every non-terminal in $\mathcal{N}'$ $O(\occ \cdot \log^{4} N\log\log N))$; and maintaining the queue and reporting the co-occurrences takes $O(\occ)$ time as at every step we can charge the time needed to update the queue on newly created occurrences, and each occurrence is reported ($\anc(\cdot)$ is always defined). 
\end{proof}

\end{document}